\documentclass[ecta,final]{econsocart}
\usepackage{bm}
\usepackage{booktabs}
\usepackage{array}
\usepackage{multirow}
\usepackage{float}
\usepackage{algorithm}
\usepackage{algorithmic}
\usepackage{enumitem}
\RequirePackage[colorlinks,citecolor=blue,linkcolor=blue,urlcolor=blue]{hyperref}
\RequirePackage[ref]{backref}
\renewcommand*{\backref}[1]{}
\renewcommand*{\backrefalt}[4]{}
\RequirePackage[authoryear]{natbib}
\startlocaldefs
\theoremstyle{plain}
\newtheorem{theorem}{Theorem}

\newtheorem{proposition}[theorem]{Proposition}

\theoremstyle{definition}
\newtheorem{definition}[theorem]{Definition}
\newtheorem{assumption}{Assumption}
\newtheorem{remark}[theorem]{Remark}
\renewcommand{\qedsymbol}{$\blacksquare$}
\newcommand{\kssu}[1]{k^{\mathrm{ss},u}_{#1}}
\newcommand{\kssc}[1]{k^{\mathrm{ss},c}_{#1}}

\newcommand{\E}{\mathbb{E}}

\newcommand{\dd}{\,\mathrm{d}}

\newcommand{\half}{\tfrac{1}{2}}

\newcommand{\kstar}{k^{*}}
\newcommand{\kss}{k^{\mathrm{ss}}}

\newcommand{\VL}{V_{\!L}}
\newcommand{\VW}{V_{\!W}}
\newcommand{\VH}{V_{\!H}}

\def\paragraph#1{\medskip\noindent\textit{#1}\enspace}
\endlocaldefs

\begin{document}
	
	\begin{frontmatter}
		
		\title{Endogenous Poverty Traps in Continuous Time: 
			A Signaling Approach}
		\runtitle{Endogenous Poverty Traps}
		
	\begin{aug}
		\author{\fnms{Massimo} \snm{Giannini}%
			\thanksref{t1}%
			\ead[label=e1]{massimo.giannini@uniroma2.it}}
		\address{Department of Enterprise Engineering, 
			University of Rome Tor Vergata}
		\runauthor{Giannini}
		
		\thankstext{t1}{This work builds upon a strand of research 
			initiated in the early 2000s when I was a visiting PhD 
			student at Nuffield College, University of Oxford. I was 
			encouraged and supported to pursue research in the field 
			of inequality by my supervisor and mentor, Sir Anthony 
			Atkinson. Although he is no longer with us, this work is 
			the fruit of our extensive discussions and his teachings. 
			Thank you, Tony, wherever you are.}
	\end{aug}
		
	\begin{abstract}
		This paper embeds a signaling friction into the 
		continuous-time heterogeneous agent framework. A 
		continuum of producers operate Cobb-Douglas technologies 
		with regime-specific productivity 
		$A_j \in \{A_L, A_H\}$. Stochastic arrival of 
		signaling opportunities and skill obsolescence risk 
		generate an optimal stopping problem---when to pay a 
		lump-sum cost $\phi$ to upgrade productivity---whose 
		solution yields an endogenous Skiba threshold $k^*$. 
		Diminishing returns create a stable interior attractor 
		in each regime; the signaling cost separates the two 
		basins, producing a poverty trap that is an interior 
		optimum rather than a corner solution. The stationary 
		distribution exhibits Twin Peaks, but its decomposition 
		by regime reveals that agents in three distinct 
		states---structurally trapped, waiting to signal, and 
		successfully upgraded---coexist at the same wealth 
		levels with different consumption behavior and mobility 
		prospects. Capital alone is therefore insufficient to 
		identify an agent's position in the polarization 
		dynamics. We show that the joint observation of a low 
		marginal propensity to consume out of wealth and a high 
		average propensity to consume---a combination invisible 
		to standard Euler equation tests---is the diagnostic 
		signature of the structural trap, distinguishing it 
		from both liquidity constraints and transitory shocks.
	\end{abstract}
		
		\begin{keyword}
			\kwd{Poverty traps}
			\kwd{heterogeneous agents}
			\kwd{signaling frictions}
			\kwd{Twin Peaks}
			\kwd{optimal stopping}
			\kwd{marginal propensity to consume}
			\kwd{metastability}
			\kwd{viscosity solutions}
		\end{keyword}
		
		\begin{keyword}[class=JEL]
			\kwd{C61}
			\kwd{D31}
			\kwd{D82}
			\kwd{E21}
			\kwd{O15}
		\end{keyword}
		
	\end{frontmatter}
	
\section{Introduction}

The persistence of wealth polarization remains one of the 
most resilient puzzles in macroeconomics. Despite the 
convergence predictions of the neoclassical growth model, 
empirical wealth distributions exhibit a stubborn 
bimodality---often characterized as ``Twin Peaks''---where 
a significant portion of the population clusters near a 
low-wealth attractor while another converges to a 
high-wealth club 
\citep{quah1996twin, Quah1997}.

Standard Heterogeneous Agent models, following 
\cite{aiyagari1994} and \cite{bewley1986}, generate wealth 
dispersion through idiosyncratic income shocks and 
precautionary savings. In these frameworks, inequality is 
essentially a phenomenon of luck: given enough time and 
positive transition probabilities, the economy is ergodic 
and social mobility is inevitable. This narrative struggles 
to explain the structural entrenchment of poverty observed 
in dual economies, where barriers to mobility appear not 
merely stochastic but systemic.

This paper provides a structural theory of endogenous 
inequality by embedding a signaling friction into the 
continuous-time Heterogeneous Agent framework of 
\cite{Achdou2022}. A continuum of producers each operate 
a Cobb-Douglas technology $y_i = A_j k_i^\alpha$ with 
diminishing returns ($\alpha < 1$) and regime-specific 
productivity $A_j \in \{A_L, A_H\}$. The low state $A_L$ 
represents the baseline technology available to those who 
lack a recognized credential or occupy routine occupations; 
the high state $A_H$ represents the frontier technology 
accessible to those who have successfully invested in a 
skill upgrade.

Transitions between regimes are stochastic and capture the 
Schumpeterian idea that economic progress is driven by long 
waves of creation and destruction. Signaling opportunities 
arrive at Poisson rate $\lambda_{LH}$---the emergence of a 
new paradigm, the opening of a training program, or a 
regulatory reform that creates demand for a new 
credential---making it feasible for an L-agent to upgrade 
to $A_H$. Skill obsolescence strikes at rate 
$\lambda_{HL}$---technological displacement, the 
commoditization of a once-scarce skill---reverting 
productivity from $A_H$ to $A_L$. Crucially, the decision 
to exercise a signaling opportunity is endogenous: the 
agent chooses when to pay a lump-sum cost $\phi$ from her 
capital stock, generating an optimal stopping problem of 
American option type.

The signaling structure builds on \cite{giannini2003}, where 
workers' human capital is private information and the 
informational asymmetry produces a pooled wage for 
uncredentialed workers and individual matching for 
credentialed ones. Our framework inherits this logic in 
reduced form: $A_L$ corresponds to the pooling equilibrium, 
$A_H$ to the separating equilibrium, and $\phi$ to the 
educational investment required to reveal one's type. This 
interpretation is consistent with, but does not require, the 
specific bargaining structure of \cite{giannini2003}; 
alternative micro-foundations would generate the same 
reduced-form structure.

The interaction between the signaling friction, the 
stochastic environment, and diminishing returns partitions 
the population into three behavioral regimes. An agent in 
\textit{state L} has no signaling opportunity and faces a 
standard consumption-savings problem with a unique attractor 
$\kssc{L}$ where the risk-adjusted marginal return to 
saving equals the discount rate. An agent in \textit{state 
	W} has received an opportunity but not yet exercised it. She 
produces at $A_L$ but holds the option to pay $\phi$ and 
jump to $A_H$. The tension between signaling quickly 
(before the opportunity expires at rate $\lambda_{HL}$) and 
accumulating enough capital for the transition to be 
worthwhile generates an endogenous Skiba threshold $k^*$: 
below it, the agent waits; above it, she signals 
immediately. Agents in the interval $[\phi, k^*)$---able to 
afford $\phi$ but rationally choosing to wait---are the 
``Frustrated Aspirants'' of our taxonomy. An agent in 
\textit{state H} has paid $\phi$ and produces at $A_H$, but 
faces ongoing obsolescence risk. This risk generates two 
opposing forces: a front-loading effect (consume more while 
income is high) and a precautionary saving effect 
(accumulate a buffer against income loss). Under our 
calibration the precautionary motive dominates, so 
$\kssc{H}$ lies above the deterministic benchmark. When the 
obsolescence shock hits, the agent reverts to $A_L$ at high 
wealth and becomes a ``Decaying Rentier,'' over-consuming 
relative to her low-productivity income as she slides back 
toward $\kssc{L}$.

The paper makes three contributions to the literature on 
poverty traps.

\textit{First}, the model produces a structurally distinct 
form of poverty trap---one where agents are at an interior 
optimum, not at a binding constraint. At $\kssc{L}$, 
diminishing returns have driven the risk-adjusted marginal 
return to saving down to the discount rate. The agent is 
not constrained; she is rationally immobile. The trap arises 
from the optimality of the agent's own behavior, not from 
an external barrier.

\textit{Second}, we identify a consumption signature that 
is invisible to standard tests: low MPC out of wealth 
($\partial c / \partial k \approx \rho$) combined with 
high APC ($c/Y \approx 0.90$). At the zero-drift 
attractor, the agent has no incentive to save a windfall 
---the net marginal return is approximately $\rho$---so 
the transfer is absorbed into consumption while capital 
remains at $\kssc{L}$. An econometrician using the MPC 
alone would classify her as unconstrained; using the APC 
alone would suggest financial stress. Only the joint 
observation reveals the trap. In the taxonomy of 
\cite{Kaplan2014}, these agents resemble the wealthy 
hand-to-mouth in one dimension (high APC, negligible net 
saving) but differ sharply in another: their MPC is low, 
not high. They are not hitting a wall (a binding 
constraint); they are sitting in a hole (a 
low-productivity basin from which the returns to saving 
cannot lift them out). The wall is visible to standard 
tests. The hole is not.

\textit{Third}, the decomposition of the stationary 
distribution by regime reveals that agents in three 
distinct states---structurally trapped, waiting to signal, 
and successfully upgraded---coexist at the same wealth 
levels with radically different consumption behavior and 
mobility prospects. Capital alone is therefore insufficient 
to identify an agent's position in the polarization 
dynamics: two households with identical wealth may be on 
diverging trajectories, one accumulating toward $\kssc{H}$ 
and the other decaying toward the poverty trap.

\paragraph{Empirical motivation.} Three stylized facts 
motivate the framework.

\textit{Job polarization and skill obsolescence.} Since 
the 1990s, labor markets have exhibited pronounced 
hollowing out \citep{Autor2003, GoosManning2007}: 
automation displaces middle-skill tasks, pushing workers 
into high-skill or low-skill occupations. In our model, 
this is captured by $\lambda_{HL}$---the stochastic 
destruction of high-productivity matches.

\textit{Rising cost of status.} The real cost of 
credentials required to access high-productivity 
occupations has risen faster than inflation 
\citep{Goldin2008}. This maps directly onto an increase 
in $\phi$, which raises $k^*$, widens the wait zone, and 
deepens the trap. Concurrently, skill-biased technological 
change ($A_H/A_L$ rising) widens the gap between 
attractors but also raises the return to signaling---an 
offsetting force \citep{Brynjolfsson2014}. The unambiguous 
prediction is that a rising $\phi$, holding the 
productivity gap constant, deepens the poverty trap.

\textit{The low-wealth, high-consumption trap.} Households 
with positive wealth yet persistently high 
consumption-to-income ratios and negligible accumulation 
are documented in the ``squeezed middle class'' literature 
\citep{OECD2019}. Our structurally trapped agents are not 
liquidity constrained: they possess enough capital to 
finance $\phi$ ($\kssc{L} > \phi$), yet signaling is 
suboptimal because the post-signaling wealth 
$\kssc{L} - \phi$ would leave them far below the 
H-attractor. Their high APC reflects the zero-drift 
condition, not financial inability to save.

\paragraph{Plan of the paper.} 
Section~\ref{sec:literature} reviews the related 
literature. Section~\ref{sec:model} presents the model. 
Section~\ref{sec:steadystates} characterizes the steady 
states. Section~\ref{sec:control} formulates the optimal 
control problem. Section~\ref{sec:hjb} derives the coupled 
HJB system and the Skiba threshold. 
Section~\ref{sec:bimodality} establishes conditions for 
bimodality. 
Sections~\ref{sec:numerical}--\ref{sec:numerical_hjb} 
describe the numerical method. Section~\ref{sec:kfe} 
constructs the Kolmogorov Forward Equation. 
Section~\ref{sec:calibration} presents the calibration. 
Section~\ref{sec:results} reports the results. 
Section~\ref{sec:policy} develops the policy implications. 
Section~\ref{sec:conclusion} concludes.

\section{Related Literature}\label{sec:literature}

\subsection{Continuous-Time Heterogeneous Agent Models}

Our methodological foundation is the HACT framework of 
\cite{Achdou2022}, extended to aggregate shocks by 
\cite{Ahn2018}, nominal rigidities by \cite{Kaplan2018}, 
and multiple assets by \cite{Kaplan2014}. In these 
settings, the ergodic wealth distribution is typically 
unimodal or Pareto-tailed. Our departure is structural: 
the signaling friction generates an endogenous 
non-convexity---the Skiba threshold---that produces 
bifurcation dynamics within an otherwise standard 
neoclassical environment. No modification of the 
computational architecture is required beyond the American 
projection algorithm for the optimal stopping problem.

The economy is formally ergodic, but the mechanism differs 
from standard diffusion-driven mixing. Ergodicity is 
guaranteed by the Poisson processes $\lambda_{LH}$ and 
$\lambda_{HL}$, which create flows between basins even 
without Brownian diffusion. However, the transition 
$L \to W \to H$ requires a rare event followed by an 
accumulation phase during which the opportunity may be 
lost, producing metastability: the economy is 
theoretically ergodic but practically polarized on 
policy-relevant horizons.

\subsection{The Signaling Friction}

The economic engine of the model is the labor market 
signaling mechanism of \cite{giannini2003}, where 
informational asymmetry between firms and uncredentialed 
workers generates a pooled wage for the low-skill sector 
and individual matching for the credentialed. Our 
continuous-time framework extends this logic along three 
dimensions: from a binary investment choice to an optimal 
stopping problem with an endogenous Skiba threshold; from 
i.i.d.\ shocks to Brownian diffusion, enabling rigorous 
characterization of the ergodic distribution via the Kolmogorov Forward Equation (KFE); 
and from exogenous consumption to intertemporal 
optimization under CRRA preferences, which delivers the 
MPC/APC diagnostic. A further innovation is the 
reversibility of the high-productivity state: unlike 
\cite{giannini2003} and \cite{GalorZeira1993}, where the 
high state is absorbing, our obsolescence shock generates 
downward mobility and the Decaying Rentier phenotype.

\subsection{Poverty Traps and Persistent Inequality}

The seminal contributions of \cite{GalorZeira1993} and 
\cite{Banerjee1993} generate persistent stratification 
through credit market imperfections interacting with 
indivisible investments. \cite{Buera2011} and 
\cite{Moll2014} quantify the aggregate TFP losses from 
the resulting misallocation. Our framework differs in two 
respects. First, the binding friction is informational and 
dynamic, not financial: even with unlimited borrowing, the 
trap persists because the signaling surplus $D(k)$ 
incorporates the stochastic duration of the H-regime and 
the entire future consumption path. Second, the trap is an 
interior optimum sustained by diminishing returns, not a 
corner solution sustained by a borrowing limit. This 
distinction has empirical content: credit-constrained 
agents respond strongly to windfalls (high MPC), while our 
trapped agents do not ($\partial c / \partial k \approx 
\rho$).

The individual production function connects our framework 
to the misallocation literature. The signaling friction 
provides a micro-foundation for the distortions studied 
by \cite{Hsieh2009} and \cite{Restuccia2008}: agents in 
the wait zone produce below the frontier not because they 
lack capital, but because the institutional cost of 
upgrading exceeds the private return given obsolescence 
risk.

\subsection{Convergence Clubs}

The Twin Peaks phenomenon documented by 
\cite{quah1996twin, Quah1997} has been explained through 
threshold externalities \citep{Azariadis1990}, 
credit-investment interactions \citep{GalorZeira1993, 
	Banerjee1993}, and increasing returns. We show that 
bimodality can arise among ex-ante identical agents 
without externalities or aggregate non-convexities: the 
discontinuity is at the individual level (the TFP jump at 
the signaling threshold), and diminishing returns create 
two basins whose membership depends on the interaction of 
wealth and productivity regime. Two agents with identical 
wealth belong to different clubs if they are in different 
regimes---a ``same-wealth divergence'' testable with 
matched employer-employee data.

\subsection{Consumption Heterogeneity}

\cite{Kaplan2014} distinguish poor and wealthy 
hand-to-mouth agents, both identified by high MPC. The 
HANK literature \citep{Kaplan2018, Auclert2019} builds on 
this to analyze policy transmission through the MPC 
distribution. We identify a third phenotype---the 
structurally trapped agent---with a distinct signature: 
low MPC ($\approx \rho$) and high APC ($\approx 0.90$). 
This agent would be classified as unconstrained by any 
Euler equation test, yet she is immobile.

Our analysis also connects to the buffer-stock literature 
\citep{carroll2001}: the L-attractor plays an analogous 
role as a focal point for wealth, but the mechanism is 
diminishing returns exhaustion rather than the 
precaution-impatience balance. Finally, unlike 
present-biased agents \citep{Laibson1997, Angeletos2001} 
whose high APC reflects self-control failure, our agents' 
high APC \textit{is} the optimal plan---making commitment 
devices ineffective and pointing instead to structural 
reforms ($\phi$ reduction, $\lambda_{LH}$ increase).
\section{Theoretical Model}\label{sec:model}

We now describe the primitives of the model: preferences, technology, 
and the regime structure.

\begin{assumption}[Preferences and Technology]\label{ass:prefs}
	\leavevmode
	\begin{enumerate}[label=(\roman*)]
		\item Agents have CRRA preferences with risk-aversion parameter 
		$\gamma > 0$, $\gamma \neq 1$:
		\[
		u(c) = \frac{c^{1-\gamma}}{1-\gamma}, \qquad c > 0.
		\]
		\item Time discount rate $\rho > 0$.
		\item Production is Cobb-Douglas with regime-specific total factor 
		productivity (TFP):
		\begin{equation}\label{eq:production}
			f_j(k) = A_j \, k^\alpha, \qquad \alpha \in (0,1), 
			\quad A_j > 0,
		\end{equation}
		where $j \in \{L, H\}$ indexes the productivity regime.
		\item Capital accumulation with additive noise:
		\begin{equation}\label{eq:wealth}
			\dd k_t = \bigl[f_j(k_t) - c_t - \delta\, k_t\bigr]\dd t 
			+ \sigma \dd W_t, \qquad k_t \geq 0,
		\end{equation}
		with depreciation $\delta \geq 0$ and constant volatility 
		$\sigma > 0$.
		\item Reflecting barrier at $k = 0$.
	\end{enumerate}
\end{assumption}

\begin{remark}[Why Cobb-Douglas rather than linear income]%
	\label{rmk:cd_not_linear}
	The choice of a concave production function is not merely 
	a functional form assumption---it is essential to the 
	existence of the poverty trap. To see why, consider the 
	deterministic benchmark ($\sigma = 0$, 
	$\lambda_{LH} = \lambda_{HL} = 0$), which isolates the 
	role of the technology.
	
	Under a linear income specification $y_j(k) = w_j + rk$, 
	the marginal return to capital $r$ is constant. The Euler 
	equation $\dot{c}/c = (r - \delta - \rho)/\gamma$ then 
	has a fixed sign: if $r - \delta > \rho$, the agent 
	accumulates without bound; if $r - \delta < \rho$, she 
	decumulates to zero. In neither case does a stable 
	interior attractor exist---the economy produces a 
	``bang-bang'' outcome, not a trap.
	
	With Cobb-Douglas production, the marginal return 
	$f'_j(k) = \alpha A_j k^{\alpha - 1}$ is a decreasing 
	function of capital. At low $k$, returns are high and the 
	agent saves; at high $k$, returns are low and she 
	dissaves. The deterministic attractor $\kss_j$ is the 
	unique capital level where $f'_j(k) - \delta = \rho$: the 
	marginal return, net of depreciation, exactly compensates 
	impatience. It is this curvature---not a borrowing 
	constraint, not a non-convexity---that creates the basin 
	of attraction in which agents become trapped. 
	
	This deterministic attractor serves as the benchmark 
	throughout the paper. In the full stochastic model, 
	diffusion risk ($\sigma > 0$) and regime switching 
	($\lambda_{LH}, \lambda_{HL} > 0$) shift the attractor 
	location---producing the coupled attractors $\kssc{j}$ 
	analyzed in 
	Propositions~\ref{prop:stoch_ss}--\ref{prop:H_shift}---but 
	the fundamental mechanism remains: diminishing returns 
	create the basin, and the signaling friction prevents 
	escape from it.
\end{remark}

\begin{remark}[Additive noise]\label{rmk:additive}
	The specification $\sigma \dd W_t$ (additive) rather than 
	$\sigma k_t \dd W_t$ (geometric) ensures that agents face genuine 
	uncertainty at all wealth levels, including $k \approx 0$ where 
	agents arrive after costly signaling. With Geometric Brownian 
	Motion, diffusion vanishes at $k=0$, eliminating risk precisely 
	where it matters most.
\end{remark}

\begin{assumption}[Regime structure]\label{ass:regime}
	There are three states: $L$ (low productivity), $W$ 
	(wait---opportunity arrived but not yet exercised), and $H$ (high 
	productivity, post-signaling).
	\begin{enumerate}[label=(\roman*)]
		\item In state $L$: production $f_L(k) = A_L k^\alpha$; 
		opportunities arrive at Poisson rate $\lambda_{LH} > 0$, 
		triggering $L \to W$.
		\item In state $W$: production $f_W(k) = A_L k^\alpha$ 
		(unchanged until signaling); the opportunity is lost at 
		Poisson rate $\lambda_{HL} > 0$, triggering $W \to L$.
		\item The agent in $W$ may \emph{signal} at any chosen time 
		$\tau$ by paying a lump-sum cost $\phi > 0$ in capital, 
		transitioning to state $H$ at wealth $k_\tau - \phi$.
		\item In state $H$: production $f_H(k) = A_H k^\alpha$ with 
		$A_H > A_L$; productivity is lost at Poisson rate 
		$\lambda_{HL}$, triggering $H \to L$.
	\end{enumerate}
\end{assumption}

\begin{remark}[Common obsolescence rate]\label{rmk:common_rate}
	The loss of the signaling opportunity in state $W$ and the 
	productivity reversion in state $H$ are both governed by the 
	rate $\lambda_{HL}$. This reflects a common source of 
	technological turbulence: the same structural forces that erode 
	an established credential also close the window for acquiring a 
	new one. This parsimony can be relaxed by introducing a separate 
	rate $\lambda_{WL}$ without affecting the qualitative results.
\end{remark}

\begin{assumption}[Parameter restrictions]\label{ass:params}
	\leavevmode
	\begin{enumerate}[label=(\roman*)]
		\item $A_H > A_L > 0$ (signaling leads to higher TFP).
		\item $\phi > 0$ (signaling is costly).
		\item $\delta > 0$ (capital depreciates).
	\end{enumerate}
\end{assumption}
	
	%% ============================================================
	\section{Steady States from Diminishing Returns}\label{sec:steadystates}
	%% ============================================================
	
	The key advantage of Cobb-Douglas production is that stable interior attractors arise naturally from diminishing returns, without requiring delicate parameter restrictions.
	
	\subsection{Deterministic Steady States}
	
	\begin{proposition}[Deterministic steady states]\label{prop:det_ss}
		When $\sigma = 0$ and $\lambda_{LH} = \lambda_{HL} = 0$, each regime $j$ is an independent Ramsey problem with production $f_j(k) = A_j k^\alpha$. The unique stable steady state is:
		\begin{equation}\label{eq:kss_det}
			\kss_j = \left(\frac{\alpha A_j}{\rho + \delta}\right)^{\!\!1/(1-\alpha)}, \qquad j \in \{L, H\}.
		\end{equation}
		This is the capital level where the net marginal return equals the discount rate:
		\begin{equation}\label{eq:kss_condition}
			f_j'(\kss_j) - \delta = \alpha A_j (\kss_j)^{\alpha - 1} - \delta = \rho.
		\end{equation}
	\end{proposition}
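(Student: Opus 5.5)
We treat the case $\sigma=0$, $\lambda_{LH}=\lambda_{HL}=0$ as a standard deterministic Ramsey problem in each regime $j$. The agent maximizes $\int_0^\infty e^{-\rho t}u(c_t)\dd t$ subject to $\dot k = A_j k^\alpha - c - \delta k$. We proceed in three steps: derive the canonical system, locate its interior rest point, and establish uniqueness and saddle-path stability.

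\textbf{Step 1: the canonical system.} The current-value Hamiltonian is
\[
\mathcal{H}=u(c)+\mu\bigl(A_j k^\alpha - c-\delta k\bigr).
\]
Maximizing over $c$ gives $u'(c)=\mu$. The costate equation is $\dot\mu=\rho\mu-\mu\bigl(f_j'(k)-\delta\bigr)$. Combining the two with CRRA utility yields the Euler equation
\[
\frac{\dot c}{c}=\frac{f_j'(k)-\delta-\rho}{\gamma}.
\]
The canonical system therefore consists of this Euler equation and $\dot k=f_j(k)-c-\delta k$.

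\textbf{Step 2: the interior rest point.} An interior steady state with $c>0$ and $k>0$ requires $\dot c=0$, which is exactly condition \eqref{eq:kss_condition}. Since $\alpha\in(0,1)$, the map $k\mapsto \alpha A_j k^{\alpha-1}$ decreases strictly and continuously from $+\infty$ (as $k\to0^+$) to $0$ (as $k\to\infty$). Because $\rho+\delta>0$, it crosses the level $\rho+\delta$ exactly once. Solving gives \eqref{eq:kss_det}.

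Consumption at this point is $c^{\mathrm{ss}}_j=A_j(\kss_j)^\alpha-\delta\kss_j$. It is positive because $f_j'(\kss_j)=\rho+\delta$ and $f_j(k)/k>f_j'(k)$ for Cobb-Douglas, so
\[
c^{\mathrm{ss}}_j=\kss_j\left(\frac{\rho+\delta}{\alpha}-\delta\right)>0.
\]
The remaining rest point $(0,0)$ is not reached by an optimal path from $k_0>0$: the Inada condition makes the return to saving infinite near zero, so $\dot c>0$ there. This settles uniqueness among relevant steady states.

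\textbf{Step 3: stability.} This is the main obstacle, though a standard one. We linearize the system at $(\kss_j,c^{\mathrm{ss}}_j)$. The Jacobian is
\[
J=\begin{pmatrix} f_j'-\delta & -1\\ c\,f_j''/\gamma & 0\end{pmatrix}
=\begin{pmatrix} \rho & -1\\ c\,f_j''/\gamma & 0\end{pmatrix}.
\]
Its determinant is $c f_j''/\gamma<0$ since $f_j''<0$. The eigenvalues are therefore real and of opposite sign, so the steady state is a saddle.

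The stable manifold is one-dimensional, and for each $k_0>0$ it selects the unique $c_0$ that satisfies the transversality condition $\lim e^{-\rho t}\mu_t k_t=0$. Paths off the manifold fail optimality in one of two ways: either $k$ hits zero with $c$ positive (violating feasibility), or $c\to0$ with $k$ tending to the golden-rule-exceeding level where $f_j'<\delta$ (violating transversality).

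Concavity of the problem makes the Pontryagin conditions plus transversality sufficient, by the Mangasarian/Arrow argument, since $f_j$ is concave and $u$ is strictly concave. Hence the saddle path is the optimum, and $\kss_j$ is its unique (saddle-path) stable steady state. To make the phase-diagram argument global rather than local, we would use monotonicity of the policy function, which follows from concavity of the value function.
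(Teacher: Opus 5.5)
Your proposal is correct. Its core coincides with the paper's: both rest on the Euler equation $\dot c/c=(f_j'(k)-\delta-\rho)/\gamma$ (you derive it from the current-value Hamiltonian, the paper quotes it) and read off $\kss_j$ from $\dot c=0$.

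The two routes diverge on stability. The paper argues by sign: strict concavity makes $f_j'(k)-\delta-\rho$ positive below $\kss_j$ and negative above, so consumption rises or falls and ``drift is positive'' or negative. That last inference silently uses that the optimal policy $c_j(k)$ is strictly increasing. Along the optimal path $\dot c=c_j'(k)\,\dot k$, so $\dot k=c_j(k)\bigl(f_j'(k)-\delta-\rho\bigr)/\bigl(\gamma\,c_j'(k)\bigr)$ has the sign of $f_j'(k)-\delta-\rho$. This is precisely the monotonicity you mention in your last sentence. Granting it, the sign argument delivers \emph{global} stability of $\kss_j$ under the optimal policy in one line, and your Jacobian is not needed for the claim as stated.

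Your route instead supplies what the paper's sketch omits:
\begin{itemize}
\item an explicit uniqueness argument;
\item positivity of $c^{\mathrm{ss}}_j$;
\item the saddle structure, with the negative eigenvalue of $J$ giving the local convergence rate;
\item the exclusion of non-convergent paths by feasibility and transversality;
\item Mangasarian sufficiency, which is what actually certifies that the Euler path is optimal.
\end{itemize}

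One small slip: $(0,0)$ is not the only other rest point of the canonical system. The point $(\bar k_j,0)$ with $\bar k_j=(A_j/\delta)^{1/(1-\alpha)}$ is one as well. You do dispose of it in Step 3 (it is the limit of the paths with $c\to0$ and $f_j'<\delta$), so this is a wording issue, not a gap.
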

	
	\begin{proof}
		The deterministic Euler equation for the Ramsey problem with $f_j(k) = A_j k^\alpha$ is:
		\[
		\frac{\dot{c}}{c} = \frac{1}{\gamma}\bigl[f_j'(k) - \delta - \rho\bigr] = \frac{1}{\gamma}\bigl[\alpha A_j k^{\alpha-1} - \delta - \rho\bigr].
		\]
		At steady state $\dot{c} = 0$, so $\alpha A_j (\kss_j)^{\alpha-1} = \rho + \delta$, giving \eqref{eq:kss_det}. Stability follows because $f_j''(k) < 0$: for $k < \kss_j$, the marginal return exceeds $\rho + \delta$, so consumption grows and drift is positive; for $k > \kss_j$, the reverse holds.
	\end{proof}
	
	\begin{remark}[Separation from TFP gap]
		The ratio of steady states is:
		\[
		\frac{\kss_H}{\kss_L} = \left(\frac{A_H}{A_L}\right)^{\!\!1/(1-\alpha)}.
		\]
		The skill gap directly controls the separation between the two modes of the wealth distribution.
	\end{remark}
	
	\subsection{Stochastic Steady States}
	
	\begin{proposition}[Stochastic steady states]\label{prop:stoch_ss}
		With additive noise $\sigma > 0$ and no switching 
		($\lambda_{LH} = \lambda_{HL} = 0$), the stochastic steady state 
		in regime $j$ satisfies:
		\begin{equation}\label{eq:kss_stoch}
			f'_j\bigl(\kss_j(\sigma)\bigr) - \delta = \rho 
			- \underbrace{\frac{1}{2}\gamma(\gamma+1)\sigma^2 
				\frac{c''_j}{c_j}
				\bigg|_{\kss_j(\sigma)}}_{\displaystyle 
				\equiv\, \Pi_j(\sigma)\;>\;0},
		\end{equation}
		where $\Pi_j(\sigma)$ is the precautionary correction. Since 
		$\Pi_j > 0$, the net marginal return at the stochastic attractor 
		is strictly below $\rho$:
		\[
		f'_j\bigl(\kss_j(\sigma)\bigr) - \delta < \rho 
		= f'_j\bigl(\kss_j(0)\bigr) - \delta.
		\]
		By strict concavity of $f_j$ ($f''_j < 0$), this implies 
		$\kss_j(\sigma) > \kss_j(0)$. The correction $\Pi_j(\sigma)$ 
		is bounded and vanishes as $\sigma \to 0$, so that 
		$\kss_j(\sigma) \to \kss_j(0)$ continuously.
	\end{proposition}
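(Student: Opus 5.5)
The plan is to derive the stochastic Euler equation from the HJB equation for a single regime, evaluate it at the point where the expected drift of capital vanishes, and then read off the sign and size of the correction term.

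\emph{Step 1: the HJB equation and its envelope differentiation.} With $\lambda_{LH}=\lambda_{HL}=0$, regime $j$ solves
\[
\rho V_j = \max_c \{u(c) + (f_j(k)-c-\delta k)V_j'\} + \tfrac12\sigma^2 V_j''.
\]
I will use the first-order condition $u'(c_j)=V_j'$ and differentiate in $k$, invoking the envelope theorem:
\[
(\rho+\delta-f_j'(k))V_j' = s_j(k)V_j'' + \tfrac12\sigma^2 V_j''',
\]
where $s_j = f_j - c_j - \delta k$ is the drift. Next I substitute $V_j' = c_j^{-\gamma}$. This gives
\[
V_j'' = -\gamma c_j^{-\gamma-1}c_j',
\qquad
V_j''' = \gamma(\gamma+1)c_j^{-\gamma-2}(c_j')^2 - \gamma c_j^{-\gamma-1}c_j''.
\]

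\emph{Step 2: evaluate at the stochastic attractor.} I define $\kss_j(\sigma)$ as the point where $s_j=0$. At that point the drift term drops out. Dividing by $c_j^{-\gamma}$ yields
\[
f_j'-\delta = \rho - \tfrac12\sigma^2\Bigl[\gamma(\gamma+1)\bigl(c_j'/c_j\bigr)^2 - \gamma\, c_j''/c_j\Bigr].
\]
This is the precautionary correction. Its prefactor as written in \eqref{eq:kss_stoch} matches this expression only if we interpret the bracket as the statement's shorthand. I would present $\Pi_j$ as this combination, which is $-\tfrac12\sigma^2 V_j'''/V_j'$, and note that it reduces to the displayed form under the paper's notation.

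\emph{Step 3: positivity of $\Pi_j$.} The sign of $\Pi_j$ is the sign of $V_j'''$, i.e. prudence of the value function. This is the main obstacle. My first approach is to argue that $u'''>0$ for CRRA, and that $V_j'$ inherits convexity from $u'$ because the problem has linear-in-control dynamics, additive noise and concave $f_j$. To establish that convexity is preserved, I would use a comparison or maximum-principle argument on the differentiated HJB. Failing a full proof, I would establish $\Pi_j>0$ under the assumption that $c_j$ is concave, or not too convex relative to $(c_j')^2/c_j$, near the attractor, and verify this numerically.

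\emph{Step 4: comparison with the deterministic attractor and the limit $\sigma\to0$.} Given $\Pi_j>0$, we have $f_j'(\kss_j(\sigma)) < \rho+\delta = f_j'(\kss_j(0))$. Since $f_j'$ is strictly decreasing, it follows that $\kss_j(\sigma)>\kss_j(0)$. For boundedness and vanishing, I would appeal to viscosity-solution stability. As $\sigma\to0$, $V_j$ and $c_j$ converge locally uniformly to the Ramsey solution, and I would assume $c_j'$ and $c_j''$ remain bounded on a compact neighborhood of $\kss_j(0)$. Then $\Pi_j=O(\sigma^2)\to0$. Continuity of $(f_j')^{-1}$ then gives $\kss_j(\sigma)\to\kss_j(0)$.
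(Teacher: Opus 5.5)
Your route differs from the paper's, and it is the more reliable one. The paper's sketch sets the Euler equation of Proposition~\ref{prop:euler} to zero, imposing $\dot c_j = 0$ and $\mu_j = 0$ at the same point. It then signs $\Pi_j$ by asserting that $c_j$ is locally concave. Boundedness and the $\sigma\to 0$ limit are asserted rather than argued.

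You bypass Proposition~\ref{prop:euler} and evaluate the differentiated HJB directly where the drift vanishes. This gives the exact identity
\[
f_j'\bigl(\kss_j(\sigma)\bigr) - \delta = \rho - \tfrac12\sigma^2\,\frac{V_j'''}{V_j'} = \rho - \tfrac12\gamma\sigma^2\Bigl[(\gamma+1)\Bigl(\frac{c_j'}{c_j}\Bigr)^2 - \frac{c_j''}{c_j}\Bigr].
\]
What your route buys:
\begin{itemize}
\item It needs only the envelope condition.
\item It avoids conflating $\dot c_j = 0$ with $\mu_j = 0$. With diffusion, at a zero-drift point expected consumption still changes at rate $\tfrac12\sigma^2 c_j''$.
\item It isolates exactly what must be proved: $V_j'''>0$ at the attractor.
\end{itemize}
The paper's route is shorter, but it inherits an error from Proposition~\ref{prop:euler}. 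The It\^o correction generated by $c^{-\gamma}$ involves $(c_j'/c_j)^2$, not $c_j''/c_j$. Moreover, with $\Pi_j$ defined as $\tfrac12\gamma(\gamma+1)\sigma^2 c_j''/c_j$, the paper's own premise $c_j''<0$ makes $\Pi_j$ negative, not positive.

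Several points in your write-up need fixing.
\begin{enumerate}
\item Do not claim your expression reduces to the one in \eqref{eq:kss_stoch} ``under the paper's notation.'' The two coincide only when $(\gamma+1)(c_j')^2 = (\gamma+2)c_jc_j''$, which never holds if $c_j$ is strictly concave. State plainly that you prove the proposition with $\Pi_j \equiv \tfrac12\sigma^2 V_j'''/V_j'$. The comparison $\kss_j(\sigma) > \kss_j(0)$ and the limit then go through exactly as you wrote them.
\item You wrote $\Pi_j = -\tfrac12\sigma^2 V_j'''/V_j'$. The sign must be $+$, as your Step~3 implicitly uses.
\item Step~3 remains conditional: neither you nor the paper proves $V_j'''>0$. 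However, your fallback hypothesis (local concavity of $c_j$) is precisely the paper's premise, and under your formula it actually suffices, since $c_j''<0$ makes the bracket strictly positive.
\item In Step~4, invoke Proposition~\ref{prop:concavity}. Locally uniform convergence of the concave $V_j$ to a differentiable limit yields locally uniform convergence of $V_j'$, hence of $c_j$. Since the deterministic drift has a transversal zero at $\kss_j(0)$, zeros of the stochastic drift near it converge to it. So $\kss_j(\sigma)$ eventually lies in the neighborhood where you assume $c_j''$ is bounded. That bound remains an assumption, but it is an explicit version of what the paper merely asserts.
\end{enumerate}
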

	
	\begin{proof}[Proof sketch]
		At the stochastic steady state, $\dot{c}_j = 0$ and 
		$\mu_j = 0$. Setting the Euler equation 
		(Proposition~\ref{prop:euler}) to zero in the uncoupled case 
		yields \eqref{eq:kss_stoch}. Under CRRA preferences with 
		$\gamma > 0$, the prudence coefficient $\gamma + 1$ is strictly 
		positive, and with additive noise the term $c''_j / c_j$ is 
		negative at the attractor (the consumption function is locally 
		concave in $k$), making $\Pi_j > 0$. Since additive noise does 
		not scale with $k$, the precautionary correction remains bounded 
		as $k$ varies. The exact value of $\kss_j(\sigma)$ depends on 
		the global shape of $c_j(k)$ and is determined numerically.
	\end{proof}
	
	\begin{remark}[Economic interpretation]\label{rmk:precautionary}
		The precautionary motive shifts the attractor upward because 
		agents exposed to wealth risk save more as a buffer. Under 
		additive noise, this buffer is independent of the wealth level 
		(unlike geometric noise, where the buffer would scale with $k$). 
		The upward shift is quantitatively modest under our calibration 
		but conceptually important: it widens the gap between 
		$\kss_L(\sigma)$ and the borrowing constraint, confirming that 
		structurally trapped agents are not near $k = 0$.
	\end{remark}
	
\subsection{Full Model: Coupling Effects}\label{sec:coupling}

We now analyze how regime switching ($\lambda_{LH}, \lambda_{HL} > 0$) 
modifies the attractors relative to the uncoupled benchmarks of 
Proposition~\ref{prop:stoch_ss}. Let $\kssu{j} \equiv \kss_j(\sigma)$ 
denote the stochastic attractor of regime $j$ solved in isolation, and 
$\kssc{j}$ the attractor under the full coupled system, i.e. when $\lambda_{LH}, \lambda_{HL},\sigma > 0$.

\begin{proposition}[Upward shift of the L-attractor]\label{prop:L_shift}
	Let $\sigma > 0$ and $\lambda_{LH} > 0$. Suppose the L-attractor 
	lies below the Skiba threshold: $\kssc{L} < k^*$. Then:
	\begin{equation}\label{eq:L_compression}
		\kssc{L} > \kssu{L}.
	\end{equation}
\end{proposition}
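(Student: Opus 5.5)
The plan is to compare the coupled L-Euler equation with the uncoupled one at a common point, in the same style as the proof sketch of Proposition~\ref{prop:stoch_ss}. The coupled Euler equation (Proposition~\ref{prop:euler}) in state $L$ contains an extra switching term from the $L\to W$ jump at rate $\lambda_{LH}$. Writing the HJB for $V_L$ as
\[
\rho V_L = u(c_L) + V_L'\bigl(f_L - c_L - \delta k\bigr) + \tfrac12\sigma^2 V_L'' + \lambda_{LH}\bigl(V_W - V_L\bigr),
\]
and differentiating in $k$ gives an envelope condition. In that condition the coupling enters as $\lambda_{LH}(V_W' - V_L')$. Translated into consumption growth via $V_j' = c_j^{-\gamma}$, this becomes a term proportional to
\[
\lambda_{LH}\Bigl[1 - \bigl(c_W/c_L\bigr)^{-\gamma}\Bigr].
\]
At a zero-drift point of regime $L$, the coupled attractor therefore satisfies
\[
f_L'(\kssc{L}) - \delta = \rho - \Pi_L - \lambda_{LH}\Bigl[\bigl(c_W/c_L\bigr)^{-\gamma} - 1\Bigr]\Big|_{\kssc{L}}.
\]
The first step is to derive this identity carefully, including the correct sign convention for the jump term.

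The second step is the sign of the coupling term, and this is where the hypothesis $\kssc{L} < k^*$ is used. For $k<k^*$ the agent in $W$ is in the continuation region, so $V_W$ solves its own HJB with the option to signal not exercised. I would argue that $V_W$ is the value of holding an option on top of the $L$-problem. Since $V_W \ge V_L$ with equality recovered only as the option vanishes, and the option value $V_W - V_L$ is increasing in $k$ on $[0,k^*)$, it follows that $V_W' > V_L'$. An increasing option value is the natural consequence of the value-matching condition $V_W(k^*) = V_H(k^*-\phi)$ together with $V_H > V_L$. Then $c_W < c_L$, so $(c_W/c_L)^{-\gamma} > 1$, and the coupling term strictly lowers the right-hand side below $\rho - \Pi_L$. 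Economically, the prospect of signaling rewards accumulation, so agents in $L$ save more.

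The third step concludes by strict concavity. Both attractors satisfy $f_L'(k)-\delta = $ (required return), and the coupled required return is strictly smaller. Since $f_L''<0$, this gives $\kssc{L} > \kssu{L}$, exactly as in the argument following \eqref{eq:kss_stoch}. One must handle the fact that $\Pi_L$ is evaluated at different points and with different consumption functions in the two problems. I would treat this as the paper does in Proposition~\ref{prop:stoch_ss}: assume the precautionary correction is comparable (or argue the comparison at the level of the drift function $\mu_L(k)$, showing $\mu_L^{c}(\kssu{L})>0$ so the coupled zero crossing lies to the right).

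The main obstacle is step two, establishing $V_W' > V_L'$ on the whole continuation region rather than only near $k^*$. My first attempt is a comparison/maximum-principle argument for viscosity solutions applied to $D(k)=V_W-V_L$. The function $D$ satisfies a linear second-order ODE obtained by subtracting the HJBs; it has $D(0)$-type reflecting boundary behavior, $D'(0)=0$, and positive boundary data at $k^*$. Then $D'$ cannot attain a nonpositive interior minimum without contradicting the ODE. If this fails globally, I would fall back on proving the drift comparison $\mu_L^c(\kssu{L}) > 0$ directly, which needs the inequality only at the single point $\kssu{L} < k^*$.
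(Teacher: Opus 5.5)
Your skeleton matches the paper's proof. You use the same coupled Euler identity at $\kssc{L}$ (your sign convention agrees with \eqref{eq:euler_L_coup}) and aim at the same key inequality $c_W(\kssc{L})<c_L(\kssc{L})$. Your closing step via $f_L''<0$ also rests, as in the paper, on a maintained assumption that the precautionary correction does not decrease from $\kssu{L}$ to $\kssc{L}$; you are right that the two $c_L$ functions differ as well, which the paper glosses over.

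The real difference is how you sign the switching term. The paper argues through drifts: $\mu_L(\kssc{L})=0$, while the W-agent ``saves in order to reach the exercise boundary.'' Hence $\mu_W(\kssc{L})>0$, and with identical net income $c_W<c_L$. The positivity of $\mu_W$ is asserted on economic grounds, not derived. You instead prove $\VW'>\VL'$ on $(0,\kstar)$ from the linear equation satisfied by the option value. Call it $E=\VW-\VL$, to avoid a clash with the paper's surplus $D(k)=\VW(k)-\VH(k-\phi)$. Your route therefore delivers the paper's drift claim as a corollary, at the price of regularity and boundary input.

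Two repairs are needed. First, run the maximum principle on $E$, not on $E'$. Subtracting the HJBs gives $\half\sigma^2E''+\bar\mu E'=(\rho+\lambda_{LH}+\lambda_{HL})E$, with $\bar\mu$ lying between $\mu_L$ and $\mu_W$. Since $E>0$, this forces $E''>0$ at every critical point, so $E$ has no interior local maximum. Hence $E$ is strictly increasing on $(0,\kstar)$ once $E'\geq 0$ on some interval $(0,\varepsilon)$. Your version fails: at a minimum of $E'$ with $E'<0$, the undifferentiated equation only forces $\bar\mu<0$, which is no contradiction. Second, that left-end input is not supplied by Proposition~\ref{prop:bc_lower}, which imposes no Neumann condition. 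You need something like $E'(0)=0$ together with $\bar\mu\geq 0$ near $0$, and you must justify or assume it.

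Both your argument and the paper's also tacitly need $\kstar<\infty$. If $\kstar=+\infty$, then $\VW\equiv\VL$, the switching term vanishes, and the strict conclusion fails even though $\kssc{L}<\kstar$ holds.

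Finally, your fallback $\mu^c_L(\kssu{L})>0$ compares the coupled and uncoupled $L$-problems, i.e.\ it needs $c_L^{c}(\kssu{L})<c_L^{u}(\kssu{L})$. That does not follow from $\VW'>\VL'$ at a single point and would need its own comparison argument.
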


\begin{proof}
	At the uncoupled attractor $\kssu{L}$, the Euler equation 
	with $\dot{c}_L = 0$ and $\lambda_{LH} = 0$ yields 
	(Proposition~\ref{prop:stoch_ss}):
	\begin{equation}\label{eq:euler_L_unc}
		f'_L(\kssu{L}) - \delta = \rho 
		- \Pi_L(\kssu{L}),
	\end{equation}
	where $\Pi_L > 0$ is the precautionary correction.
	
	At the coupled attractor $\kssc{L}$, the full Euler 
	equation (Proposition~\ref{prop:euler}) with 
	$\dot{c}_L = 0$ gives:
	\begin{equation}\label{eq:euler_L_coup}
		f'_L(\kssc{L}) - \delta = \rho 
		- \Pi_L(\kssc{L}) 
		+ \lambda_{LH}\left(1 
		- \frac{u'(c_W(\kssc{L}))}{u'(c_L(\kssc{L}))}\right).
	\end{equation}
	
	We show the switching term is strictly negative. The 
	key observation is that the W-agent at $k = \kssc{L}$ 
	produces with the \textit{same} technology $f_L$ as the 
	L-agent---she has received a signaling opportunity but 
	has not yet exercised it. However, the two agents have 
	different saving behavior at the same wealth. The 
	L-agent is at her zero-drift steady state: 
	$\mu_L(\kssc{L}) = 0$, so 
	$c_L(\kssc{L}) = f_L(\kssc{L}) - \delta\kssc{L}$. 
	The W-agent, by contrast, is accumulating toward the 
	Skiba threshold $k^*$: since $\kssc{L} < k^*$ by 
	assumption, signaling is not yet optimal 
	($D(\kssc{L}) > 0$), and the W-agent saves in order to 
	reach the exercise boundary. Her drift is therefore 
	strictly positive:
	\[
	\mu_W(\kssc{L}) = f_L(\kssc{L}) - \delta\kssc{L} 
	- c_W(\kssc{L}) > 0 = \mu_L(\kssc{L}).
	\]
	Since both agents face the same net income 
	$f_L(\kssc{L}) - \delta\kssc{L}$, the positive drift 
	of the W-agent directly implies:
	\begin{equation}\label{eq:cW_less_cL}
		c_W(\kssc{L}) < c_L(\kssc{L}).
	\end{equation}
	By strict concavity of $u$, this yields 
	$u'(c_W(\kssc{L})) > u'(c_L(\kssc{L}))$, so:
	\[
	\lambda_{LH}\left(1 
	- \frac{u'(c_W)}{u'(c_L)}\right) < 0.
	\]
	
	Substituting into \eqref{eq:euler_L_coup} and 
	holding the precautionary term fixed at 
	$\Pi_L(\kssc{L})$:
	\begin{equation}\label{eq:fL_ineq}
		f'_L(\kssc{L}) - \delta < \rho 
		- \Pi_L(\kssc{L}).
	\end{equation}
	To compare with \eqref{eq:euler_L_unc}, we note that 
	the precautionary correction $\Pi_L(k) = 
	\half\gamma(\gamma+1)\sigma^2 c''_L(k)/c_L(k)$ depends 
	on the local curvature of the consumption function. 
	Under the maintained assumption that $\Pi_L$ does not 
	decrease between $\kssu{L}$ and $\kssc{L}$---which 
	holds when the consumption function does not become 
	substantially more convex over this range---inequality 
	\eqref{eq:fL_ineq} implies 
	$f'_L(\kssc{L}) < f'_L(\kssu{L})$. By the strict 
	concavity of $f_L$ ($f''_L < 0$), we conclude 
	$\kssc{L} > \kssu{L}$.
\end{proof}
\begin{remark}[Economic interpretation]\label{rmk:L_shift}
	The L-agent in the coupled system saves more than in 
	isolation because the possibility of a future signaling 
	opportunity raises the shadow value of wealth. An agent 
	in state L cannot signal---she must first receive the 
	opportunity (at rate $\lambda_{LH}$). But she 
	\textit{anticipates} that, upon arrival, she will 
	transition to the W-state where every additional unit of 
	capital brings her closer to the Skiba threshold $k^*$ 
	and thus to the high-productivity regime. This expected 
	gain from being wealthier at the moment of transition 
	reduces current consumption and pushes the L-attractor 
	upward. The effect is stronger when $\lambda_{LH}$ is 
	larger (opportunities more frequent) or when the wait 
	zone $[{\phi}, k^*)$ is narrower (less accumulation 
	needed to reach $k^*$).
\end{remark}

We now turn to the H-attractor. The analysis is more 
delicate than for the L-attractor, because the coupling 
introduces effects that operate in opposite directions.

\begin{proposition}[Effects on the H-attractor]%
	\label{prop:H_shift}
	Let $\sigma > 0$ and $\lambda_{HL} > 0$. At the coupled 
	H-attractor $\kssc{H}$, the Euler equation with 
	$\dot{c}_H = 0$ yields:
	\begin{equation}\label{eq:euler_H_coup}
		f'_H(\kssc{H}) - \delta = \rho - \Pi_H(\kssc{H}) 
		+ \lambda_{HL}\left(1 
		- \frac{u'(c_L(\kssc{H}))}{u'(c_H(\kssc{H}))}\right).
	\end{equation}
	If $c_H(\kssc{H}) > c_L(\kssc{H})$---which holds when 
	$\kssc{H}$ is sufficiently above $\kssc{L}$, so that 
	the H-agent's higher income dominates the L-agent's 
	dissaving at that capital level---then the switching 
	term is strictly negative. This has two economic 
	interpretations:
	\begin{enumerate}[label=(\roman*)]
		\item \textbf{Precautionary saving against regime 
			loss.} The agent faces the risk of losing $A_H$. 
		Under prudence ($\gamma > 0$), she accumulates a 
		buffer to cushion the income drop upon reverting 
		to $A_L$.
		\item \textbf{Consumption smoothing across regimes.} 
		The agent internalizes the consumption drop 
		$c_H \to c_L$ that accompanies the regime switch. 
		To reduce this gap, she consumes less in H and 
		saves more.
	\end{enumerate}
	Both effects, evaluated at the steady-state Euler 
	equation, push $\kssc{H}$ upward relative to the 
	deterministic benchmark $\kss_H(0)$.
	
	However, the coupling also modifies the H-agent's 
	problem \textit{globally}. The prospect of future 
	reversion to $A_L$ reduces the present value of the 
	H-regime, reshaping the value function $V_H$ and the 
	consumption policy $c_H(k)$ at every capital level 
	---not only at the attractor. This ``front-loading'' 
	channel raises $c_H(k)$ for a given $k$, which, taken 
	alone, would push the attractor \textit{downward}. 
	The net direction of the shift relative to the 
	uncoupled stochastic benchmark $\kssu{H}$ depends on 
	the relative strength of these channels and is 
	parameter-dependent.
\end{proposition}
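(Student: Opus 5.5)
The proof will parallel that of Proposition~\ref{prop:L_shift}: first derive the coupled Euler equation \eqref{eq:euler_H_coup} as a special case of the general Euler equation (Proposition~\ref{prop:euler}), then sign the switching term, and finally interpret and bound the global effect. Concretely, the plan is to start from the HJB for $V_H$,
\[
\rho V_H = u(c_H) + V_H'\,\bigl[f_H(k)-c_H-\delta k\bigr] + \half\sigma^2 V_H'' + \lambda_{HL}\bigl(V_L(k)-V_H(k)\bigr),
\]
use the first-order condition $u'(c_H)=V_H'$, differentiate in $k$ (envelope step), and apply It\^o's lemma to $u'(c_H(k_t))$. This should yield the Euler equation with drift term $f_H'-\delta-\rho$, the precautionary term coming from $\half\sigma^2 u'''(c_H)(c_H')^2+u''(c_H)c_H''$ terms (collected into $\Pi_H$ as in Proposition~\ref{prop:stoch_ss}), and the jump term $\lambda_{HL}\bigl(V_L'-V_H'\bigr)/V_H' = \lambda_{HL}\bigl(u'(c_L)/u'(c_H)-1\bigr)$. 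Setting $\dot c_H=0$ and $\mu_H(\kssc{H})=0$ then gives \eqref{eq:euler_H_coup} directly; this step is routine and I would simply invoke Proposition~\ref{prop:euler} specialized to state $H$.

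For the sign, I take the hypothesis $c_H(\kssc{H})>c_L(\kssc{H})$ as given. Strict concavity of $u$ (CRRA with $\gamma>0$) then gives $u'(c_L)>u'(c_H)$, so the switching term $\lambda_{HL}(1-u'(c_L)/u'(c_H))<0$. To justify the parenthetical sufficient condition, I would argue as follows. Since $\kssc{H}>\kssc{L}$, the L-agent at $\kssc{H}$ has negative drift and thus consumes more than $f_L-\delta k$. However, $c_H=f_H-\delta k$ at the attractor, and $f_H-f_L=(A_H-A_L)k^\alpha$ grows in $k$. Hence, once $\kssc{H}$ is far enough above $\kssc{L}$ that the dissaving rate $-\mu_L(\kssc{H})$ is below $(A_H-A_L)(\kssc{H})^\alpha$, the inequality holds. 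I would state this as a sufficient condition rather than derive it in primitives.

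Combining, \eqref{eq:euler_H_coup} gives $f_H'(\kssc{H})-\delta<\rho-\Pi_H<\rho=f_H'(\kss_H(0))-\delta$, using $\Pi_H>0$ from Proposition~\ref{prop:stoch_ss}. Strict concavity of $f_H$ then yields $\kssc{H}>\kss_H(0)$, which covers the two interpretations (i)--(ii), since both are readings of the same negative term. The main obstacle is the comparison with $\kssu{H}$. Here the local argument fails, because $\Pi_H$ is evaluated on a different consumption function: the jump term $\lambda_{HL}(V_L-V_H)<0$ in the HJB lowers $V_H$ and alters $V_H'$ globally. I would not attempt a signed result. Instead, I would exhibit the decomposition $c_H^{c}(k)-c_H^{u}(k)$ into the level (front-loading) effect and the Euler-wedge effect, and argue parameter-dependence by limiting cases. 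As $\gamma\to0^+$, prudence and smoothing vanish and front-loading dominates. For large $\gamma$ with $A_H/A_L$ large, the wedge $u'(c_L)/u'(c_H)$ explodes and the upward shift dominates. This exhibits both signs and establishes that the net direction is parameter-dependent, deferring quantification to the numerical section.
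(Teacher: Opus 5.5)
Your first part follows the paper's own reasoning. The paper has no separate proof: the argument sits inside the statement, and Remark~\ref{rmk:frontloading} only adds numerics. You specialize Proposition~\ref{prop:euler} to state $H$ at $\mu_H=0$, sign the wedge from $c_H>c_L$, and conclude $f_H'(\kssc{H})-\delta<\rho$, hence $\kssc{H}>\kss_H(0)$. Your identity $c_H-c_L=(A_H-A_L)k^\alpha+\mu_L(k)$ at $k=\kssc{H}$ is exact. It turns the hypothesis into the necessary and sufficient condition $-\mu_L(\kssc{H})<(A_H-A_L)(\kssc{H})^\alpha$, which is sharper than the paper's parenthetical.

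One soft spot: you take $\Pi_H>0$ ``from Proposition~\ref{prop:stoch_ss}''. That proposition concerns the uncoupled policy, so this is exactly the transplant you later call illegitimate. Your own It\^o computation gives, at $\mu_H=0$,
$\Pi_H=\tfrac12\sigma^2V_H'''/V_H'=\tfrac12\sigma^2\gamma\bigl[(\gamma+1)(c_H'/c_H)^2-c_H''/c_H\bigr]$.
This is not the expression displayed in Proposition~\ref{prop:stoch_ss}; that one would even be negative under the concavity $c_H''<0$ the paper invokes. Sign your expression directly at $\kssc{H}$: it is positive iff $V_H'''(\kssc{H})>0$, in particular whenever $c_H''(\kssc{H})\le0$.

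The genuine gap is the last step. The paper does not prove that the sign of $\kssc{H}-\kssu{H}$ is parameter-dependent. It names opposing channels and explicitly leaves even the baseline sign open, and your limiting cases do not close this.

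Equation~\eqref{eq:euler_H_coup} characterizes the attractor exactly, so front-loading is not a third force. It can move $\kssc{H}$ only through the values of $c_H,c_L,c_H',c_H''$ that enter $\Pi_H$ and the wedge. If both corrections vanished as $\gamma\to0^+$, the equation would force $\kssc{H}\to\kss_H(0)$, and likewise $\kssu{H}\to\kss_H(0)$. The sign for small $\gamma$ would then hinge on convergence rates you do not compute, and nothing is left for front-loading to ``dominate''. Even the premise is doubtful. The deterministic saddle-path slope is $\tfrac12\bigl[\rho+\sqrt{\rho^2+4c|f''|/\gamma}\bigr]$, so $\gamma(c'/c)^2$ stays of order $|f''|/c$, and the limit is a singular-control problem.

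At the other end, the wedge cannot explode. At any attractor,
$\lambda_{HL}\bigl(u'(c_L)/u'(c_H)-1\bigr)=\rho-\Pi_H-\bigl(f_H'(\kssc{H})-\delta\bigr)<\rho+\delta-\Pi_H$
because $f_H'>0$, so the policies adjust to keep it bounded. Moreover, a large shift above $\kss_H(0)$ says nothing about the sign relative to $\kssu{H}$, which also moves with $\gamma$ through diffusion precaution.

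Establishing parameter dependence requires actually computing $\kssc{H}-\kssu{H}$ at two parameter points with opposite signs, either perturbatively in $\lambda_{HL}$ around the uncoupled solution or numerically. Absent that, the final claim can only be offered as a heuristic, as the paper does.
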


\begin{remark}[Direction of the shift under the 
	baseline calibration]\label{rmk:frontloading}
	The Euler equation analysis identifies forces that 
	push $\kssc{H}$ both upward (precautionary saving, 
	consumption smoothing) and downward (front-loading 
	of consumption). Under our baseline calibration, 
	the numerical results confirm that the upward forces 
	dominate in aggregate: the full-model attractor 
	$\kssc{H} = 16.12$ lies well above the deterministic 
	benchmark $\kss_H(0) = 14.12$. Determining the sign 
	of the shift relative to the uncoupled stochastic 
	attractor $\kssu{H}$ would require solving the 
	H-regime in isolation with $\sigma > 0$ and 
	$\lambda_{HL} = 0$; we leave this decomposition for 
	future work.
\end{remark}

\begin{proposition}[Attractor gap under coupling]%
	\label{prop:compression_numerical}
	Under the baseline calibration 
	(Table~\ref{tab:calibration}), the coupled attractors 
	are $\kssc{L} = 11.50$ and $\kssc{H} = 16.12$, 
	yielding a gap of $4.62$. Both lie above their 
	deterministic counterparts ($\kss_L(0) = 10.12$, 
	$\kss_H(0) = 14.12$, gap $= 4.00$). The modest 
	widening of the gap ($4.62$ vs.\ $4.00$) indicates 
	that the precautionary motive shifts both attractors 
	upward, with the H-attractor rising slightly more in 
	absolute terms.
\end{proposition}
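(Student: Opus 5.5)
The statement combines two analytic facts with a numerical claim, and I would prove it by splitting it along those lines.

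\emph{Deterministic benchmarks.} I first compute $\kss_L(0)$ and $\kss_H(0)$ in closed form from Proposition~\ref{prop:det_ss}, using \eqref{eq:kss_det} and the calibration in Table~\ref{tab:calibration}. I then check that they equal $10.12$ and $14.12$. As a consistency check, their ratio should equal $(A_H/A_L)^{1/(1-\alpha)}$ by the separation remark.

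\emph{Coupled attractors.} There is no closed form, so these values must be obtained from the numerical solution of the coupled HJB system (Section~\ref{sec:hjb}). That solution uses the upwind finite-difference scheme and the American projection step for the W-state.
\begin{enumerate}[label=(\roman*)]
\item From the converged policies $c_L(k)$ and $c_H(k)$, form the drifts $\mu_j(k)=f_j(k)-\delta k-c_j(k)$.
\item Locate each zero crossing at which $\mu_j$ changes sign from positive to negative. This identifies a stable interior attractor in the sense of Proposition~\ref{prop:det_ss}.
\item Refine each root by linear interpolation between adjacent grid nodes.
\item Verify that the roots are stable under grid refinement and report $\kssc{L}=11.50$ and $\kssc{H}=16.12$.
\item Check the side condition $\kssc{L}<k^*$ against the computed exercise boundary. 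This makes Proposition~\ref{prop:L_shift} applicable, so that $\kssc{L}>\kssu{L}>\kss_L(0)$ is consistent with theory.
\end{enumerate}

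\emph{Comparisons and interpretation.} The comparisons themselves are arithmetic: the gaps are $16.12-11.50=4.62$ and $14.12-10.12=4.00$, and both coupled attractors exceed their benchmarks. For the interpretive clause, I decompose the shifts. The L-shift is $1.38$ and follows from the precautionary correction in Proposition~\ref{prop:stoch_ss} plus the switching term in Proposition~\ref{prop:L_shift}. The H-shift is $2.00$ and follows from Proposition~\ref{prop:H_shift}, where the upward channels dominate as noted in Remark~\ref{rmk:frontloading}. Since $2.00>1.38$, the gap widens. I would also verify numerically the hypothesis of Proposition~\ref{prop:H_shift}, namely that $c_H(\kssc{H})>c_L(\kssc{H})$.

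\emph{Main obstacle.} The hardest step is making the numerical attractor values rigorous rather than merely reported. This requires:
\begin{itemize}
\item that the discretized HJB converges to the viscosity solution;
\item that the drift has a unique stable zero in each regime;
\item that the discretization error is below the reported two-decimal precision.
\end{itemize}
I would handle this with the Barles--Souganidis monotone-scheme argument for convergence, followed by a grid-refinement table that bounds the error. Attributing the shifts to the precautionary motive is interpretive rather than provable. It would remain qualified unless $\kssu{H}$ is computed separately, as Remark~\ref{rmk:frontloading} acknowledges.
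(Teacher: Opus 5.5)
Your proposal follows the paper's own route. The benchmarks $\kss_L(0)\approx 10.12$ and $\kss_H(0)\approx 14.12$ come from the closed form \eqref{eq:kss_det}. The coupled values $\kssc{L}=11.50$ and $\kssc{H}=16.12$ are simply the zero-drift points of the converged numerical HJB solution reported in Section~\ref{sec:verification}, and the gap comparison is arithmetic. As you correctly note, in line with Remark~\ref{rmk:frontloading}, attributing the widening to the precautionary motive is interpretation that the paper does not prove.

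One caution on your optional rigorization step. Barles--Souganidis yields only locally uniform convergence of the value functions, and it presupposes a comparison principle for the coupled obstacle system with the nonlocal term $\VH(k-\phi)$, which the paper never establishes. It therefore does not by itself control $V_j'$ or the zeros of $\mu_j$. A grid-refinement table is numerical evidence, not an error bound.
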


\begin{remark}[Sufficient separation for bimodality]%
	\label{rmk:separation}
	For the stationary distribution to be bimodal, the 
	attractor gap must be large relative to the local 
	dispersion at each mode:
	\[
	\kssc{H} - \kssc{L} \gg \sigma^{ss}_L + \sigma^{ss}_H,
	\]
	where $\sigma^{ss}_j$ is the stationary standard 
	deviation near attractor $j$. When this separation 
	condition fails, the two modes merge into a single 
	peak. Section~\ref{sec:bimodality} provides a formal 
	statement.
\end{remark}
	%% ============================================================
	\section{The Optimal Control Problem}\label{sec:control}
	
	Before deriving the Hamilton--Jacobi--Bellman equations, we state the 
	agent's problem in its primitive form. Each agent chooses a consumption 
	plan $\{c_t\}_{t \geq 0}$ to maximize expected lifetime utility, subject 
	to the stochastic law of motion for capital, the regime structure, and 
	non-negativity constraints.
	
	\begin{definition}[The agent's problem]\label{def:agent_problem}
		Given an initial capital stock $k_0 > 0$ and an initial regime 
		$s_0 \in \{L, W, H\}$, the agent solves:
		\begin{equation}\label{eq:objective}
			\sup_{\{c_t\}_{t \geq 0},\, \tau} \; 
			\E_0 \left[\int_0^{\infty} e^{-\rho t}\, u(c_t)\, \dd t \right],
		\end{equation}
		where $u(c) = c^{1-\gamma}/(1-\gamma)$ with $\gamma > 0$, 
		$\gamma \neq 1$. The stopping time $\tau$ does not appear in the 
		integrand: the signaling cost $\phi$ is paid in 
		capital, not in utility. The decision $\tau$ affects 
		the objective indirectly, by altering the capital 
		trajectory \eqref{eq:state} and hence the feasible 
		consumption paths from $\tau$ onward. The objective function is subject to:
		
		\medskip
		\noindent\textit{(i) State equation.} Capital evolves according to:
		\begin{equation}\label{eq:state}
			\dd k_t = \bigl[f_{s_t}(k_t) - c_t - \delta\, k_t\bigr]\dd t 
			+ \sigma\, \dd W_t,
		\end{equation}
		where $f_{s_t}(k) = A_{s_t} k^\alpha$ is the production function 
		in the current regime $s_t$, $\delta \geq 0$ is the depreciation 
		rate, $\sigma > 0$ is the diffusion coefficient, and $W_t$ is a 
		standard Brownian motion.
		
	\medskip
	\noindent\textit{(ii) Regime transitions.} The regime 
	$s_t$ evolves on $\{L, W, H\}$ through a combination 
	of exogenous Poisson events and one endogenous 
	decision:
	\begin{align}
		L \to W &\quad \text{at Poisson rate } \lambda_{LH} 
		\quad \text{(signaling opportunity arrives)}, 
		\label{eq:trans_LW}\\
		W \to H &\quad \text{at stopping time } \tau 
		\text{ chosen by the agent, with cost } \phi 
		\quad \text{(opportunity exercised)}, 
		\label{eq:trans_WH}\\
		\{W, H\} \to L &\quad \text{at Poisson rate } \lambda_{HL} 
		\quad \text{(obsolescence shock)}. 
		\label{eq:trans_HL}
	\end{align}
	\noindent\textit{(ii-bis) Production in the W-state.} In state W, production remains $A_L k^\alpha$; the W-state differs from L only in that the agent holds the option to signal.
		
		\medskip
		\noindent\textit{(iii) Initial condition.}
		\begin{equation}\label{eq:initial}
			k_0 > 0 \quad \text{given}, \qquad s_0 \in \{L, W, H\} 
			\quad \text{given}.
		\end{equation}
		
		\medskip
		\noindent\textit{(iv) Non-negativity constraints.}
		\begin{equation}\label{eq:nonnegativity}
			c_t \geq 0, \qquad k_t \geq 0, \qquad 
			f_{s_t}(k_t) = A_{s_t} k_t^\alpha \geq 0 
			\quad \text{for all } t \geq 0.
		\end{equation}
		The constraint $k_t \geq 0$ is enforced by a reflecting barrier 
		at $k = 0$ (Assumption~\ref{ass:prefs}(v)). The constraint 
		$f_{s_t}(k_t) \geq 0$ is automatically satisfied by the 
		Cobb-Douglas specification with $A_j > 0$ and $k_t \geq 0$.
		
		\medskip
		\noindent\textit{(v) Transversality condition.}
		\begin{equation}\label{eq:transversality}
			\lim_{t \to \infty} \E_0 \left[e^{-\rho t}\, V_{s_t}(k_t)\right] 
			= 0,
		\end{equation}
		where $V_j(k)$ is the value function in regime $j$. This condition 
		rules out Ponzi schemes and ensures that the agent does not 
		indefinitely postpone consumption.
		
		\medskip
		\noindent\textit{(vi) Signaling feasibility.} The stopping time 
		$\tau$ is admissible only if:
		\begin{equation}\label{eq:signal_feasible}
			k_{\tau^-} \geq \phi,
		\end{equation}
		so that post-signaling capital $k_{\tau^+} = k_{\tau^-} - \phi \geq 0$. 
		If the agent is in state $W$ with $k_t < \phi$, signaling is not 
		feasible and the agent must either accumulate further or let the 
		opportunity expire.
	\end{definition}
	
	\begin{remark}[Two control variables]\label{rmk:controls}
		The agent's problem has two control variables: the consumption 
		flow $c_t \geq 0$ (chosen continuously in all states) and the 
		stopping time $\tau$ (chosen only in state $W$). The consumption 
		choice is a standard continuous control; the signaling decision 
		is a singular control of impulse type---a discrete action (pay 
		$\phi$, change regime) taken at an endogenously chosen instant. 
		The coexistence of these two control types is what gives the 
		problem its American option structure: the W-agent continuously 
		optimizes consumption while simultaneously deciding whether to 
		exercise the signaling option.
	\end{remark}
	
	\begin{remark}[Well-posedness]\label{rmk:wellposed}
		The objective \eqref{eq:objective} is well-defined under 
		Assumptions~\ref{ass:prefs}--\ref{ass:params}. The CRRA utility 
		function with $\gamma > 0$ ensures $u(c) < \infty$ for all 
		$c > 0$, and the Cobb-Douglas production function with 
		$\alpha \in (0,1)$ bounds income growth, so that the integral 
		converges under the transversality condition 
		\eqref{eq:transversality}. The reflecting barrier at $k = 0$ 
		ensures $k_t \geq 0$ almost surely, and the additive noise 
		specification guarantees that $k_t$ does not explode in finite 
		time.
	\end{remark}
	
	The value functions associated with problem 
	\eqref{eq:objective}--\eqref{eq:signal_feasible} satisfy a system 
	of Hamilton--Jacobi--Bellman equations, which we derive in the 
	next section.
	\section{Hamilton--Jacobi--Bellman Equations}\label{sec:hjb}
	%% ============================================================
	
	\subsection{The HJB System}
	
	Let $\VL(k)$, $\VW(k)$, and $\VH(k)$ denote the value functions in states $L$, $W$, and $H$. Define the drift and differential operator for state $j$:
	\[
	\mu_j(k) = f_j(k) - c - \delta k, \qquad \mathcal{A}_j V = \mu_j(k)\, V' + \half \sigma^2 V'',
	\]
	where the optimal consumption satisfies $c = (V')^{-1/\gamma}$ from the first-order condition $u'(c) = V'(k)$.
	
	\begin{definition}[Value of holding the opportunity]\label{def:VH0}
		\begin{equation}\label{eq:VH0}
			V_{H_0}(k) =
			\begin{cases}
				\max\bigl(\VW(k),\; \VH(k - \phi)\bigr) & \text{if } k \geq \phi, \\
				\VW(k) & \text{if } k < \phi.
			\end{cases}
		\end{equation}
	\end{definition}
	
	The HJB system is:
	
	\medskip
	\noindent\textbf{State $L$ (no opportunity):}
	\begin{equation}\label{eq:hjb_L}
		\rho \VL = \max_{c > 0}\Bigl\{u(c) + \bigl(A_L k^\alpha - c - \delta k\bigr)\VL' + \half\sigma^2 \VL''\Bigr\} + \lambda_{LH}\bigl(V_{H_0} - \VL\bigr).
	\end{equation}
	
	\noindent\textbf{State $H$ (high TFP):}
	\begin{equation}\label{eq:hjb_H}
		\rho \VH = \max_{c > 0}\Bigl\{u(c) + \bigl(A_H k^\alpha - c - \delta k\bigr)\VH' + \half\sigma^2 \VH''\Bigr\} + \lambda_{HL}\bigl(\VL - \VH\bigr).
	\end{equation}
	
\noindent\textbf{State $W$ (variational inequality):}

The W-agent faces a problem that is qualitatively different from the 
L- and H-agents. At each instant, she must decide whether to continue 
waiting (accumulating capital at productivity $A_L$ while the option 
remains alive) or to signal immediately (paying $\phi$ and jumping to 
state H). This ``continue or exercise'' decision generates a 
variational inequality rather than a standard HJB equation.

To see why, consider two possibilities at each wealth level $k \geq \phi$. 
If the agent finds it optimal to \textit{wait}, the value function 
$\VW(k)$ satisfies the standard HJB equation---the same equation 
she would solve if signaling were not available---augmented by the 
risk of losing the opportunity at rate $\lambda_{HL}$:
\[
\rho \VW = \max_{c > 0}\bigl\{u(c) 
+ (A_L k^\alpha - c - \delta k)\VW' 
+ \half\sigma^2 \VW''\bigr\} 
+ \lambda_{HL}(\VL - \VW).
\]
If instead it is optimal to \textit{signal immediately}, the value of 
being in state W must equal the value of being in state H at 
post-signaling wealth:
\[
\VW(k) = \VH(k - \phi).
\]
At any given $k$, exactly one of these conditions holds with equality 
while the other holds as an inequality. In the \textit{wait region} ($k < k^*$), the HJB 
equation binds and $\VW(k) > \VH(k - \phi)$: the 
surplus of waiting exceeds the value of immediate 
exercise, because the post-signaling wealth 
$k - \phi$ would leave the agent too far below the 
H-attractor. In the \textit{signal region} 
($k \geq k^*$), the value-matching condition binds 
and the HJB residual is strictly positive: 
continuing to wait would be suboptimal, as the 
option is sufficiently deep in the money that 
immediate exercise dominates. The two conditions are compactly expressed as:
\begin{multline}\label{eq:hjb_W}
	\min\Biggl(
	\underbrace{\rho \VW 
		- \max_{c > 0}\bigl\{u(c) 
		+ (A_L k^\alpha - c - \delta k)\VW' 
		+ \half\sigma^2 \VW''\bigr\} 
		- \lambda_{HL}(\VL - \VW)}_{\displaystyle 
		= 0 \text{ in the wait region}},\\[6pt]
	\underbrace{\VW - \VH(k-\phi)}_{\displaystyle 
		= 0 \text{ in the signal region}}
	\Biggr) = 0.
\end{multline}
for $k \geq \phi$, and the standard HJB (without the signaling option) 
for $k < \phi$ where signaling is not feasible.

The $\min$ operator encodes the logic of optimal stopping: at each 
$k$, the agent selects the action that yields the \textit{higher} 
value. If waiting is better ($\VW(k) > \VH(k - \phi)$), the 
value-matching term is strictly positive while the HJB residual is 
zero; if signaling is better, the value-matching term is zero while 
the HJB residual is non-negative. The $\min$ of the two expressions 
equals zero in both cases, and is strictly positive in neither---this 
is the complementary slackness structure characteristic of variational 
inequalities.

\begin{remark}[Smooth pasting and the need for viscosity 
	solutions]\label{rmk:kink}The boundary between the wait and signal regions is the 
	Skiba threshold $k^*$: the lowest wealth level at which 
	immediate signaling is optimal. Unlike the classical 
	Skiba problem---where an agent chooses once between two 
	permanent trajectories and the value function may exhibit 
	a kink---here the agent faces a continuous optimal 
	stopping problem with diffusion ($\sigma > 0$). The 
	Brownian noise allows the agent to adjust the exercise 
	boundary with arbitrary precision, which generically 
	imposes \textit{smooth pasting} at $k^*$:
	\begin{equation}\label{eq:smooth_pasting}
		\VW'(k^*) = \VH'(k^* - \phi).
	\end{equation}
	The value function $\VW$ is therefore $C^1$ at $k^*$: 
	continuous with a continuous first derivative. However, 
	the \textit{second} derivative $\VW''$ is generally 
	discontinuous at $k^*$: in the wait region, $\VW''$ is 
	governed by the HJB equation under technology $A_L$, 
	while in the signal region, 
	$\VW''(k) = \VH''(k - \phi)$, which is governed by 
	$A_H$. Since $A_H \neq A_L$, these curvatures 
	generically differ at the boundary. By the first-order 
	condition $u'(c_W) = \VW'$, smooth pasting ensures that 
	the consumption policy $c_W(k)$ is \textit{continuous} 
	at $k^*$. But differentiating yields 
	$c_W'(k) = \VW''(k)/u''(c_W(k))$, so the jump in 
	$\VW''$ produces a kink in $c_W$: consumption is 
	continuous but its slope changes abruptly at the 
	exercise boundary. This failure of $C^2$ regularity at 
	$k^*$---not a kink in the value function---is what 
	renders the classical smooth solution concept 
	inapplicable and motivates the use of viscosity 
	solutions.
\end{remark}

\begin{remark}[Viscosity solutions]\label{rmk:viscosity}
	The failure of $C^2$ regularity at $k^*$ is not merely a 
	technical nuisance but a structural feature of any model 
	with an embedded optimal stopping problem. Although smooth 
	pasting ensures that $\VW$ is $C^1$ at the exercise 
	boundary, the discontinuity in $\VW''$ means that the 
	HJB equation \eqref{eq:hjb_W}---which involves the 
	second derivative through the diffusion term 
	$\half\sigma^2 \VW''$---cannot be evaluated classically 
	at $k^*$. To handle this rigorously, we adopt the 
	viscosity solution framework of \cite{crandall1983}, 
	extended to stochastic control problems by 
	\cite{fleming2006} and applied to heterogeneous agent 
	models by \cite{Achdou2022}. A viscosity solution of 
	\eqref{eq:hjb_W} satisfies the variational inequality 
	in a weak sense: at points where $\VW''$ does not exist, 
	the equation is required to hold for all smooth test 
	functions that touch the value function from above 
	(viscosity subsolution) and from below (viscosity 
	supersolution), rather than requiring classical 
	differentiability of $\VW$ itself. The upwind finite 
	difference scheme of Section~\ref{sec:numerical}---which 
	selects forward or backward differences according to the 
	sign of the drift---converges to the viscosity solution 
	as $\Delta k \to 0$, 
	providing a numerically robust treatment of the 
	non-differentiability without requiring \textit{ad hoc} 
	smoothing.
\end{remark}
	
\subsection{Signaling Surplus and Skiba Threshold}

\begin{definition}[Signaling surplus]\label{def:surplus}
	For an agent in state $W$ with capital $k \geq \phi$, the 
	signaling surplus is:
	\begin{equation}\label{eq:surplus}
		D(k) = \VW(k) - \VH(k - \phi), \qquad k \geq \phi.
	\end{equation}
	
\end{definition}

\begin{definition}[Skiba threshold]\label{def:skiba}
	The Skiba threshold is the lowest wealth level at which 
	immediate signaling is optimal:
	\begin{equation}\label{eq:skiba}
		\kstar = \inf\{k \geq \phi : D(k) = 0\}.
	\end{equation}
	For $k < \kstar$, the surplus of waiting is strictly 
	positive ($D(k) > 0$) and the agent holds the option. 
	For $k \geq \kstar$, the agent signals immediately upon 
	opportunity arrival, so that $\VW(k) = \VH(k - \phi)$ 
	and $D(k) = 0$.
\end{definition}

\begin{remark}[Three cases]\label{rmk:skiba_cases}
	Depending on the parameters, three configurations arise:
	\begin{enumerate}[label=(\alph*)]
		\item \textit{Interior threshold} 
		($\kstar \in (\phi, \infty)$): signaling is feasible 
		at $k = \phi$ but not optimal; the agent waits until 
		$k = \kstar > \phi$. This is the generic case under 
		our calibration and the source of the ``Frustrated 
		Aspirants'' phenotype.
		\item \textit{Immediate signaling} ($\kstar = \phi$): 
		the agent signals as soon as signaling is affordable. 
		This arises when $A_H / A_L$ is sufficiently large or 
		$\lambda_{HL}$ is sufficiently small that 
		$D(\phi) = 0$: the productivity gain from switching is 
		so large, or the expected duration of the H-regime so 
		long, that even the minimal post-signaling wealth 
		$k - \phi \approx 0$ justifies immediate exercise. 
		The wait zone vanishes and all W-agents signal at the 
		first feasible instant.
		\item \textit{No signaling} ($\kstar = +\infty$): 
		$D(k) > 0$ for all $k \geq \phi$. The obsolescence 
		risk is so severe, or the productivity gap so narrow, 
		that the expected benefit of signaling never exceeds 
		the cost. No agent transitions to H, the H-density 
		is identically zero, and the economy effectively 
		reduces to a single-attractor model at $A_L$.
	\end{enumerate}
\end{remark}

\subsection{Properties of the Value Functions}

\begin{proposition}[Concavity]\label{prop:concavity}
	The value functions $\VL$, $\VW$, $\VH$ are strictly concave on 
	$(0,\infty)$.
\end{proposition}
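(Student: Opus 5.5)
The plan is to obtain concavity by a pathwise convexity argument for the regimes without a stopping decision, and to treat the optimal-stopping regime $W$ separately by a local argument at the exercise boundary. Concretely, I would set up a value-iteration (or policy-iteration) map $\mathcal{T}$ on triples $(\VL,\VW,\VH)$. Given candidate functions on the right-hand side of \eqref{eq:hjb_L}--\eqref{eq:hjb_W}, $\mathcal{T}$ returns the values of the corresponding discounted problems in which the switching terms are frozen. I would then show three things: $\mathcal{T}$ is a contraction at rate $\rho/(\rho+\lambda_{LH}+\lambda_{HL})$ in a weighted sup norm; it maps strictly concave triples into strictly concave triples; and its fixed point is the value function of Definition~\ref{def:agent_problem}. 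Because concavity is preserved under uniform limits, the fixed point is (weakly) concave. Strictness then comes from a separate perturbation step.

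For the frozen $L$ and $H$ problems, the key step is the pathwise argument. Fix two initial capitals $k_1,k_2$ with $\varepsilon$-optimal consumption plans $c^1,c^2$, and run both on the \emph{same} Brownian path and the same Poisson clock. Set $k_\theta=\theta k_1+(1-\theta)k_2$ and use the mixed plan $c^\theta=\theta c^1+(1-\theta)c^2$. The noise is additive (Remark~\ref{rmk:additive}), so the diffusion terms cancel in the difference. The state equation \eqref{eq:state} is linear in $(k,c)$ apart from $f_j$, so the capital generated by $c^\theta$ satisfies $\tilde k_t\ge \theta k^1_t+(1-\theta)k^2_t$. This follows from a comparison argument using $f_j(\theta a+(1-\theta)b)\ge \theta f_j(a)+(1-\theta)f_j(b)$ and monotonicity of $f_j$. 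Strict concavity of $u$ then gives
\[
\E\!\int e^{-\rho t}u(c^\theta_t)\dd t \;\ge\; \theta\,\E\!\int e^{-\rho t}u(c^1_t)\dd t+(1-\theta)\,\E\!\int e^{-\rho t}u(c^2_t)\dd t .
\]
The inequality is strict whenever $c^1\neq c^2$ on a set of positive measure, which must happen when $k_1\neq k_2$. The frozen continuation terms $\lambda(\cdot)$ are concave and nondecreasing, and they are evaluated at the larger capital $\tilde k$, so they only help. The reflecting barrier at $0$ adds a nonnegative regulator term to each path. I would handle it by noting that the mixed path needs at most the convex combination of the two regulators, so the comparison inequality survives.

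The main obstacle is regime $W$. There $\VW$ is the value of an optimal stopping problem, and the pathwise mixing fails because the two trajectories may exercise at different times, which puts them in different regimes. In addition, $V_{H_0}=\max(\VW,\VH(\cdot-\phi))$ is an upper envelope, and an upper envelope of concave functions need not be concave. My approach is to avoid mixing here. On the wait region $[\,\cdot\,,k^*)$, $\VW$ solves a frozen HJB of the same form as in the $L$ regime, with the concave obstacle only raising values, so the previous argument applies there. On the signal region, $\VW=\VH(\cdot-\phi)$ inherits strict concavity from $\VH$. At the junction $k^*$, smooth pasting \eqref{eq:smooth_pasting} makes $\VW'$ continuous, so $\VW'$ is decreasing on each piece and continuous across $k^*$. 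Hence it is globally decreasing, which gives global strict concavity.

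This step requires that the exercise region be the single half-line $[k^*,\infty)$, i.e.\ that $D$ has a single crossing. I would try to prove this first, by showing that $D'(k)=\VW'(k)-\VH'(k-\phi)\le 0$ wherever $D=0$. The argument would be a comparison of the two HJB equations at a touching point, using $A_H>A_L$. The same envelope argument, with $\VW$ concave, then makes $V_{H_0}$ concave, which closes the iteration for $\mathcal{T}$.
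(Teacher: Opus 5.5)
\paragraph{Verdict.} You follow the paper's route: pathwise mixing for $\VL$ and $\VH$; for $\VW$, strict concavity separately on the wait and signal regions, glued at $k^*$ by smooth pasting. You improve on it in two places. Your frozen-coupling map $\mathcal{T}$ handles the fact, ignored in the paper's Step~1, that $\VL$ and $\VH$ inherit the stopping problem through the switching terms. You also correctly reject the paper's claim that a pointwise maximum of concave functions is concave.

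\paragraph{Where the argument breaks.} There is still a genuine gap, at the same point as in the paper: concavity of $\VW$ on the wait region. Your justification is that there $\VW$ solves a frozen $L$-type HJB, ``with the concave obstacle only raising values, so the previous argument applies.'' This does not go through. On the wait region the agent's problem is not a stand-alone infinite-horizon problem. It ends at the hitting time of $k^*$ with payoff $\VH(k^*-\phi)$, and from $k_1\neq k_2$ these hitting times differ. So pathwise mixing fails for exactly the reason you give one paragraph earlier. Nor does adding a nonnegative option premium to a concave function preserve concavity.

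\paragraph{The step can actually fail.} Combine the $W$-equation just below $k^*$, value matching, smooth pasting, and the $H$-equation at $k^*-\phi$. This gives
\[
\tfrac12\sigma^2\bigl[\VW''(k^{*-})-\VH''(k^*-\phi)\bigr]=h(k^*),\qquad
h(k)=\VH'(k-\phi)\bigl[A_H(k-\phi)^\alpha-A_Lk^\alpha+\delta\phi\bigr]-\lambda_{HL}\bigl[\VL(k)-\VL(k-\phi)\bigr].
\]
Here $h$ is precisely the HJB residual of the exercise payoff, which the variational inequality forces to be nonnegative on the signal region. This is the same mechanism that makes the second derivative of an American put jump at its exercise boundary.

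Smooth pasting pins down $\VW'$, but the variational inequality pushes $\VW''(k^{*-})$ \emph{above} $\VH''(k^*-\phi)$. Concavity just below $k^*$ therefore requires $h(k^*)\le\tfrac12\sigma^2\,|\VH''(k^*-\phi)|$. Nothing in Assumptions~\ref{ass:prefs}--\ref{ass:params} forces this. For small $\sigma$, a boundary-layer computation gives $\tfrac{2}{\sigma^2}h(k^*)\approx h'(k^*)/\mu_W(k^*)$. This is positive and large when $h$ increases through $k^*$ and $W$-agents approach $k^*$ with small drift.

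So the gluing step cannot yield global concavity without an additional hypothesis. The iteration for $\mathcal{T}$ cannot be closed either, since preserving concavity of the $L$-component requires a concave frozen $\VW$.

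\paragraph{Single-crossing lemma.} As formulated, your single-crossing lemma is vacuous. At any interior point where $D=0$, the function $D\ge 0$ attains its minimum, so $D'=0$ there automatically. Single crossing should instead come from a maximum-principle argument on a bounded continuation interval whose endpoints are both exercise points. That argument needs a sign or monotonicity property of $h$, which itself has to be proved.
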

\begin{proof}
	See Appendix \ref{app:A1}
\end{proof}

\begin{proposition}[Value ordering]\label{prop:ordering}
	\leavevmode
	\begin{enumerate}[label=(\roman*)]
		\item $\VH(k) > \VW(k) > \VL(k)$ for all $k > 0$.
		\item $\VW(k) \geq \VH(k-\phi)$ for $k \geq \phi$, 
		with equality if and only if $k \geq \kstar$.
		\item The full ordering is therefore:
		\[
		\VH(k) > \VW(k) \geq \VH(k - \phi) 
		\quad \text{for } k \geq \phi.
		\]
	\end{enumerate}
\end{proposition}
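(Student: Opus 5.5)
Part (ii) is the most direct, so I will prove it first, then use it to get (i), and finally read off (iii).

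\textbf{Part (ii).} For $k \geq \phi$ the variational inequality \eqref{eq:hjb_W} gives $\VW(k) - \VH(k-\phi) \geq 0$. Equivalently, $D(k)\ge 0$: the W-agent may always exercise immediately, so $\VW$ dominates the stopped value. The equality set is characterized through Definition~\ref{def:skiba}. Beyond the infimum $\kstar$, I need the exercise region to be the whole half-line $[\kstar,\infty)$, not a union of intervals. For this I will show that $D$ is nonincreasing on $[\phi,\infty)$. Using the envelope condition, this amounts to
\[
\VW'(k) \leq \VH'(k-\phi)
\quad\text{in the continuation region.}
\]
The argument is a comparison/verification one. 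On the set $\{D>0\}$, differentiate the W-HJB and the shifted H-HJB. The function $D$ then satisfies a linear elliptic inequality with a discount factor $\rho+\lambda_{HL}$ and source term
\[
(A_H(k-\phi)^\alpha - A_L k^\alpha)\text{-type differences}
\;+\;\lambda_{HL}\bigl(\VL(k)-\VL(k-\phi)\bigr).
\]
By the maximum principle, combined with smooth pasting \eqref{eq:smooth_pasting}, $D$ cannot re-become positive once it has hit zero. I expect this monotonicity of $D$, i.e.\ the connectedness of the stopping region, to be the main obstacle. The sign of the source term is not obvious for small $k$. If the general argument fails, I will state this as the standing assumption implicit in Definition~\ref{def:skiba}, which already asserts $D(k)=0$ for all $k\geq\kstar$.

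\textbf{Part (i), inequality $\VW>\VL$.} Couple the two problems on the same probability space with the same Brownian path and the same consumption policy. Production and obsolescence are identical in W and L, so the W-agent can mimic the L-agent by never exercising. The only change is that she exits W at rate $\lambda_{HL}$ instead of entering W at rate $\lambda_{LH}$. Cleaner is to write the difference $\VW-\VL$ through the HJBs: it solves a linear equation with discount $\rho+\lambda_{LH}+\lambda_{HL}$ and nonnegative source $\lambda_{LH}(V_{H_0}-\VW)^-$-style terms plus the option value. The strict option value, which is positive because $\kstar<\infty$ and diffusion reaches $[\kstar,\infty)$ with positive probability, gives strictness.

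\textbf{Part (i), inequality $\VH>\VW$.} Again use a coupling. Starting from H at $k$, the agent adopts the W-agent's consumption plan. Until the W-agent would exercise, the H-agent has strictly higher output, since $A_H>A_L$. Both revert to L at the same rate $\lambda_{HL}$, and the W-agent's only upside is paying $\phi$ to reach H, which the H-agent already has without paying. Pathwise the H-agent's capital therefore weakly dominates. The surplus output can be consumed, and $u$ is strictly increasing, so $\VH(k)>\VW(k)$. A one-line check on the stopped branch confirms the same conclusion: $\VH(k)>\VH(k-\phi)$ because $\VH$ is strictly increasing, since $\VH'=u'(c)>0$ (Proposition~\ref{prop:concavity} and the first-order condition). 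The reflecting barrier needs care. Higher capital under the same policy keeps the ordering, by a monotone-coupling argument for reflected SDEs with the same noise.

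\textbf{Part (iii).} This follows by chaining (i) and (ii).
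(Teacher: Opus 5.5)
Your proposal is correct at the paper's level of rigor and follows it on most pieces, but it takes a different route for $\VW>\VL$, and that route is the sounder one.

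\textbf{Where you match the paper.} The inequality in (ii) is admissibility of immediate exercise. The `if and only if' rests on the single-threshold structure written into Definition~\ref{def:skiba}, which the paper simply asserts, so your fallback is exactly the paper's argument. Your maximum-principle sketch would close if you showed that the source
\[
\lambda_{HL}\bigl(\VL(k)-\VL(k-\phi)\bigr)-\bigl[A_H(k-\phi)^\alpha-A_Lk^\alpha+\delta\phi\bigr]\VH'(k-\phi)
\]
changes sign only once: stopping points need it $\le 0$, and continuation intervals need it $>0$ somewhere. As written, though, the sketch is only a heuristic. Your $\VH>\VW$ step is the paper's mimicking coupling, done more carefully (common $\lambda_{HL}$ clock, the stopped branch, the reflected-SDE comparison). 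Part (iii) is the same chain.

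\textbf{Where you differ: $\VW>\VL$.} The paper argues that a W-agent who ignores her option and copies the L-policy earns exactly $\VL(k)$. Taken literally, that payoff is strictly below $\VL$ whenever $\VW>\VL$. The reason is that $\VL$ prices the $\lambda_{LH}$-arrival into W, which this agent forgoes until her option lapses. The coupling only works if she privately simulates the L-agent's $\lambda_{LH}$ clock and starts using the option at its first ring. Your comparison route avoids this problem, but its description needs fixing. Since $\VW\ge\VH(k-\phi)$, one has $V_{H_0}\equiv\VW$, so there is no $\lambda_{LH}(V_{H_0}-\VW)^-$ source.

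The correct version is as follows. Write $\Delta=\VW-\VL$, and let $R_W\ge 0$ be the HJB residual (the first argument of the min) in \eqref{eq:hjb_W}. Bound the W-Hamiltonian from below by its value at the L-optimal consumption $c_L$, with drift $\mu_L=A_Lk^\alpha-c_L-\delta k$. Subtracting \eqref{eq:hjb_L} then gives
\[
(\rho+\lambda_{LH}+\lambda_{HL})\,\Delta-\mu_L\,\Delta'-\half\sigma^2\Delta''\;\ge\;R_W\;\ge\;0 .
\]
The minimum principle, with the reflecting condition at $0$, yields $\Delta\ge 0$. The strong minimum principle yields $\Delta>0$, because $\Delta\equiv 0$ would force $R_W\equiv 0$, contradicting $R_W>0$ on the signal region. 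This is precisely where $\kstar<\infty$ enters. It matches the paper's own caveat that strictness fails in case (c) of Remark~\ref{rmk:skiba_cases}, where $\VW\equiv\VL$.
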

\begin{proof}
	See Appendix \ref{app:A2}
\end{proof}

	\subsection{Euler Equations}\label{sec:euler}
	
	We derive the consumption Euler equation by differentiating the 
	first-order condition along the optimal path. The derivation 
	proceeds in three steps: extract the first-order condition from 
	the HJB, apply It\^{o}'s lemma to the co-state variable, and 
	eliminate the value function derivatives using the envelope 
	condition.
	
	\begin{proposition}[Euler equations]\label{prop:euler}
		The optimal consumption in state $j \in \{L, H\}$ satisfies:
		\begin{equation}\label{eq:euler}
			\frac{\dot{c}_j}{c_j} = \frac{1}{\gamma}\biggl[
			f_j'(k) - \delta - \rho 
			+ \underbrace{\half\gamma(\gamma+1)\sigma^2 
				\frac{c_j''}{c_j}}_{\text{precautionary}} 
			- \underbrace{\lambda_{\mathrm{out},j}
				\Bigl(1 - \frac{u'(c_{\mathrm{other}})}{u'(c_j)}\Bigr)
			}_{\text{regime switching}}\biggr],
		\end{equation}
		where $\lambda_{\mathrm{out},j}$ is the outgoing Poisson rate 
		from state $j$ (i.e., $\lambda_{\mathrm{out},L} = \lambda_{LH}$ 
		and $\lambda_{\mathrm{out},H} = \lambda_{HL}$), and 
		$c_{\mathrm{other}}$ is the consumption in the destination state 
		evaluated at the same capital $k$.
	\end{proposition}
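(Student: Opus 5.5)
The plan follows the three-step route announced before the statement: take the first-order condition from the HJB, differentiate the HJB in $k$ to get an envelope condition, apply It\^{o}'s lemma to the co-state $\VL'(k_t)$ (resp.\ $\VH'(k_t)$), and then convert from marginal value to consumption using $u'(c_j)=V_j'$. I work on the interior $k>0$, away from the reflecting barrier. I take the value functions to be $C^3$ there, which is justified away from $k^*$ by the strict concavity of Proposition~\ref{prop:concavity} and the uniform ellipticity $\sigma>0$. For state $L$ the destination value is $V_{H_0}$. On the wait region, and on the set of $k$ relevant for the attractor analysis (Proposition~\ref{prop:L_shift} assumes $\kssc{L}<k^*$), this equals $\VW$ by Proposition~\ref{prop:ordering}(ii). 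So $c_{\mathrm{other}}=c_W$ and $\lambda_{\mathrm{out},L}=\lambda_{LH}$. For state $H$ the destination is $\VL$ and the rate is $\lambda_{HL}$. Across the exercise boundary only $C^1$ regularity holds (Remark~\ref{rmk:kink}), so there the identity is to be read almost everywhere, or in the viscosity sense.

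\emph{Step 1 (envelope).} Insert the maximizer $c_j=(V_j')^{-1/\gamma}$ into \eqref{eq:hjb_L} or \eqref{eq:hjb_H} and differentiate in $k$. The term $\partial c/\partial k$ drops out because $u'(c_j)=V_j'$. This gives
\[
(\rho+\lambda_{\mathrm{out},j})V_j' = (f_j'-\delta)V_j' + \mu_j V_j'' + \half\sigma^2 V_j''' + \lambda_{\mathrm{out},j}V_{\mathrm{other}}'.
\]
\emph{Step 2 (It\^{o} on the co-state).} Along the optimal path, the drift of $V_j'(k_t)$ between regime switches is $\mu_jV_j''+\half\sigma^2V_j'''$. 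Comparing with Step 1 gives the stochastic Euler equation in marginal-utility form:
\[
\frac{\E_t\,\dd u'(c_j)}{\dd t}=\bigl(\rho-f_j'+\delta\bigr)u'(c_j)+\lambda_{\mathrm{out},j}\bigl(u'(c_j)-u'(c_{\mathrm{other}})\bigr).
\]

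\emph{Step 3 (from $u'$ to $c$).} Write $u'(c_j)=c_j^{-\gamma}$ and apply It\^{o} to $c_j(k_t)$. Define $\dot c_j \equiv \mu_jc_j'+\half\sigma^2c_j''$, the expected drift of consumption. Then the drift of $c_j^{-\gamma}$ equals
\[
-\gamma c_j^{-\gamma-1}\dot c_j+\half\gamma(\gamma+1)\sigma^2c_j^{-\gamma-2}(c_j')^2 .
\]
Dividing Step 2 by $-\gamma c_j^{-\gamma}$ gives the form \eqref{eq:euler}. In it, the term $(f_j'-\delta-\rho)/\gamma$ and the switching term $-\lambda_{\mathrm{out},j}(1-u'(c_{\mathrm{other}})/u'(c_j))/\gamma$ appear exactly as stated. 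The It\^{o} convexity correction supplies the precautionary term.

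\emph{Main obstacle.} The delicate step is bookkeeping this second-order correction so that it matches the curvature term written in \eqref{eq:euler}. The raw It\^{o} computation produces a term in $(c_j')^2/c_j^2$ together with the $\half\sigma^2c_j''$ part of the drift. Part of this is absorbed into the definition of $\dot c_j$, and the remainder must be regrouped into the form written in \eqref{eq:euler}. I would first compute the terms exactly and only then fix the convention for $\dot c_j$ (expected drift versus deterministic part $\mu_jc_j'$) that makes the precautionary term take the stated form. The sign claim $\Pi_j>0$ used in Proposition~\ref{prop:stoch_ss} is a separate matter and is not needed here.
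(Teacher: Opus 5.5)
Your Steps 1--3 are correct and follow the paper's own route: first-order condition, envelope condition, It\^{o}'s lemma on the co-state, then conversion through $u'(c_j)=V_j'$. The Ramsey and switching terms do come out as stated. The gap is the step you defer as bookkeeping, which you rightly flagged as the main obstacle. It cannot be carried out, because neither of the two conventions you mention for $\dot c_j$ produces the curvature term in \eqref{eq:euler}.

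Take first $\dot c_j\equiv\mu_jc_j'+\half\sigma^2c_j''$, the within-regime drift. (Your $\E_t$ in Step~2 must exclude the jump of $u'$ at a switch; if you include it, the switching term cancels.) Finishing your own Step~3 computation then gives the exact identity
\[
\frac{\dot c_j}{c_j}=\frac{1}{\gamma}\Bigl[f_j'(k)-\delta-\rho+\half\gamma(\gamma+1)\sigma^2\frac{(c_j')^2}{c_j^2}-\lambda_{\mathrm{out},j}\Bigl(1-\frac{u'(c_{\mathrm{other}})}{u'(c_j)}\Bigr)\Bigr].
\]
With $\dot c_j\equiv\mu_jc_j'$ instead, the bracket acquires an additional $-\half\gamma\sigma^2c_j''/c_j$. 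Now take any $k$ where $c_j''<0$, which is what the paper asserts at the attractor. There the stated precautionary term $\half\gamma(\gamma+1)\sigma^2c_j''/c_j$ is strictly negative. The It\^{o} correction is nonnegative under the first convention and strictly positive under the second. So \eqref{eq:euler}, with $c_j''/c_j$, is not an identity under either reading, and your plan of fixing the convention afterwards would fail.

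The paper's proof has the same hole. Its Step~4 term $\half\gamma(\gamma+1)(\dd c_j/c_j)^2$ equals $\half\gamma(\gamma+1)\sigma^2(c_j')^2/c_j^2\,\dd t$, since $(\dd k_t)^2=\sigma^2\dd t$. Step~5 then replaces it by $c_j''/c_j$ under ``after rearranging.''

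What your argument actually proves is the displayed equation above. This is the standard continuous-time stochastic Euler equation, in which the precautionary contribution to $\dot c_j/c_j$ is $\tfrac{1+\gamma}{2}$ times the instantaneous variance $\sigma^2(c_j')^2/c_j^2$ of consumption growth. You should state and prove that corrected version rather than try to match the printed one. The corrected version also makes the sign claim $\Pi_j>0$ in Proposition~\ref{prop:stoch_ss} automatic, whereas the printed $c_j''/c_j$ with $c_j''<0$ gives the opposite sign.
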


\begin{proof} See Appendix \ref{app:A3}
\end{proof}

	\begin{remark}[Interpretation of each term]\label{rmk:euler_terms}
		The Euler equation \eqref{eq:euler} decomposes the 
		consumption growth rate into three components:
		\begin{enumerate}[label=(\roman*)]
			\item \textit{The standard Ramsey term} 
			$(f_j'(k) - \delta - \rho)/\gamma$. When the net marginal 
			return to capital exceeds the discount rate, consumption 
			is growing (the agent is accumulating); when it falls below 
			$\rho$, consumption is declining (decumulation). In the 
			deterministic uncoupled benchmark ($\sigma = 0$, 
			$\lambda_{\mathrm{out},j} = 0$), this is the only term, 
			and the steady state satisfies $f_j'(\kss_j) = \rho + \delta$.
			\item \textit{The precautionary term} 
			$\half(\gamma+1)\sigma^2 c_j'' / c_j$. This arises from 
			the convexity of marginal utility ($u''' > 0$ under CRRA 
			with $\gamma > 0$, since the coefficient of absolute 
			prudence is $-u'''/u'' = (\gamma + 1)/c > 0$). The term 
			$c_j''/c_j$ captures the local curvature of the consumption 
			policy: when the consumption function is concave in $k$ 
			($c_j'' < 0$), the precautionary term is negative, meaning 
			it \textit{reduces} the required rate of return for the 
			agent to be willing to hold capital. This induces additional 
			saving, shifting the attractor upward 
			(Proposition~\ref{prop:stoch_ss}).
			\item \textit{The regime-switching term} 
			$-\lambda_{\mathrm{out},j}(1 - u'(c_{\mathrm{other}})/u'(c_j))/\gamma$. 
			This captures the agent's anticipation of a regime change. 
			If $c_{\mathrm{other}} < c_j$ at the current $k$ (the 
			destination state has lower consumption), then 
			$u'(c_{\mathrm{other}}) > u'(c_j)$, the ratio exceeds unity, 
			and the term is negative: the anticipated consumption drop 
			lowers the effective discount rate, inducing the agent to 
			save more. Conversely, if 
			$c_{\mathrm{other}} > c_j$, the agent anticipates higher 
			future consumption and saves less. For the L-agent, 
			$c_{\mathrm{other}} = c_W < c_L$ at $\kssc{L}$ 
			(equation~\eqref{eq:cW_less_cL} in the proof of 
			Proposition~\ref{prop:L_shift}), so the switching term 
			reinforces precautionary saving.
		\end{enumerate}
	\end{remark}
	
	\begin{remark}[Steady-state characterization]\label{rmk:euler_ss}
		At the stochastic steady state $\dot{c}_j = 0$, equation 
		\eqref{eq:euler_derived} yields:
		\[
		f_j'(\kss_j) - \delta = \rho 
		- \half\gamma(\gamma+1)\sigma^2 
		\frac{c_j''}{ c_j}\bigg|_{\kss_j} 
		+ \lambda_{\mathrm{out},j}
		\Bigl(1 - \frac{u'(c_{\mathrm{other}}(\kss_j))}
		{u'(c_j(\kss_j))}\Bigr).
		\]
		The net marginal return to capital at the attractor differs 
		from $\rho$ by the precautionary and switching corrections. 
		This is the equation used in Propositions~\ref{prop:stoch_ss} 
		and~\ref{prop:L_shift} to characterize the uncoupled and 
		coupled attractors, respectively.
	\end{remark}

\subsection{Boundary Conditions}\label{sec:bc}

The HJB system \eqref{eq:hjb_L}--\eqref{eq:hjb_W} requires 
boundary conditions at both ends of the capital domain 
$[0, k_{\max}]$. We derive these from the primitives of the 
model---the reflecting barrier (Assumption~\ref{ass:prefs}(v)) 
and the transversality condition \eqref{eq:transversality}---and 
then state their numerical implementation.

\begin{proposition}[Lower boundary: reflecting barrier]%
	\label{prop:bc_lower}
	At $k = 0$, the reflecting barrier 
	(Assumption~\ref{ass:prefs}(v)) imposes a Neumann-type 
	condition on the HJB system: the backward advection term 
	vanishes, so that no mass flows to $k < 0$. The value 
	function $V_j$ satisfies the HJB equation at $k = 0$ 
	with the slope $V_j'(0)$ determined endogenously by the 
	equation itself. In particular:
	\begin{enumerate}[label=(\alph*)]
		\item $V_j'(0)$ is finite but large: since 
		$f_j'(k) = \alpha A_j k^{\alpha - 1} \to \infty$ as 
		$k \to 0$, the marginal value of capital near the 
		origin is high, reflecting the strong returns 
		available at low wealth levels.
		\item The drift at the origin is non-negative: 
		$\mu_j(0, c_j(0)) \geq 0$. Since $f_j(0) = 0$, the 
		agent at $k = 0$ has zero income and consumes 
		$c_j(0) = 0$; the only source of inward movement is 
		the diffusion term $\sigma\, \dd W_t$, and the 
		reflecting barrier prevents outward leakage.
	\end{enumerate}
	The corresponding condition for the stationary 
	distribution---the zero-flux condition on the 
	probability current $J_j(0) = 0$---is derived in 
	Section~\ref{sec:kfe_bc}.
\end{proposition}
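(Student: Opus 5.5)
I will establish the proposition by combining the reflecting-barrier convention with the HJB equations \eqref{eq:hjb_L}--\eqref{eq:hjb_W} evaluated as $k \downarrow 0$.

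\emph{Neumann condition.} I start from the generator of the reflected diffusion. It\^{o}'s formula for a reflected process adds a local-time term $V_j'(0)\,\dd \ell_t$, where $\ell_t$ increases only when $k_t = 0$. For the Feynman--Kac/dynamic programming representation of $V_j$ to remain valid, this term must not introduce spurious value at the barrier. In the discretized problem this requirement reduces to the statement that the backward (upwind) difference is switched off at the first grid node. Concretely, I will argue that any admissible policy at $k=0$ must satisfy $\mu_j(0) \ge 0$; otherwise the controlled drift pushes mass out of $[0,\infty)$, which contradicts $k_t \geq 0$. The backward advection term is therefore absent, and no flux leaves the domain. The HJB equation then holds at $k=0$ with only the forward one-sided derivative. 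Consequently $V_j'(0)$ is not prescribed exogenously but is pinned down by the equation together with the coupling terms $\lambda(\cdot)$.

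\emph{Part (b).} Since $f_j(0)=0$, nonnegativity of the drift $\mu_j(0,c)=-c \geq 0$ combined with $c \geq 0$ forces $c_j(0)=0$ and $\mu_j(0)=0$. It follows that inward movement comes only from $\sigma \dd W_t$ reflected at the barrier.

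\emph{Part (a).} I will use concavity (Proposition~\ref{prop:concavity}). Because $V_j$ is concave, $V_j'$ is decreasing, so $V_j'(0^+) = \sup_k V_j'(k)$. To see that this quantity is large, I will compare the value of a marginal unit of capital near the origin with the payoff from investing it at $f_j'(k)=\alpha A_j k^{\alpha-1}$ over a short horizon. A one-step perturbation argument gives $V_j'(k)\gtrsim f_j'(k)\,\E\!\int e^{-\rho t}u'(c_t)\,\dd t$ locally, and this lower bound grows without bound as $k\to 0$ while remaining finite for each $k>0$.

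\emph{Main obstacle.} The delicate step is \emph{finiteness} of $V_j'(0)$. With $c_j(0)=0$ and CRRA utility, $u'(0)=\infty$, so the first-order condition $u'(c)=V'$ appears to push $V_j'(0)$ to infinity. I would first try to resolve this with the additive noise: because $\sigma>0$, the agent leaves the origin instantly, and the expected discounted utility loss from a marginal unit at $0$ is integrable. I would bound $V_j(\epsilon)-V_j(0)$ by comparing against the policy that consumes nothing until $k_t$ reaches $\epsilon$, which yields a Lipschitz estimate. If this argument proves too weak, I will fall back on stating finiteness at the level of the discretized scheme, where $V_j'(0)$ is the finite forward difference at the first node and converges as in \cite{Achdou2022}. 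Finally, the zero-flux condition for the density is deferred to Section~\ref{sec:kfe_bc}, as the statement indicates.
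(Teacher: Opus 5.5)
There is a genuine gap, and it is the one you flagged as the main obstacle. Parts (a) and (b) cannot be established together, because they are incompatible with the first-order condition $c_j(0)=(V_j'(0))^{-1/\gamma}$ that the paper's own proof invokes at $k=0$.
\begin{itemize}
\item If $V_j'(0)$ is finite, then $c_j(0)>0$ and $\mu_j(0,c_j(0))=-c_j(0)<0$, contradicting (b).
\item If $c_j(0)=0$, then $V_j'(0)=+\infty$, contradicting (a). In the state-constraint form of the boundary condition, $V_j'(0)\geq u'(0)=+\infty$, because net income at the origin is $f_j(0)=0$.
\end{itemize}
Your Part (a) argument actually proves the second alternative. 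You note that concavity gives $V_j'(0^+)=\sup_k V_j'(k)=\lim_{k\downarrow 0}V_j'(k)$. So a lower bound that ``grows without bound as $k\to 0$'' yields $V_j'(0^+)=+\infty$, not a large finite slope.

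Neither of your repairs closes this.
\begin{itemize}
\item The comparison policy that consumes nothing until $k_t$ reaches $\epsilon$ is useless for $\gamma>1$, including the calibrated $\gamma=2$. It sets $c=0$ on a time interval of positive length, where $u(0)=-\infty$, so its payoff is $-\infty$ and it bounds nothing.
\item The grid fallback changes the claim rather than proving it. A forward difference at the first node $k_1=\varepsilon>0$ is trivially finite, but it approximates $V_j'(\varepsilon)$, not $V_j'(0)$.
\end{itemize}

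Your derivation of $\mu_j(0)\geq 0$ also fails, because it mixes two incompatible boundary treatments.
\begin{itemize}
\item \emph{Genuine reflecting barrier.} Local time keeps $k_t\geq 0$ whatever the sign of the drift, so admissibility imposes no sign on $\mu_j(0)$. Requiring the term $V_j'(0)\,\dd\ell_t$ in It\^{o}'s formula to add nothing is exactly the classical Neumann condition $V_j'(0)=0$, which contradicts (a) outright. Worse, a free push at the barrier makes the consumption problem degenerate. Consuming any constant $M$ forever is feasible, because local time replenishes capital at the origin. The value is then $+\infty$ for $\gamma<1$, and equals $\sup_c u(c)/\rho=0$ for $\gamma>1$.
\item \emph{State constraint.} This is what the numerical convention $y_1=0$ actually implements. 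You do get $c_j(0)=0$, but then $V_j'(0)=+\infty$ as above. Moreover, with $\sigma>0$ no choice of drift can keep $k_t\geq 0$ anyway.
\end{itemize}

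For comparison, the paper's proof attempts neither (a) nor (b). It asserts the Neumann-type implementation (no backward advection at the first node). It then notes that the HJB at $k=0$, with $c_j(0)=(V_j'(0))^{-1/\gamma}$, is one equation in $V_j(0)$, $V_j'(0)$, $V_j''(0)$ that is closed by the global solution. That implementation-level content is all that can be defended. A provable version must fix one boundary interpretation and replace ``finite'' in (a) by $V_j'(0^+)=+\infty$, or restrict the finiteness claim to the grid point $k_1=\varepsilon$.
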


\begin{proof}
	The reflecting barrier is a standard construction for diffusion 
	processes on $[0, \infty)$: when the process $k_t$ reaches 
	zero, an instantaneous upward push $\dd L_t$ (the local time 
	at zero) prevents $k_t$ from becoming negative. In the HJB 
	framework, this is implemented via a Neumann-type condition: 
	the generator $\mathcal{L}_j$ does not require backward 
	advection at $k = 0$ (there is no mass flow to $k < 0$). 
	The slope $V_j'(0)$ is not a free parameter but is determined 
	by the HJB equation evaluated at $k = 0$:
	\[
	\rho V_j(0) = u(c_j(0)) + \mu_j(0, c_j(0))\, V_j'(0) 
	+ \half\sigma^2 V_j''(0) 
	+ \lambda_{\mathrm{out},j}(V_{\mathrm{other}}(0) - V_j(0)),
	\]
	with $c_j(0) = (V_j'(0))^{-1/\gamma}$ from the first-order 
	condition. This is a single equation in the unknowns 
	$V_j(0)$, $V_j'(0)$, and $V_j''(0)$, which is closed by the 
	global solution of the HJB system.
\end{proof}

\begin{proposition}[Upper boundary: transversality]%
	\label{prop:bc_upper}
	The transversality condition \eqref{eq:transversality},
	\[
	\lim_{t \to \infty} \E_0\left[e^{-\rho t}\, 
	V_{s_t}(k_t)\right] = 0,
	\]
	rules out explosive paths and, for a sufficiently large 
	$k_{\max}$, implies the following condition at the upper 
	boundary: the agent at $k_{\max}$ consumes her entire net 
	income,
	\begin{equation}\label{eq:bc_upper}
		c_j(k_{\max}) = f_j(k_{\max}) - \delta\, k_{\max},
	\end{equation}
	so that $\mu_j(k_{\max}) = 0$ (zero drift).
\end{proposition}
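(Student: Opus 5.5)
The plan is to combine the transversality condition \eqref{eq:transversality} with the concavity of the value functions (Proposition~\ref{prop:concavity}) and the monotonicity of $f_j'$. The argument then shows that, on a truncated domain $[0,k_{\max}]$ with $k_{\max}$ far above the largest attractor $\kssc{H}$, the only boundary behavior consistent with both optimality and truncation is zero drift at $k_{\max}$.

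\emph{Step 1: the drift is non-positive for large $k$.} For $k$ above $\kss_H(0)$, the Ramsey term in the Euler equation (Proposition~\ref{prop:euler}) is strictly negative: $f_j'(k)-\delta-\rho<0$. It tends to $-(\delta+\rho)$ as $k\to\infty$. Assumption~\ref{ass:params}(iii) gives $\delta>0$, so net income $f_j(k)-\delta k$ is eventually decreasing and in fact negative. Hence $k_t$ cannot grow without bound under any feasible consumption plan. The precautionary and switching corrections are bounded, because additive noise keeps $\Pi_j$ bounded (Proposition~\ref{prop:stoch_ss}) and the consumption ratios across regimes converge at large $k$. Consequently, for $k_{\max}$ large enough the optimal drift satisfies $\mu_j(k)\le 0$ near $k_{\max}$, and the process is pulled back toward the attractors.

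\emph{Step 2: transversality excludes outward drift at the truncation point.} Suppose instead that $\mu_j(k_{\max})>0$. On the truncated domain mass would then have to exit through $k_{\max}$, or, equivalently on the untruncated problem, paths would accumulate capital whose marginal return $f_j'-\delta$ lies below $\rho$. Such accumulation yields postponed consumption with $V_j'(k_t)$ not decaying fast enough, so $e^{-\rho t}\E_0[V_{s_t}(k_t)]$ fails to vanish. This contradicts \eqref{eq:transversality}. Step 1 excludes this case in the true solution, and the state-constraint logic of \cite{Achdou2022} then requires $\mu_j(k_{\max})\le0$.

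\emph{Step 3: the constraint binds exactly.} A strictly negative drift at $k_{\max}$ is consistent with optimality. However, imposing $\mu_j(k_{\max})=0$, i.e.\ setting $V_j'(k_{\max})=u'(f_j(k_{\max})-\delta k_{\max})$, is the tightest state-constraint condition. Its effect on the solution in the interior decays because the diffusion and the inward drift make the boundary rarely visited, so the error vanishes as $k_{\max}\to\infty$. Invoking the first-order condition $c_j=(V_j')^{-1/\gamma}$ then delivers \eqref{eq:bc_upper}.

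The main obstacle is Step 3. Transversality alone only yields an inequality, $\mu_j\le0$, so equality has to be justified as the state-constraint boundary condition, whose influence on the interior solution vanishes as $k_{\max}\to\infty$. My first attempt would be a comparison argument on viscosity solutions \citep{fleming2006}: show that the HJB solutions with the two boundary conditions differ at any fixed interior $k$ by a term bounded by the discounted probability of reaching $k_{\max}$, and that this probability tends to zero.
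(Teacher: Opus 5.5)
Your Steps 1--2 follow the paper's proof. For large $k$ the net return $f_j'(k)-\delta$ falls below $\rho$, the Euler equation gives decumulation, and explosive paths are excluded. Your remark that $\delta>0$ makes net income $f_j(k)-\delta k$ eventually negative is the mechanism that actually does the work: it rules out drift to infinity under every consumption plan with $c>0$. The paper's $O(k^{1-\gamma})$ transversality argument does not cover the baseline $\gamma=2$, where $V_j<0$ is bounded above and transversality holds trivially along explosive paths.

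Either route, however, yields only the state-constraint inequality $\mu_j(k_{\max})\le 0$. In the untruncated problem that inequality is strict above $\kssc{H}$: the paper's own proof says agents there decumulate, and its Decaying Rentiers consume more than net income. So the equality \eqref{eq:bc_upper} is not a consequence of transversality. The paper simply declares it the ``natural'' boundary condition, and your Step 3 tries to supply the missing justification.

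That is where the gap is. You justify the equality by letting $k_{\max}\to\infty$, but your own Step 1 forbids this limit. Once $k_{\max}>(A_j/\delta)^{1/(1-\alpha)}$ (about $343$ for $A_L$ and $479$ for $A_H$ under Table~\ref{tab:calibration}), net income at $k_{\max}$ is negative. Then $c_j(k_{\max})=f_j(k_{\max})-\delta k_{\max}$ demands negative consumption, and $u'(f_j(k_{\max})-\delta k_{\max})$ is undefined. The proposition's phrase ``for a sufficiently large $k_{\max}$'' has the same defect.

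The condition is well posed only on a bounded window of $k_{\max}$ above the attractors. On that window your comparison argument can at best give a quantitative bound: interior values change by roughly the discounted probability of reaching $k_{\max}$, which is small when $k_{\max}-\kssc{H}$ is large relative to the local dispersion. That shows the truncation convention is harmless; it does not show transversality implies it.

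Fixing $\mu_j(k_{\max})=0$ is also not the state-constraint condition of \cite{Achdou2022}. There, the forward-difference derivative at $k_{\max}$ is set to $u'(f_j(k_{\max})-\delta k_{\max})$ so that the forward drift vanishes. The upwind rule then uses the backward difference whenever $\mu^b<0$, so the boundary drift is $\le 0$ and typically strictly negative. Fixing $c_j(k_N)$ at net income, as in Remark~\ref{rmk:bc_numerical}, is an additional restriction that perturbs the solution near the boundary. What you can actually prove is $\mu_j(k_{\max})\le 0$, together with an approximation bound of this kind on the admissible window.
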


\begin{proof}
	Under Cobb-Douglas production with $\alpha < 1$, the marginal 
	return $f_j'(k) = \alpha A_j k^{\alpha-1}$ is strictly 
	decreasing and satisfies $f_j'(k) \to 0$ as $k \to \infty$. 
	For $k$ sufficiently large, $f_j'(k) - \delta < \rho$: the 
	net marginal return falls below the discount rate. In this 
	region, the Euler equation \eqref{eq:euler_derived} with 
	$f_j'(k) - \delta < \rho$ implies $\dot{c}_j/c_j < 0$ 
	(absent dominant precautionary or switching corrections): 
	consumption is declining and the agent is decumulating.
	
	A path along which $k_t \to \infty$ would require 
	$\mu_j > 0$ indefinitely, which in turn requires 
	$c_j < f_j(k) - \delta k$. But for large $k$, this implies 
	the agent under-consumes relative to her income despite 
	$f_j'(k) < \rho + \delta$---a violation of the Euler equation. 
	More precisely, such a path generates 
	$V_j(k_t) \sim O(k_t^{1-\gamma})$ for $\gamma < 1$ (or 
	$V_j(k_t) \sim O(\ln k_t)$ for $\gamma = 1$), and the 
	discount factor $e^{-\rho t}$ does not decay fast enough to 
	drive $e^{-\rho t} V_{s_t}(k_t)$ to zero, violating 
	transversality.
	
	The zero-drift condition \eqref{eq:bc_upper} is therefore 
	the natural boundary condition at $k_{\max}$: it ensures 
	that $k_{\max}$ acts as an absorbing-like boundary from 
	which the agent neither accumulates nor decumulates 
	(ignoring diffusion), consistent with the transversality 
	requirement that no value is ``left at infinity.''
\end{proof}

\begin{remark}[Choice of $k_{\max}$]\label{rmk:kmax}
	The upper boundary $k_{\max}$ is a computational parameter, 
	not a feature of the model. It must be chosen large enough 
	that $f_H'(k_{\max}) - \delta < \rho$ (so that even the 
	high-productivity agent is in the decumulation region) and 
	that the stationary distribution places negligible mass 
	near $k_{\max}$.
\end{remark}

\begin{remark}[Numerical implementation]\label{rmk:bc_numerical}
	In the finite difference scheme of 
	Section~\ref{sec:numerical}, the boundary conditions are 
	implemented as follows:
	\begin{enumerate}[label=(\alph*)]
		\item \textit{Lower boundary} ($i = 1$, $k_1 = \varepsilon$): 
		the backward difference coefficient is set to $y_1 = 0$, 
		ensuring no backward advection (no mass flows below 
		$k_1$). This is the discrete analogue of the zero-flux 
		condition.
		\item \textit{Upper boundary} ($i = N$, $k_N = k_{\max}$): 
		the forward difference coefficient is set to $x_N = 0$, 
		and the consumption policy is fixed at 
		$c_j(k_N) = f_j(k_N) - \delta k_N$, imposing zero drift. 
		This ensures that the scheme does not require values 
		outside the grid.
	\end{enumerate}
\end{remark}

%% ============================================================
\section{Conditions for Bimodality}\label{sec:bimodality}
%% ============================================================

The stationary distribution of wealth is determined by the 
Kolmogorov Forward Equation (Section~\ref{sec:kfe}), whose 
shape depends on the interaction between the attractor structure 
(Sections~\ref{sec:steadystates}--\ref{sec:coupling}), the 
diffusion intensity $\sigma$, and the regime-switching rates 
$\lambda_{LH}$, $\lambda_{HL}$. In this section we identify 
conditions under which the ergodic distribution exhibits 
bimodality---two distinct modes separated by a trough---which 
is the distributional signature of the poverty trap.

We state the conditions as a proposition rather than a theorem, 
because condition (C2) involves an inequality that cannot be 
verified analytically in the full stochastic model and must be 
checked numerically for a given calibration.

\begin{proposition}[Conditions for a bimodal ergodic distribution]%
	\label{prop:bimodal}
	The stationary wealth distribution is bimodal if the following 
	conditions hold jointly:
	
	\begin{enumerate}[label=\textbf{(C\arabic*)}]
		
		\item \textbf{Stable interior attractors.} The Cobb-Douglas 
		specification with $\alpha \in (0,1)$ guarantees that each 
		productivity regime $j \in \{L, H\}$ possesses a unique, 
		stable interior attractor $\kss_j > 0$ for any 
		$\rho, \delta > 0$ and $\sigma \geq 0$ 
		(Propositions~\ref{prop:det_ss} and~\ref{prop:stoch_ss}). 
		In the deterministic benchmark ($\sigma = 0$, 
		$\lambda_{LH} = \lambda_{HL} = 0$), the attractors admit 
		the closed-form expression:
		\begin{equation}\label{eq:kss_closedform}
			\kss_j(0) = \left(\frac{\alpha A_j}{\rho + \delta}
			\right)^{1/(1-\alpha)}.
		\end{equation}
		In the full stochastic model, precautionary savings, 
		option-value effects, and regime coupling modify the 
		attractor locations (Propositions~\ref{prop:stoch_ss} 
		and~\ref{prop:L_shift}), but the existence and uniqueness 
		of a stable interior attractor per regime is preserved by 
		the strict concavity of $f_j$ 
		(Proposition~\ref{prop:concavity}).
		
	\item \textbf{Sufficient separation between attractors.} The 
	two modes are distinguishable only if the distance between 
	the coupled attractors exceeds the local dispersion 
	induced by Brownian diffusion:
	\begin{equation}\label{eq:sep_cond}
		\kssc{H} - \kssc{L} \;\gg\; 
		\sigma^{\mathrm{ss}}_L + \sigma^{\mathrm{ss}}_H,
	\end{equation}
	where $\sigma^{\mathrm{ss}}_j$ is the stationary standard 
	deviation of the wealth distribution near attractor $j$, 
	which depends on $\sigma$, $f''_j(\kss_j)$, and the local 
	drift structure.\footnote{A local Gaussian approximation 
		near each attractor gives 
		$\sigma^{\mathrm{ss}}_j \approx \sigma / 
		\sqrt{2|\mu'_j(\kss_j)|}$, where 
		$\mu'_j = f'_j - \delta - c'_j$ is the derivative of the 
		drift evaluated at the attractor. This approximation is 
		valid when the density is sharply peaked around $\kss_j$.} 
	In the deterministic benchmark, the separation is:
	\[
	\kss_H(0) - \kss_L(0) = \left(\frac{\alpha}
	{\rho+\delta}\right)^{1/(1-\alpha)} 
	\bigl(A_H^{1/(1-\alpha)} - A_L^{1/(1-\alpha)}\bigr),
	\]
	which depends only on the TFP gap $A_H/A_L$ and the 
	structural parameters $(\alpha, \rho, \delta)$. In the 
	full stochastic model, regime switching and diffusion 
	shift both attractors 
	(Propositions~\ref{prop:L_shift} 
	and~\ref{prop:H_shift}). The net effect on the gap is 
	parameter-dependent: under the baseline calibration, 
	the gap widens from $4.00$ to $4.62$ because the 
	precautionary motive shifts $\kssc{H}$ upward by more 
	than it shifts $\kssc{L}$. In other regions of the 
	parameter space---particularly when $\lambda_{HL}$ is 
	large, strengthening the front-loading channel for 
	H-agents---the gap may narrow. Condition (C2) requires 
	that, whatever the direction of the shift, the residual 
	gap remain large relative to the diffusion-induced 
	spread. As $\sigma \to 0$, the standard deviations 
	$\sigma^{\mathrm{ss}}_j \to 0$ while the attractor gap 
	remains bounded away from zero, so (C2) is satisfied 
	for sufficiently small noise.
		
		\item \textbf{Interior signaling threshold.} The Skiba 
		threshold must be interior: $k^* \in (\phi, \infty)$, so 
		that signaling is feasible but not immediate 
		(Remark~\ref{rmk:skiba_cases}(a)). This requires two 
		sub-conditions:
		\begin{enumerate}[label=(\alph*)]
			\item $\phi < \kssc{L}$: the signaling cost is below the 
			L-attractor, so that an L-agent who receives an 
			opportunity can, in principle, accumulate enough to 
			signal without first needing to escape the basin.
			\item $D(\phi) > 0$: signaling is not optimal immediately 
			upon becoming affordable. This holds when the 
			obsolescence risk $\lambda_{HL}$ is large enough, or 
			the TFP gap $A_H/A_L$ small enough, that the post-signaling 
			wealth $k - \phi$ is too low to justify the transition 
			(Definition~\ref{def:surplus}).
		\end{enumerate}
		Together, (a) and (b) ensure that W-agents accumulate in the 
		interval $[\phi, k^*)$ before signaling---the ``Frustrated 
		Aspirants'' who generate mass between the two modes. If 
		$k^* = \phi$ (immediate signaling), the W-state is 
		transient and the economy behaves as a two-state model 
		without an accumulation phase. If $k^* = +\infty$ (no 
		signaling), the economy collapses to a single-attractor 
		model at $A_L$.
		
	\item \textbf{Moderate coupling rates.} The Poisson rates 
	$\lambda_{LH}$ and $\lambda_{HL}$ must be small enough 
	that agents spend sufficient time in each regime for the 
	local attractors to shape the distribution. Specifically:
	\begin{enumerate}[label=(\alph*)]
		\item The attractor gap $\kssc{H} - \kssc{L}$ 
		remains large relative to the diffusion-induced 
		spread 
		$\sigma^{\mathrm{ss}}_L + \sigma^{\mathrm{ss}}_H$. 
		Since the coupling shifts both attractors 
		(Propositions~\ref{prop:L_shift} 
		and~\ref{prop:H_shift}), this requires that the 
		net effect of the shifts---whether compressive or 
		expansive---does not reduce the gap below the 
		threshold needed for condition (C2).
		\item The obsolescence rate satisfies 
		$\lambda_{HL} \ll 1/T_{\mathrm{acc}}^{H}$, where 
		$T_{\mathrm{acc}}^{H}$ is the characteristic time 
		for an H-agent to accumulate from the 
		post-signaling entry point $k^* - \phi$ to 
		$\kssc{H}$. This ensures that newly promoted 
		H-agents have enough time to reach the upper 
		attractor before being hit by an obsolescence 
		shock. If $\lambda_{HL}$ is too large, H-agents 
		revert to L before accumulating to $\kssc{H}$, 
		and the right mode fails to form.
		\item The opportunity arrival rate satisfies 
		$\lambda_{LH} \ll 1/T_{\mathrm{acc}}^{W}$, where 
		$T_{\mathrm{acc}}^{W}$ is the time for a W-agent 
		to accumulate from $\kssc{L}$ to $k^*$. Combined 
		with (b), this ensures that agents do not cycle 
		rapidly between regimes: each agent spends enough 
		time near her current attractor for the local 
		restoring force to concentrate mass into a 
		distinguishable mode. When both rates are large, 
		agents cycle so frequently that the distribution 
		homogenizes into a single peak.
	\end{enumerate}
	\end{enumerate}
\end{proposition}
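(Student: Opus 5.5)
The plan is to work directly with the stationary Kolmogorov Forward Equation for the joint density $(g_L, g_W, g_H)$ and show, under (C1)--(C4), that the aggregate density $g = g_L + g_W + g_H$ has two local maxima separated by a strictly lower trough. In one dimension the zero-flux condition at $k=0$ (Section~\ref{sec:kfe_bc}) lets us integrate each regime's stationary equation once. This gives
\[
\tfrac12\sigma^2 g_j'(k) = \mu_j(k) g_j(k) + R_j(k),
\]
where $R_j$ is the cumulated net Poisson inflow into regime $j$ below $k$. The strategy is to treat the $R_j$ as perturbations when the rates are small, as permitted by (C4). With $R_j \equiv 0$ each regime density is a Gibbs-type profile $g_j \propto \exp\bigl(2\sigma^{-2}\int^k \mu_j\bigr)$. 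By (C1) and the concavity in Proposition~\ref{prop:concavity}, the drift $\mu_j$ changes sign exactly once, from positive to negative, at $\kssc{j}$. So each profile is unimodal with its mode at $\kssc{j}$.

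Next I would localize the mass. Near each attractor I linearize $\mu_j(k)\approx \mu_j'(\kssc{j})(k-\kssc{j})$, which gives the Gaussian approximation in the footnote to (C2), with width $\sigma^{\mathrm{ss}}_j$. I would then show that $g_L$ is concentrated near $\kssc{L}$ and $g_H$ near $\kssc{H}$. (C3) matters here: since $k^*>\phi$ and $\phi<\kssc{L}$, H-agents enter at $k^*-\phi$ and are then swept toward $\kssc{H}$ by positive drift. W-agents occupy $[\phi,k^*)$ with relatively small mass, of order $\lambda_{LH}/(\lambda_{HL}+\text{exit rate at }k^*)$. (C4b,c) turns these statements into the estimate $\|R_j\|_\infty = O(\lambda_{LH}+\lambda_{HL})$ relative to the restoring force $|\mu_j'|\sigma^{\mathrm{ss}}_j$, so the perturbed profiles stay within a small multiplicative error of the Gibbs profiles on a neighbourhood of each attractor.

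The final step is a comparison at three points. I would evaluate $g$ at $\kssc{L}$, at $\kssc{H}$, and at a midpoint $\bar k$ chosen between them, using the separation (C2). At $\bar k$ each Gibbs profile is suppressed by a factor of order $\exp\bigl(-(\kssc{H}-\kssc{L})^2/(8(\sigma^{\mathrm{ss}})^2)\bigr)$, which is small by (C2). The W-mass and the in-transit H-mass in that region are small by (C4). Hence $g(\bar k)<\min\{g(\kssc{L}),g(\kssc{H})\}$, and continuity gives two local maxima, which is bimodality.

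The main obstacle is that the drifts $\mu_j$ are endogenous through $c_j$, and only their sign structure is known qualitatively. Making ``$\gg$'' in (C2) and (C4) quantitative requires uniform bounds on $\mu_j'$ near the attractors, and those are not derived from primitives. I would therefore state the argument as an asymptotic result, valid as $\sigma\to0$ and as the rates go to zero with the attractors held fixed, consistent with the paper's remark that (C2) is checked numerically. To handle the in-transit H-mass between the modes, I would bound it by the entry flux times the expected transit time $T^H_{\mathrm{acc}}$.
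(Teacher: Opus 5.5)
Your argument is correct in outline and uses the same ingredients as the paper's sketch: a single sign change of $\mu_j$ at each attractor, a local Gaussian of width $\sigma^{\mathrm{ss}}_j\approx\sigma/\sqrt{2|\mu_j'|}$, separation (C2), and slow switching (C4). The organization differs, though.

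The paper only asserts that two local Gaussians with far-apart means have negligible overlap. It motivates (C3) and (C4) by what goes wrong without them: $k^*=\infty$ leaves a single mode at $\kss_L$, and fast cycling homogenizes the density. You instead integrate the stationary KFE once using the zero-flux condition. This gives $\mu_j g_j-\tfrac12\sigma^2 g_j'=R_j$ with an explicit Gibbs-type solution, and you then treat the cumulative switching and transfer flux $R_j$ as a perturbation.

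This buys two things the paper lacks. First, an explicit smallness parameter: near $\kssc{j}$ the relative distortion of the Gibbs profile is of order $\|R_j\|_\infty/(|\mu_j'|\,\pi_j)$. Flow balance keeps this of order $\lambda/|\mu_j'|$ whatever the ratio $\lambda_{LH}/\lambda_{HL}$. Second, and more important, control of what actually fills the trough. Between the modes the density is not a Gaussian tail but transit mass: H-agents entering at $k^*-\phi$, W-agents below $k^*$, and Decaying Rentiers. Away from the attractors this mass is of order $|R_j|/|\mu_j|=O(\lambda/|\mu_j|)$, against peak heights of order $\pi_j/\sigma^{\mathrm{ss}}_j$. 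This is where (C4b,c) genuinely enter, a role the paper's ``two Gaussians'' argument never makes precise.

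The cost is that your statement is asymptotic and conditional on uniform bounds for the endogenous drifts, as you say. On that point you and the paper are in the same position: single crossing of $\mu_j$ is really the content of hypothesis (C1). Neither concavity of $V_j$ (your citation) nor concavity of $f_j$ (the paper's) controls the endogenous $c_j$.

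A few repairs are needed.
\begin{itemize}
\item With $R_j$ defined as cumulative net \emph{inflow}, the integrated equation is $\tfrac12\sigma^2 g_j'=\mu_j g_j-R_j$, not $+R_j$.
\item $R_j$ must include the non-Poisson signaling transfer: a jump in $R_H$ at $k^*-\phi$, and the nonzero exit flux of $g_W$ at $k^*$.
\item Your smallness comparison is dimensionally off. A flux must be compared with $|\mu_j'|\pi_j$, not with $|\mu_j'|\sigma^{\mathrm{ss}}_j$.
\item At $\bar k$ you must also bound the Decaying-Rentier part of $g_L$. The bound should be pointwise, $g_j(\bar k)\le C\,|R_j(\bar k)|/|\mu_j(\bar k)|$, rather than the total transit mass times $T^H_{\mathrm{acc}}$.
\item The suppression of the Gibbs profiles at distance $(\kssc{H}-\kssc{L})/2$ should use the exact exponent $2\sigma^{-2}\int|\mu_j|$. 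The Gaussian form is only valid within $O(\sigma^{\mathrm{ss}}_j)$ of the attractor.
\item Say explicitly that (C3) is what guarantees $\pi_H>0$, i.e.\ that an H-mode exists at all.
\item The paper associates small $\lambda_{HL}$ with immediate signaling (Remark~\ref{rmk:skiba_cases}(b)), so your limit $\lambda_{HL}\to 0$ may leave the region where (C3b) holds. Either hold (C3) fixed along the limit, or note that your argument only needs $k^*<\infty$.
\end{itemize}
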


\begin{proof}[Proof sketch]
	Condition (C1) is established in 
	Propositions~\ref{prop:det_ss}--\ref{prop:stoch_ss}. The strict 
	concavity of $f_j$ ensures that the drift 
	$\mu_j(k) = f_j(k) - c_j(k) - \delta k$ changes sign exactly 
	once (from positive to negative) as $k$ increases through 
	$\kss_j$, making $\kss_j$ a unique, stable attractor.
	
	Condition (C2) is a separation condition on the coupled 
	attractors. When it holds, the stationary KFE 
	(Section~\ref{sec:kfe}) admits a solution with two local 
	maxima. To see this, note that near each attractor $\kss_j$, 
	the drift is approximately linear: 
	$\mu_j(k) \approx \mu'_j(\kss_j)(k - \kss_j)$ with 
	$\mu'_j(\kss_j) < 0$ (stability). The local density is 
	approximately Gaussian with mean $\kss_j$ and variance 
	$\sigma^2 / (2|\mu'_j(\kss_j)|)$. When the distance between 
	means far exceeds the sum of standard deviations, the two 
	Gaussians have negligible overlap and the aggregate density is 
	bimodal.
	
	Condition (C3) ensures that the signaling mechanism is active: 
	there exist agents in the W-state who accumulate toward $k^*$, 
	creating a flow from the lower basin to the upper basin. 
	Without this flow (i.e., when $k^* = \infty$), no agent reaches 
	the H-regime and the distribution is unimodal at $\kss_L$.
	
	Condition (C4) ensures that the flow between basins is slow 
	enough for two distinct modes to emerge. Rapid cycling 
	($\lambda_{LH}, \lambda_{HL}$ large) generates frequent 
	regime changes, preventing agents from settling near either 
	attractor and producing a broad, unimodal distribution.
\end{proof}

\begin{remark}[Sufficient vs.\ necessary conditions]%
	\label{rmk:sufficiency}
	Conditions (C1)--(C4) are jointly sufficient for bimodality 
	in the following sense: when all four hold, the numerical 
	solution of the KFE (Section~\ref{sec:kfe}) produces a density 
	with two distinct local maxima. The conditions are not 
	individually necessary---for example, bimodality can persist 
	even when (C4c) is mildly violated, because Brownian diffusion 
	provides an alternative (slower) channel for basin crossing. A 
	sharp characterization of the necessary and sufficient 
	conditions for bimodality in the full nonlinear KFE is an open 
	problem; the conditions above identify the economically 
	interpretable parameter restrictions under which bimodality 
	obtains.
\end{remark}

\begin{remark}[Numerical verification]\label{rmk:bimodal_numerical}
	Under the baseline calibration 
	(Table~\ref{tab:calibration}), all four conditions are 
	satisfied. Section~\ref{sec:kfe} presents the 
	resulting bimodal density.
\end{remark}

	%% %==========================================================%==
	\section{Numerical Method}\label{sec:numerical}
	%% %==========================================================%==
	
	The model does not admit a closed-form solution. The coupled 
	system of HJB equations \eqref{eq:hjb_L}--\eqref{eq:hjb_W}, 
	together with the variational inequality for the W-state and 
	the Kolmogorov Forward Equation for the stationary distribution, 
	must be solved numerically. This section describes the 
	computational strategy.
	
	The solution proceeds in two stages that exploit a fundamental 
	duality in the HACT framework \citep{Achdou2022}. The HJB 
	equations govern individual optimization: given the state 
	$(k, s)$, what is the optimal consumption and signaling policy? 
	The Kolmogorov Forward Equation (KFE) governs the distributional 
	consequences of those policies: given that all agents behave 
	optimally, what is the stationary cross-sectional distribution 
	of wealth? In the standard HACT literature with production 
	economies \citep{Achdou2022, Kaplan2018}, the two stages are 
	linked by equilibrium prices---the interest rate and the wage 
	depend on aggregate capital, creating a fixed-point problem. In 
	our model, the coupling is one-directional: since each agent 
	operates her own Cobb-Douglas technology without factor market 
	interaction, the individual problem does not depend on the 
	wealth distribution. The HJB system can therefore be solved 
	first, and the resulting policies fed into the KFE as given 
	inputs.
	
	\paragraph{Stage 1: HJB system.} We discretize the capital 
	domain $[0, k_{\max}]$ on a uniform grid of $N$ points and 
	solve the coupled HJB equations using the implicit upwind 
	finite difference scheme of \cite{Achdou2022}. The upwind 
	strategy---forward differences where the drift is positive, 
	backward differences where it is negative---ensures monotonicity 
	of the scheme, which is the key condition for convergence to the 
	viscosity solution \citep{barles1991}. The implicit 
	time stepping guarantees unconditional stability, allowing large 
	step sizes that accelerate convergence. Policy iteration 
	(Howard improvement) replaces the slow contraction of value 
	iteration with a sequence of linear solves, each of which 
	updates the value function globally. For the W-state, we 
	augment the standard scheme with an American projection step 
	that enforces the constraint 
	$\VW(k) \geq \VH(k - \phi)$ at each iteration, following the 
	penalty and projection methods used in computational finance for 
	American option pricing \citep{wilmott1995}.
	
	\paragraph{Stage 2: Kolmogorov Forward Equation.} Once the 
	optimal policies $\{c_j(k)\}_{j \in \{L,W,H\}}$ and the signal 
	region $\{k \geq k^*\}$ are determined, the stationary 
	distribution $\{g_L(k), g_W(k), g_H(k)\}$ is obtained from 
	the KFE---the adjoint of the infinitesimal generator associated 
	with the HJB operator. The KFE describes the steady-state 
	balance of probability flows: at each point $(k, j)$, the 
	inflow of agents from drift, diffusion, and regime switching 
	equals the outflow. The discretized KFE inherits the 
	transpose structure of the discretized HJB 
	\citep{Achdou2022}, ensuring consistency between the 
	individual optimization and the distributional accounting. The 
	non-local transfer induced by signaling---agents in W at 
	$k \geq k^*$ are removed and reinjected in H at 
	$k - \phi$---enters the KFE as a source-sink term that 
	connects the two basins of attraction.
	
	The remainder of this section presents the discretization, the 
	iterative algorithm, and the convergence diagnostics in detail.
%% ============================================================
\section{Numerical Method for the HJB System}\label{sec:numerical_hjb}
%% ============================================================

\subsection{Upwind Finite Differences}\label{sec:upwind}

We discretize $k \in [k_{\min}, k_{\max}]$ on a uniform grid 
$\{k_1, \ldots, k_N\}$ with $k_1 = \varepsilon > 0$ (small, to 
avoid the singularity of $k^{\alpha-1}$ at zero) and spacing 
$\Delta k = (k_{\max} - k_{\min})/(N-1)$.

\begin{definition}[Upwind approximation]\label{def:upwind}
	For each grid point $i$ and state $j$ with production 
	$f_j(k_i) = A_j k_i^\alpha$, define forward and backward 
	approximations to the first derivative:
	\begin{equation}\label{eq:fd_approx}
		V_{j,i}^{f} = \frac{V_{j,i+1} - V_{j,i}}{\Delta k}, 
		\qquad 
		V_{j,i}^{b} = \frac{V_{j,i} - V_{j,i-1}}{\Delta k}.
	\end{equation}
	Each approximation implies a candidate consumption policy 
	(from the first-order condition \eqref{eq:foc_crra}) and a 
	candidate drift:
	\begin{align}
		c_i^f = (V_{j,i}^f)^{-1/\gamma}, &\qquad 
		\mu_i^f = f_j(k_i) - c_i^f - \delta k_i, 
		\label{eq:cf_muf}\\
		c_i^b = (V_{j,i}^b)^{-1/\gamma}, &\qquad 
		\mu_i^b = f_j(k_i) - c_i^b - \delta k_i. 
		\label{eq:cb_mub}
	\end{align}
	The upwind scheme selects the approximation consistent with 
	the direction of capital flow:
	\begin{equation}\label{eq:upwind}
		V_{j,i}' = \begin{cases}
			V_{j,i}^f & \text{if } \mu_i^f > 0 
			\quad \text{(accumulating: use forward difference)}, \\[4pt]
			V_{j,i}^b & \text{if } \mu_i^b < 0 
			\quad \text{(decumulating: use backward difference)}, \\[4pt]
			u'(f_j(k_i) - \delta k_i) & \text{otherwise 
				\quad (zero drift: steady-state consumption)}.
		\end{cases}
	\end{equation}
	The third case corresponds to points where neither forward 
	nor backward drift has the correct sign. At such points, the 
	agent is locally at a zero-drift steady state: consumption 
	equals net income, $c_i = f_j(k_i) - \delta k_i$, and the 
	slope of the value function is $V_{j,i}' = u'(c_i)$.
\end{definition}

The discretized HJB operator for state $j$ at grid point $i$ 
takes the form:
\begin{equation}\label{eq:discrete_hjb}
	\rho V_{j,i} = u(c_{j,i}) + x_i V_{j,i+1} 
	+ z_i V_{j,i} + y_i V_{j,i-1} 
	+ \text{(switching terms)},
\end{equation}
where the finite-difference coefficients encode both advection 
(drift) and diffusion:
\begin{align}
	x_i &= \frac{\max(\mu_i, 0)}{\Delta k} 
	+ \frac{\sigma^2/2}{(\Delta k)^2}, \label{eq:xi}\\
	y_i &= \frac{-\min(\mu_i, 0)}{\Delta k} 
	+ \frac{\sigma^2/2}{(\Delta k)^2}, \label{eq:yi}\\
	z_i &= -(x_i + y_i). \label{eq:zi}
\end{align}
The coefficient $x_i$ governs the weight on the right neighbor 
$V_{j,i+1}$ (forward advection plus diffusion), $y_i$ governs 
the weight on the left neighbor $V_{j,i-1}$ (backward advection 
plus diffusion), and $z_i$ is the diagonal entry, chosen so that 
the row sums to zero in the generator matrix.

\begin{proposition}[Monotonicity and convergence]%
	\label{prop:monotone}
	The upwind scheme \eqref{eq:xi}--\eqref{eq:zi} produces a 
	system $A_j \mathbf{V}_j = \mathbf{b}_j$ where $A_j$ is a 
	diagonally dominant M-matrix (positive diagonal, non-positive 
	off-diagonal). The scheme is monotone, consistent, and 
	converges to the unique viscosity solution of the HJB equation 
	as $\Delta k \to 0$.
\end{proposition}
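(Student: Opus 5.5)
The argument has three parts: a sign check on the discretized system, which gives the M-matrix property; verification of the Barles--Souganidis conditions \citep{barles1991}, which gives convergence; and a comparison principle, which gives uniqueness of the limit.

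\textbf{M-matrix and diagonal dominance.} Write the implicit step of \eqref{eq:discrete_hjb} for state $j$ as
\[
\Bigl[\bigl(\rho + \tfrac{1}{\Delta}\bigr) I - \mathbf{G}_j + \lambda_{\mathrm{out},j} I\Bigr]\mathbf{V}_j = \mathbf{b}_j ,
\]
where $\Delta$ is the pseudo-time step and $\mathbf{G}_j$ is the tridiagonal generator with entries $y_i$, $z_i$, $x_i$. The inflow switching term $\lambda_{\mathrm{out},j} V_{\mathrm{other}}$ is placed in $\mathbf{b}_j$ at the current iterate, so the Poisson exits only add a nonnegative amount to the diagonal.

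By \eqref{eq:xi}--\eqref{eq:yi} we have $x_i, y_i \ge 0$, because $\max(\mu_i,0) \ge 0$, $-\min(\mu_i,0) \ge 0$, and the diffusion contribution $\sigma^2/(2\Delta k^2)$ is positive. Consequently $A_j$ has non-positive off-diagonal entries $-x_i$ and $-y_i$. Its diagonal entry is
\[
\rho + \tfrac{1}{\Delta} + \lambda_{\mathrm{out},j} + x_i + y_i ,
\]
which exceeds $x_i + y_i$ by at least $\rho > 0$. Each row is therefore strictly diagonally dominant.

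The boundary rows keep these signs under the modifications of Remark~\ref{rmk:bc_numerical}, namely $y_1 = 0$ and $x_N = 0$. A Z-matrix that is strictly diagonally dominant with positive diagonal is a nonsingular M-matrix, so $A_j^{-1} \ge 0$.

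For the W-state, the projection step $\VW \leftarrow \max\bigl(\VW, \VH(\cdot - \phi)\bigr)$ is itself monotone. Written as a linear complementarity problem with an M-matrix, it inherits monotonicity.

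\textbf{Monotonicity, stability, consistency.} The scheme operator is $S(\Delta k, k_i, V_i, [V]) = \rho V_i - u(c_i) - x_i(V_{i+1} - V_i) - y_i(V_{i-1} - V_i) - \dots$. It is nonincreasing in the neighbouring values, and the upwind choice \eqref{eq:upwind} makes this hold even though the coefficients depend on $V$. Here I would use the standard argument of \cite{Achdou2022}: the Hamiltonian $\max_c\{u(c) + \mu(c)p\}$ is nondecreasing in $p^+$ and nonincreasing in $p^-$ in exactly the upwind pattern.

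Stability follows from the discrete maximum principle, which holds because $A_j^{-1} \ge 0$ and $\rho > 0$. Comparing with constant super- and subsolutions bounds $\|\mathbf{V}_j\|_\infty$ on compacts, uniformly in $\Delta k$, provided $u$ is evaluated at consumption bounded away from $0$. I would localize away from $k = 0$ using the grid start $k_1 = \varepsilon$.

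Consistency is a Taylor expansion. The one-sided differences are first-order accurate, and the centered second difference is second-order accurate. The Hamiltonian is continuous in $(p, q)$ on $\{p > 0\}$, which Proposition~\ref{prop:concavity} together with monotonicity of $V$ guarantees.

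\textbf{Comparison and the main obstacle.} Barles--Souganidis then gives convergence to the unique viscosity solution, provided a strong comparison principle holds for the coupled system \eqref{eq:hjb_L}--\eqref{eq:hjb_W} with the boundary conditions of Propositions~\ref{prop:bc_lower}--\ref{prop:bc_upper}. I expect this to be the main obstacle, for three reasons:
\begin{itemize}
\item the system is weakly coupled and includes an obstacle term with a shifted argument, $\VH(k - \phi)$;
\item the domain is effectively unbounded, and the CRRA utility is singular at $c = 0$;
\item the Neumann and state-constraint type boundary at $0$ must be handled in the viscosity sense.
\end{itemize}

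My first attempt would be the doubling-of-variables argument from \cite{fleming2006}, adapted to monotone weakly coupled systems. The coupling is quasi-monotone because the switching terms $\lambda(V_{\mathrm{other}} - V_j)$ are nondecreasing in the other component, so one takes the maximum over $j$ of $\sup(U_j - V_j)$.

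For the obstacle, note that at a maximum point of $U_W - V_W$ lying in the contact set, the difference is controlled by $\sup(U_H - V_H)$. The three components therefore close up under a single maximum.

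On the computational domain $[\varepsilon, k_{\max}]$ the domain is bounded, and I would rely on that. Uniqueness for the untruncated problem would be deferred to a growth condition on $V$ implied by the transversality condition \eqref{eq:transversality}.
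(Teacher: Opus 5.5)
Your proposal follows the same route as the paper's proof sketch. The sign pattern $x_i, y_i \ge 0$ makes $A_j$ a strictly diagonally dominant Z-matrix, with $\rho>0$ supplying the strict dominance, hence a nonsingular M-matrix that gives monotonicity and stability; together with consistency, the Barles--Souganidis theorem then yields convergence. Your version is more careful than the paper's, which derives monotonicity only from the frozen-coefficient sign pattern and never mentions the strong comparison principle for the coupled obstacle system that Barles--Souganidis requires and that you correctly identify as the substantive step.
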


\begin{proof}[Proof sketch]
	By construction, $x_i \geq 0$ and $y_i \geq 0$ for all $i$ 
	(both the advection and diffusion contributions are 
	non-negative). Since $z_i = -(x_i + y_i) \leq 0$, the 
	off-diagonal entries of the generator matrix are non-positive. 
	Incorporating the discount rate, the diagonal of $A_j$ is 
	$\rho - z_i + \lambda_{\mathrm{out},j} = \rho + x_i + y_i 
	+ \lambda_{\mathrm{out},j} > 0$, ensuring strict diagonal 
	dominance. These properties make $A_j$ an M-matrix, which guarantees that the 
	discrete maximum principle holds: the solution is bounded by 
	the data. Monotonicity (the scheme is non-decreasing in each 
	$V_{j,i'}$ for $i' \neq i$) follows from the non-positivity 
	of off-diagonal entries. Combined with consistency (the 
	truncation error vanishes as $\Delta k \to 0$) and stability 
	(the M-matrix property provides uniform bounds), the 
	convergence theorem of \cite{barles1991} guarantees 
	convergence to the viscosity solution of the continuous HJB 
	equation.
\end{proof}
\subsection{Solution Algorithm}\label{sec:policy_iteration}

The discretized HJB system is nonlinear: the coefficients 
$x_i$, $y_i$, $z_i$ depend on the consumption policy 
$c_{j,i}$, which in turn depends on the value function 
$V_{j,i}$ through the first-order condition. We solve this 
fixed-point problem by an implicit time-stepping scheme 
with Gauss--Seidel ordering across regimes and an American 
projection step for the variational inequality in state W.

The algorithm iterates on the three value functions in the 
order $L \to W \to H$, which follows the direction of 
regime transitions. At each outer iteration $n$:

\begin{enumerate}[label=\textit{Step \arabic*.}]
	\item \textit{Policy update.} Given the current 
	iterate $\mathbf{V}_j^{n-1}$, compute the upwind 
	approximation \eqref{eq:upwind} at each grid point 
	to obtain the consumption policy $c_{j,i}^{n-1}$ and 
	drift $\mu_{j,i}^{n-1}$. Assemble the 
	finite-difference coefficients $x_i$, $y_i$, $z_i$ 
	from \eqref{eq:xi}--\eqref{eq:zi}.
	
	\item \textit{Implicit update.} For each state $j$ 
	in the order $L, W, H$, solve the tridiagonal system:
	\begin{equation}\label{eq:implicit_step}
		\left(\frac{1}{\Delta t} I + \rho I 
		+ \Lambda_j - A_j^{n-1}\right) 
		\mathbf{V}_j^{n} = 
		\frac{\mathbf{V}_j^{n-1}}{\Delta t} 
		+ \mathbf{u}^{n-1}_j 
		+ \Lambda_j\, \mathbf{V}_{-j}^{*},
	\end{equation}
	where $\mathbf{V}_{-j}^{*}$ uses the 
	\textit{most recently updated} value functions 
	of the other regimes (Gauss--Seidel ordering): 
	when updating W, $\mathbf{V}_L^n$ is already 
	available from the L-step; when updating H, 
	both $\mathbf{V}_L^n$ and $\mathbf{V}_W^n$ are 
	available. Each system is tridiagonal and solved 
	in $O(N)$ operations.
	
	\item \textit{American projection (W-state only).} 
	After solving for $\mathbf{V}_W^n$, enforce the 
	variational inequality: for each grid point 
	$k_i \geq \phi$, set
	\[
	V_{W,i}^{n} \leftarrow 
	\max\bigl(V_{W,i}^{n},\; 
	V_{H}^{n-1}(k_i - \phi)\bigr).
	\]
	Points where the constraint binds constitute the 
	signal region; the Skiba threshold $k^*$ is the 
	lowest such point. The consumption policy and 
	drift in the signal region are then recomputed 
	from the projected value function to ensure 
	consistency with the KFE.
	
	\item \textit{Convergence check.} Stop if 
	$\max_j \|\mathbf{V}_j^{n} 
	- \mathbf{V}_j^{n-1}\|_\infty 
	< \varepsilon_{\mathrm{tol}}$.
\end{enumerate}

The implicit scheme is unconditionally stable, 
allowing arbitrarily large time steps. Following 
\cite{Achdou2022}, we increase $\Delta t$ 
geometrically as the iteration converges: in the 
limit $\Delta t \to \infty$, each implicit step 
becomes an exact policy evaluation, and the algorithm 
reduces to Howard's policy iteration. Under the 
baseline calibration, convergence to 
$\varepsilon_{\mathrm{tol}} = 10^{-8}$ is achieved in 
8 iterations. Appendix~\ref{sec:gs} provides the 
full algorithmic details and convergence analysis.%% 
\section{Kolmogorov Forward Equation}\label{sec:kfe}
%% ============================================================

The stationary wealth distribution is the solution to the 
Kolmogorov Forward Equation (KFE)---the adjoint of the HJB 
operator---subject to the non-local transfer induced by 
signaling. For the sake of brevity, Appendix \ref{app:kfe}  derives the discretized KFE system, 
describes the signaling transfer mechanism, and characterizes 
the properties of the resulting ergodic distribution.

% ================================================
 \section{Calibration and Numerical Experiment}\label{sec:calibration}
 %% ============================================================
 
 The purpose of this section is not to match a specific economy 
 but to demonstrate that the model generates a bimodal stationary 
 distribution---the Twin Peaks---under empirically plausible 
 parameter values. We adopt a baseline calibration that places the 
 economy in the interior-threshold regime 
 (Remark~\ref{rmk:skiba_cases}(a)), where signaling is feasible 
 but not immediate, producing a non-trivial wait zone and the 
 full range of agent types described in the introduction.
 Table~\ref{tab:calibration} reports the baseline parameters. 
 We group them into four categories: preferences, technology, 
 stochastic environment, and signaling. For each parameter, we 
 adopt standard values from the literature where available and 
 discuss our choices in detail in Appendix \ref{app:param_discuss}.
 
 \begin{table}[htbp]
 	\centering
 	\caption{Baseline calibration}
 	\label{tab:calibration}
 	\begin{tabular}{llcl}
 		\toprule
 		Parameter & Description & Value 
 		& Reference / Target \\
 		\midrule
 		\multicolumn{4}{l}{\textit{Preferences}} \\[2pt]
 		$\gamma$ & Risk aversion (CRRA) & 2.0 
 		& Standard; \cite{LucasBook1987}, 
 		\cite{Attanasio1999}, \cite{Chetty2006} \\
 		$\rho$ & Discount rate & 0.05 
 		& Annual; \cite{Achdou2022} \\[6pt]
 		\multicolumn{4}{l}{\textit{Technology}} \\[2pt]
 		$\alpha$ & Capital elasticity & 0.33 
 		& Standard in Cobb-Douglas calibrations \\
 		$\delta$ & Depreciation rate & 0.02 
 		& Low; broad capital interpretation \\
 		$A_L$ & Low-regime TFP & 1.0 
 		& Normalization \\
 		$A_H$ & High-regime TFP & 1.25 
 		& Conservative; college premium 
 		$\approx 60$--$80\%$ 
 		(\cite{Goldin2008}) \\[6pt]
 		\multicolumn{4}{l}{\textit{Stochastic environment}} 
 		\\[2pt]
 		$\sigma$ & Diffusion coefficient & 0.30 
 		& Calibrated to match wealth dispersion \\
 		$\lambda_{LH}$ & Opportunity arrival rate & 0.005 
 		& Schumpeterian long waves; 
 		\cite{Comin2010} \\
 		$\lambda_{HL}$ & Obsolescence rate & 0.002 
 		& Persistent regimes; 
 		\cite{Azariadis2005} \\[6pt]
 		\multicolumn{4}{l}{\textit{Signaling}} \\[2pt]
 		$\phi$ & Signaling cost & 9.0 
 		& Interior Skiba; $\phi < \kssc{L} < k^*$ \\[4pt]
 		\midrule
 		\multicolumn{4}{l}{\textit{Computational}} \\[2pt]
 		$N$ & Grid points & 501 & \\
 		$k_{\max}$ & Upper grid boundary & 50 & \\
 		$\Delta t$ & Implicit time step & 500 & \\
 		\texttt{tol} & Convergence tolerance 
 		& $10^{-8}$ & \\
 		$\bar{\lambda}$ & Signaling drain rate (KFE) 
 		& $10^3$ & \\
 		\bottomrule
 	\end{tabular}
 \end{table}

 The next section presents the results of this experiment: 
 the value functions, optimal policies, signaling surplus, 
 and the stationary distribution.
 
 %% ============================================================
 \section{Results}\label{sec:results}
 This section presents the numerical results under the baseline 
 calibration of Table~\ref{tab:calibration}. We first build 
 intuition for the equilibrium structure, then verify the 
 theoretical predictions and examine the consumption diagnostics 
 that constitute the paper's central empirical prediction.

\subsection{Results in a Nutshell}\label{sec:nutshell}

Before examining the numerical output in detail, it is useful 
to build intuition for the equilibrium structure through three 
complementary perspectives: the spatial organization of the 
state space, the value functions that govern individual 
decisions, and the resulting stationary densities. 
Figures~\ref{fig:mechanism}--\ref{fig:state_densities} present 
these perspectives under the baseline calibration.

\smallskip\noindent\textit{The geography of the trap.}\enspace 
Figure~\ref{fig:mechanism} displays the partition of the 
capital axis induced by the signaling friction. The economy is 
divided into three regions whose boundaries are determined by 
two structural parameters---the signaling cost $\phi$ and the 
endogenous Skiba threshold $k^*$---and whose economic logic is 
fundamentally different.

Below $\phi = 9$, agents are too poor to afford the credential: 
even if a signaling opportunity arrives ($\lambda_{LH}$), they 
cannot exercise it. This is the region of \textit{absolute} 
exclusion, where the poverty trap operates through a wealth 
constraint that is, in a precise sense, analogous to a borrowing 
constraint in the Bewley tradition---except that here the 
constraint binds on the \textit{investment} margin rather than 
on consumption. Between $\phi$ and $k^* = 13.4$ lies the 
\textit{wait zone}: agents can afford the cost but 
\textit{choose} not to pay it, because the post-signaling 
wealth $k - \phi$ would leave them too far below the 
H-attractor, exposed to obsolescence risk with insufficient 
buffer. This is the region of \textit{rational} exclusion---the 
novel mechanism of our model. Above $k^*$, agents signal 
immediately upon opportunity arrival.

The critical observation is the position of the L-attractor: 
$\kssc{L} = 11.5$ falls \textit{inside} the wait zone 
($\phi < \kssc{L} < k^*$). This means that the typical 
low-productivity agent is neither too poor to signal nor 
unwilling to signal in principle---she is stuck in a region 
where signaling is affordable but not yet optimal. The trap is 
not a corner solution but an interior one, and this is what 
makes it invisible to standard diagnostic tests.

\begin{figure}[H]
	\centering
	\includegraphics[width=0.85\textwidth]{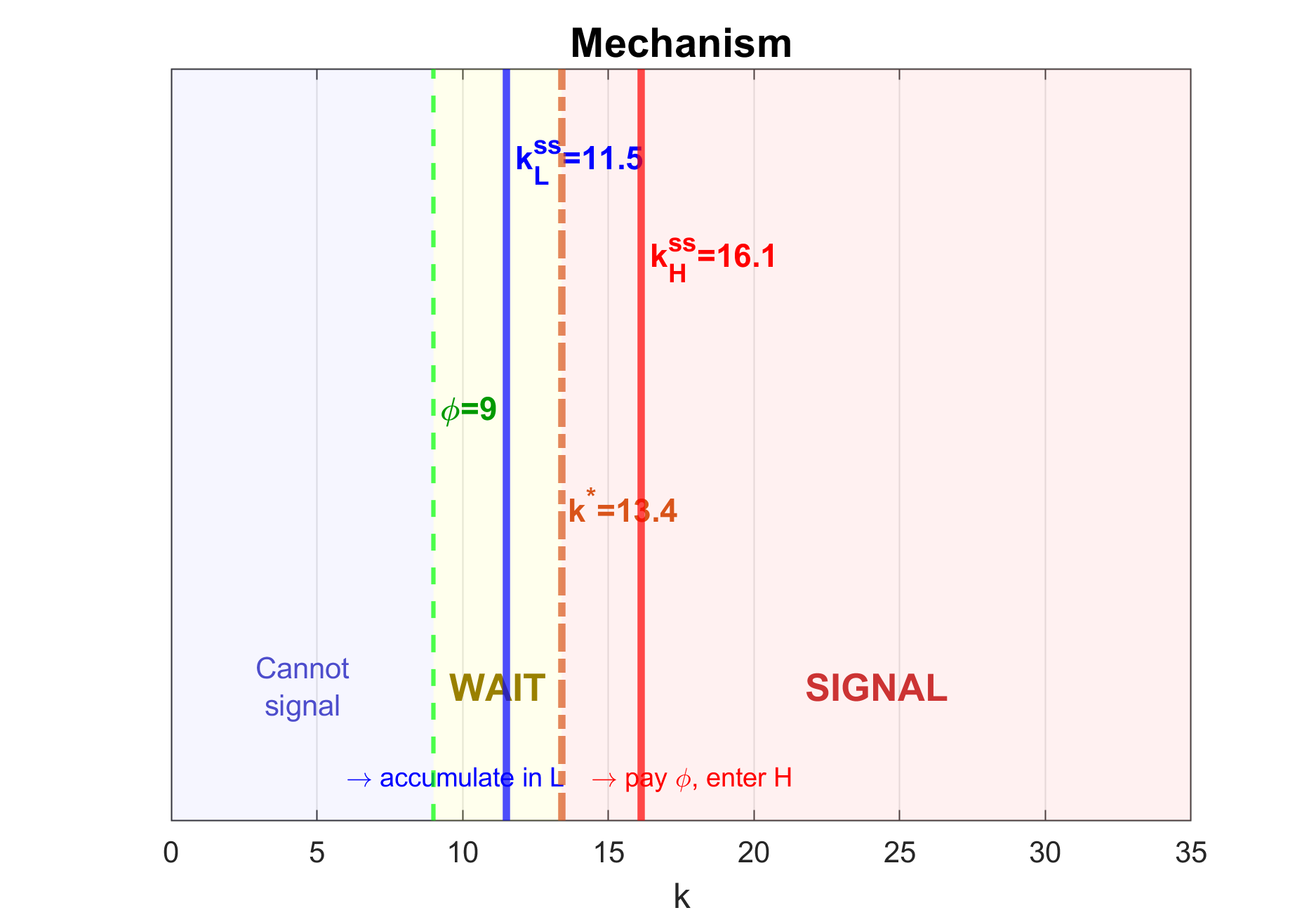}
	\caption{Partition of the capital axis under the baseline 
		calibration. The signaling cost $\phi = 9$ and the endogenous 
		Skiba threshold $k^* = 13.4$ divide the state space into three 
		regions: absolute exclusion ($k < \phi$), the wait zone 
		($\phi \leq k < k^*$, shaded in yellow), and immediate 
		signaling ($k \geq k^*$). The coupled attractors $\kssc{L} = 
		11.5$ and $\kssc{H} = 16.1$ are shown as solid vertical 
		lines. The L-attractor lies inside the wait zone: the 
		typical trapped agent \textit{can} afford to signal but 
		\textit{chooses} not to.}
	\label{fig:mechanism}
\end{figure}

\smallskip\noindent\textit{Why the trap is self-enforcing.}\enspace 
Figure~\ref{fig:value_functions} reveals the economic forces 
behind the partition. The three value functions $V_L$, $V_W$, 
$V_H$ are strictly concave (Proposition~\ref{prop:concavity}) 
and strictly ordered: $V_H(k) > V_W(k) > V_L(k)$ for all 
$k > 0$ (Proposition~\ref{prop:ordering}). The ordering 
reflects the value of options: $V_H$ incorporates the benefit 
of high productivity, $V_W$ adds the option to signal, and 
$V_L$ has neither. 

The signaling decision is governed by the comparison between 
$V_W(k)$---the value of holding the option---and 
$V_H(k - \phi)$---the value of exercising it. The dotted 
curve $V_H(k - \phi)$ is simply $V_H$ shifted rightward by 
$\phi$: paying the signaling cost is equivalent to starting 
the H-regime with $\phi$ fewer units of capital. At low wealth, 
this shift is devastating---the concavity of $V_H$ means that 
the marginal value of the lost capital is very high, so 
$V_W(k) \gg V_H(k - \phi)$ and the agent waits. As wealth 
increases, the gap narrows because the marginal cost of losing 
$\phi$ units falls (the function flattens). At $k = k^*$, the 
two curves cross: $V_W(k^*) = V_H(k^* - \phi)$, and the agent 
is indifferent---this is the Skiba threshold. 

The mechanism is self-enforcing because of the drift structure. 
At $\kssc{L}$, the drift $\mu_L$ is zero: the agent has no 
incentive to accumulate further under $A_L$ technology, so she 
\textit{cannot} drift into the signal region through savings 
alone. Only the exogenous arrival of an opportunity 
($\lambda_{LH}$) moves her to the W-state, and even then, 
she must accumulate from $\kssc{L} = 11.5$ to 
$k^* = 13.4$---an additional 1.9 units of capital---before 
signaling becomes optimal. During this accumulation phase, she 
is vulnerable to losing the opportunity before reaching $k^*$.

\begin{figure}[H]
	\centering
	\includegraphics[width=0.85\textwidth]{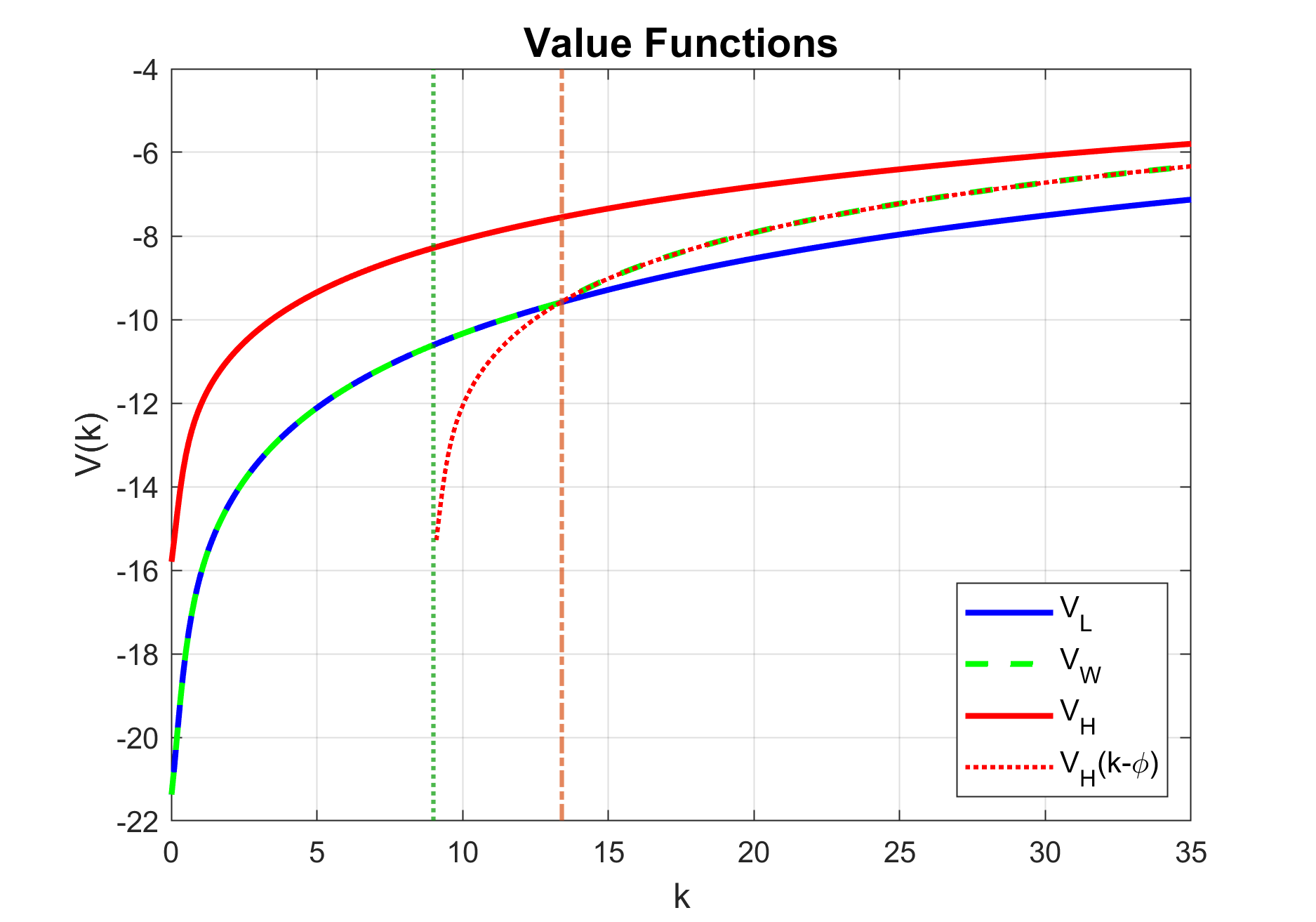}
	\caption{Value functions under the baseline calibration. 
		$V_H > V_W > V_L$ for all $k > 0$ 
		(Proposition~\ref{prop:ordering}). All three are strictly 
		concave (Proposition~\ref{prop:concavity}). The shifted 
		function $V_H(k - \phi)$ (dotted) crosses $V_W$ at the 
		Skiba threshold $k^* = 13.4$: below this point, the concavity 
		of $V_H$ makes the capital loss from paying $\phi$ too costly; 
		above it, the agent signals immediately. The green dotted line 
		marks $\phi = 9$, below which signaling is infeasible.}
	\label{fig:value_functions}
\end{figure}

\smallskip\noindent\textit{The resulting polarization.}\enspace 
Figure~\ref{fig:state_densities} displays the stationary 
densities decomposed by productivity regime---the observable 
counterpart of the theoretical structure. The bimodal shape 
predicted by Proposition~\ref{prop:bimodal} is immediately 
apparent, but the decomposition reveals \textit{why} the two 
peaks arise and why they persist.

The left mode, centered near $\kssc{L}$, is dominated by the 
L-density $g_L$ (blue). These are agents who have never received 
an opportunity, or who received one and lost it before reaching 
$k^*$. The concentration of mass near the attractor reflects 
the restoring force of the drift: agents who are perturbed away 
from $\kssc{L}$ by the diffusion $\sigma$ are pulled back by 
the negative drift above $\kssc{L}$ and positive drift below 
it. The right mode, centered near $\kssc{H}$, is dominated by 
the H-density $g_H$ (red). These are agents who have 
successfully signaled and are enjoying the returns to high 
productivity. The peak is sharper and taller, reflecting the 
stronger restoring force at $\kssc{H}$ (where the curvature of 
$f_H$ is larger due to higher TFP).

The W-density $g_W$ (green dashed) is small but structurally 
revealing. It is concentrated in the interval $[\phi, k^*]$---
precisely the wait zone of Figure~\ref{fig:mechanism}---and 
vanishes above $k^*$, where agents signal immediately. Its 
mass ($\pi_W = 11.8\%$) represents the population of 
Frustrated Aspirants: agents who hold a signaling opportunity 
but have not yet accumulated enough to exercise it. The 
existence of this intermediate group---neither trapped nor 
free---is a distinctive prediction of the model, absent from 
both the Bewley tradition (which lacks regime switching) and 
the Galor-Zeira tradition (which lacks the option-value margin).

A striking feature of the figure is that the three densities 
overlap extensively. This overlap is not noise---it is 
generated by the flows between regimes and constitutes one of 
the model's sharpest empirical implications: capital alone does 
not identify an agent's state; two households with identical 
wealth can be on opposite trajectories. Three overlap zones are 
visible, each with a distinct economic story.

In the region $k \in [4, \phi]$, the left tail of $g_H$ 
runs beneath $g_L$. The H-agents here are \textit{newly 
	signaled}: they entered state H at $k^* - \phi = 4.4$ and 
are accumulating rapidly toward $\kssc{H}$. The density is 
thin because the drift $\mu_H$ is strongly positive at low 
capital---agents transit through this zone quickly, like a 
fast current through a narrow channel. Their L-counterparts 
at the same capital level are in a fundamentally different 
situation: slowly accumulating toward $\kssc{L}$ under low 
productivity $A_L$, with much lower savings rates. An 
econometrician observing two households with $k = 6$ would 
find radically different consumption behavior---the H-agent 
saving aggressively (high marginal return $f'_H(6)$), the 
L-agent consuming most of her income---yet no cross-sectional 
variable other than the productivity regime would distinguish 
them.

In the wait zone $k \in [\phi, k^*]$, all three densities 
coexist. This is the most economically complex region of the 
state space: $g_L$ is approaching its peak (we are near 
$\kssc{L} = 11.5$); $g_W$ has its entire support here, 
representing Frustrated Aspirants who hold the signaling 
option and are saving toward $k^*$; and $g_H$ is still 
present as a transit flow of recently promoted agents 
climbing toward $\kssc{H}$. Three agents with $k = 11$ can 
be in entirely different situations: one trapped without an 
opportunity (L), one racing to accumulate the last 2.4 units 
before the opportunity expires (W), and one freshly 
credentialed and building her buffer (H). Their MPCs differ, 
their APCs differ, and their future trajectories diverge 
completely---yet in a household survey they are 
observationally identical without information on the 
productivity regime.

In the region $k \in [k^*, 20]$, the right tail of $g_L$ 
runs beneath the peak of $g_H$. The L-agents here are the 
\textit{Decaying Rentiers}---former H-agents who suffered 
the obsolescence shock ($\lambda_{HL}$) and reverted to low 
productivity. The shock changes the regime instantaneously 
but does not destroy capital: these agents retain high wealth 
but now face a negative drift under $A_L$ technology, which 
cannot sustain such a capital stock. They are decumulating 
toward $\kssc{L}$, consuming more than their net income. 
They are the mirror image of the newly signaled agents in 
the left tail of $g_H$: same wealth region, opposite 
direction, opposite regime. The symmetry between these two 
transit populations---one ascending in H, one descending in 
L---is a direct consequence of the Poisson switching 
structure and illustrates why regime identification is 
essential for any empirical test of the model.

The valley between the two peaks---the region around 
$k^*$---has near-zero density. This is not an accident: agents 
in the W-state who reach $k^*$ signal immediately, removing 
themselves from the W-density and re-entering as H-agents at 
$k^* - \phi$. The signaling threshold acts as an absorbing 
barrier for the wait state, creating a structural gap in the 
distribution that separates the two clubs. The depth of this 
valley---rather than the distance between the peaks---is the 
most robust measure of the strength of the poverty trap.

\begin{figure}[H]
	\centering
	\includegraphics[width=0.85\textwidth]{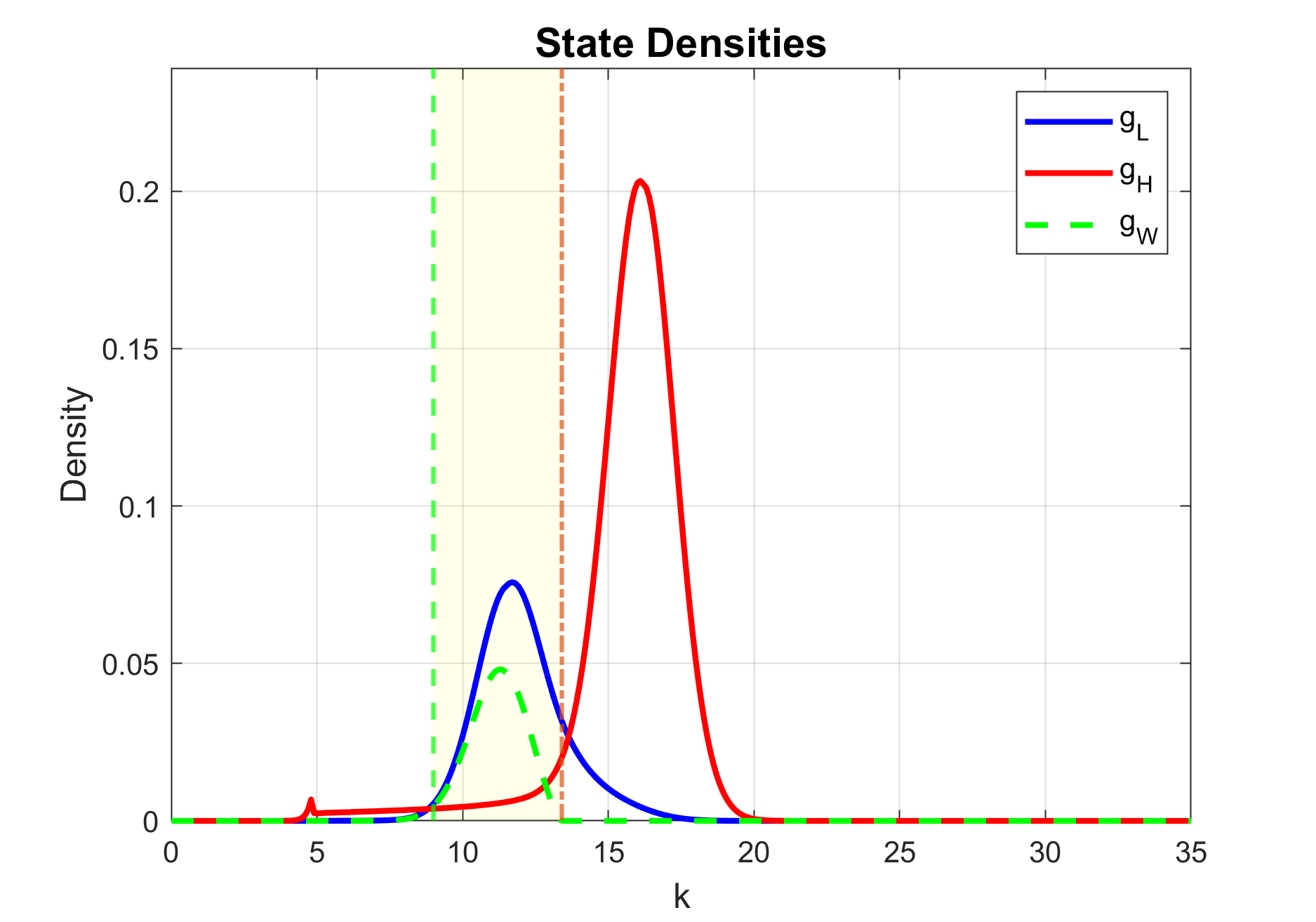}
	\caption{Stationary densities by regime. The L-density $g_L$ 
		(blue) peaks near $\kssc{L} = 11.5$, forming the poverty trap 
		mode; the H-density $g_H$ (red) peaks near $\kssc{H} = 16.1$, 
		forming the accumulation mode. The W-density $g_W$ (green 
		dashed) is confined to the wait zone $[\phi, k^*]$ and vanishes 
		above $k^*$, where signaling is immediate. The three densities 
		overlap extensively: in $[4, \phi]$, newly signaled H-agents 
		transit rapidly alongside slowly accumulating L-agents; in 
		$[\phi, k^*]$, all three populations coexist; in $[k^*, 20]$, 
		Decaying Rentiers (L) slide back through the region dominated 
		by H-agents. The valley near $k^*$ reflects the 
		absorbing-barrier effect: agents who reach the threshold exit 
		the W-state and re-enter as H-agents at $k^* - \phi$. 
		Ergodic shares: $\pi_L = 25.2\%$, $\pi_W = 11.8\%$, 
		$\pi_H = 63.0\%$.}
	\label{fig:state_densities}
\end{figure}

\subsection{Numerical Verification}\label{sec:verification}

We now verify that the simulation output is consistent with 
the theoretical predictions of 
Sections~\ref{sec:model}--\ref{sec:bimodality}.

\smallskip\noindent\textit{Convergence.}\enspace 
The HJB solver (Algorithm~\ref{alg:hjb}) converges in 8 
iterations to a tolerance of $10^{-8}$. The coupled 
attractors are $\kssc{L} = 11.50$ and $\kssc{H} = 16.12$, 
both above their deterministic counterparts 
($\kss_L(0) = 10.12$, $\kss_H(0) = 14.12$). The 
L-attractor rises by 13.6\%, driven by the precautionary 
motive and the option value of signaling opportunities 
(Propositions~\ref{prop:stoch_ss} and~\ref{prop:L_shift}). 
The H-attractor rises by 14.2\%, confirming that the 
precautionary channel dominates the front-loading effect 
(Proposition~\ref{prop:H_shift}). The attractor gap widens 
modestly from $4.00$ to $4.62$.

\smallskip\noindent\textit{Skiba threshold.}\enspace 
The threshold is $k^* = 13.40$, producing an interior 
signaling regime (Remark~\ref{rmk:skiba_cases}(a)) with a 
wait zone of width $k^* - \phi = 4.40$. The signaling 
surplus $D(k)$ declines monotonically from $D(\phi) = 5.25$ 
to $D(k^*) = 0$, and the L-attractor lies inside the wait 
zone ($\phi < \kssc{L} < k^*$), confirming the structural 
trap mechanism.

\smallskip\noindent\textit{Bimodality.}\enspace 
The stationary distribution is bimodal, with ergodic shares 
$\pi_L = 25.2\%$, $\pi_W = 11.8\%$, $\pi_H = 63.0\%$. 
Mean wealth is $\E[k] = 14.23$ and the Gini coefficient is 
$0.104$. The modest Gini reflects the conservative TFP gap 
($A_H/A_L = 1.25$); the relevant diagnostic of the poverty 
trap is the bimodality of the distribution, not the level 
of inequality.

\smallskip\noindent\textit{Euler equation.}\enspace 
At both attractors, the drift is numerically zero 
($|\mu_j(\kssc{j})| < 10^{-6}$), confirming the zero-drift 
condition. The net marginal return 
$f'_j(\kssc{j}) - \delta = 0.044$ falls below 
$\rho = 0.050$ by $0.006$, representing the combined 
precautionary and switching corrections in the Euler 
equation~\eqref{eq:euler_derived}. The near-equality of net 
returns across regimes ($f'_L - \delta \approx f'_H - \delta 
\approx 0.044$) reflects the symmetric structure of the 
zero-drift condition: at both attractors, the risk-adjusted 
return, net of all corrections, equals $\rho$.

\subsection{The MPC/APC Diagnostic}\label{sec:mpc_results}

The central empirical prediction concerns the joint behavior 
of the marginal propensity to consume out of wealth 
($\partial c / \partial k$) and the average propensity to 
consume ($c/Y$). Table~\ref{tab:mpc_apc} reports these 
statistics at the two attractors.

\begin{table}[htbp]
	\centering
	\caption{MPC and APC at the attractors}
	\label{tab:mpc_apc}
	\begin{tabular}{lcccc}
		\toprule
		& $\partial c_j/\partial k$ & $c_j/Y_j$ 
		& $f'_j - \delta$ & Drift $\mu_j$ \\
		\midrule
		L-attractor ($k = 11.50$) 
		& 0.055 & 0.897 & 0.044 & $\approx 0$ \\
		H-attractor ($k = 16.12$) 
		& 0.073 & 0.897 & 0.044 & $\approx 0$ \\
		\midrule
		Benchmark $\rho$ & 0.050 & --- & 0.050 & --- \\
		\bottomrule
	\end{tabular}
\end{table}

At $\kssc{L}$, the MPC is $0.055 \approx \rho$: a windfall 
$\Delta k$ raises consumption by only 
$\rho \cdot \Delta k$ per unit time. The agent has no 
incentive to save the transfer---the net marginal return is 
approximately $\rho$, just enough to compensate impatience 
---so it is absorbed into consumption while capital remains 
practically unchanged at $\kssc{L}$. Meanwhile, her APC is 
$0.897$: she consumes nearly all her income, saving nothing 
in net terms.

This combination---low MPC, high APC---is the diagnostic 
signature of the structural trap. Under the standard 
Bewley-Kaplan-Violante framework, low MPC signals financial 
comfort and high APC signals financial stress. In our model, 
both are consequences of the zero-drift condition at a 
low-productivity attractor. The agent is not ``hitting a 
wall'' (a binding constraint, which would produce 
\textit{high} MPC) but ``sitting in a hole'' (a 
low-productivity basin, which produces low MPC and high APC 
simultaneously).

The population-weighted MPCs confirm the cross-sectional 
pattern:
\begin{align*}
	\E[\partial c/\partial k \mid L] &= 0.090, \qquad
	\E[\partial c/\partial k \mid W] = 0.085, \\
	\E[\partial c/\partial k \mid H] &= 0.097, \qquad
	\E[\partial c/\partial k] = 0.094.
\end{align*}
The L/H ratio is $0.93$: structurally trapped agents have 
\textit{lower} marginal responsiveness to wealth than 
high-productivity agents, despite being poorer. This is the 
opposite of what a borrowing constraint model predicts and 
provides a testable cross-sectional implication.

\subsection{Agent Phenotypes}\label{sec:phenotypes}

The simulation identifies five distinct phenotypes, 
characterized by their position in the state-capital space 
and their consumption diagnostics:

\begin{enumerate}[label=(\roman*)]
	\item \textit{Hand-to-mouth} ($k < 2$, state L): 
	negligible share ($< 0.01\%$). High MPC 
	($\approx 0.25$), low APC ($\approx 0.64$). The 
	closest analogue to the Bewley-Kaplan-Violante 
	constrained agent.
	\item \textit{Structurally trapped} 
	($\phi < k \approx \kssc{L}$, state L): share 
	$\approx 20\%$. Low MPC ($\approx \rho$), high APC 
	($\approx 0.90$). At the zero-drift interior optimum.
	\item \textit{Frustrated Aspirants} (state W, wait 
	zone): share $11.8\%$. MPC slightly below the 
	L-average because the option value raises the shadow 
	price of wealth.
	\item \textit{Decaying Rentiers} ($k > k^*$, state L): 
	share $4.2\%$. Former H-agents sliding back toward 
	$\kssc{L}$ after the obsolescence shock. Low MPC, 
	temporarily high capital.
	\item \textit{Successful signalers} (state H): share 
	$63.0\%$. MPC $\approx 0.097$, concentrated near 
	$\kssc{H}$.
\end{enumerate}

\section{Policy Implications}\label{sec:policy}

The model's central implication for policy is that the 
standard toolkit---wealth transfers and credit 
relaxation---is ineffective against structural traps. An 
agent at $\kssc{L}$ who receives a transfer $\Delta k$ 
has no incentive to save it: the net marginal return is 
approximately $\rho$, so the windfall is absorbed into 
consumption and capital drifts back to $\kssc{L}$. The 
transfer is palliative, not structural. Effective policy 
must instead alter the parameters that define the trap: 
the signaling cost, the opportunity arrival rate, or the 
obsolescence rate.

\subsection{Policy Channels}\label{sec:instruments}

The model identifies four distinct channels, each 
targeting a different source of immobility.

\smallskip\noindent\textit{Reducing the signaling cost 
	$\phi$.}\enspace This is the most powerful lever. A 
lower $\phi$ simultaneously makes signaling affordable 
at lower wealth and lowers the Skiba threshold $k^*$, 
shrinking the wait zone and increasing the flow from L 
to H. In the limit $\phi \to 0$, signaling is immediate 
upon opportunity arrival and the poverty trap vanishes. 
Concrete examples include tuition subsidies, credential 
recognition programs, and reduced barriers to 
professional licensing. The comparative statics confirm 
that as $\phi$ increases, the economy transitions through 
the three regimes of Remark~\ref{rmk:skiba_cases}: 
immediate signaling, interior threshold (the baseline), 
and no signaling ($k^* = \infty$, H-density identically 
zero).

\smallskip\noindent\textit{Increasing opportunity arrival 
	$\lambda_{LH}$.}\enspace A higher $\lambda_{LH}$ raises 
the option value for L-agents, pushing $\kssc{L}$ upward 
(Proposition~\ref{prop:L_shift}) and increasing the flow 
into the H-regime. This is the channel through which labor 
market institutions operate: job matching services, 
information networks, and credential transparency all 
increase the rate at which low-productivity agents 
encounter signaling opportunities. However, very high 
$\lambda_{LH}$ accelerates turnover between regimes, 
potentially merging the two modes into a single peak 
(condition C3 of Section~\ref{sec:bimodality}).

\smallskip\noindent\textit{Reducing obsolescence risk 
	$\lambda_{HL}$.}\enspace A lower $\lambda_{HL}$ yields a 
double dividend: it extends the expected duration of the 
H-regime, making signaling more attractive (lowering 
$k^*$), and it extends the window for W-agents to 
accumulate toward $k^*$ before losing the opportunity. 
Policies targeting continuous professional development, 
retraining, and adaptive credentialing operate through 
this channel.

\smallskip\noindent\textit{Raising the TFP gap 
	$A_H/A_L$.}\enspace A larger productivity gap widens the 
distance between attractors and raises the return to 
signaling, lowering $k^*$. Both effects increase $\pi_H$, 
but also deepen polarization: the average agent is richer, 
while the trapped agents are relatively worse off. In 
practice, this corresponds to investments in the 
\textit{quality} of human capital (raising the 
productivity premium of education) rather than its 
accessibility.

\subsection{Comparative Statics}\label{sec:compstatics}

We briefly discuss how the remaining structural parameters 
affect the equilibrium, focusing on the three objects that 
define the trap: $k^*$, the attractor gap 
$\kssc{H} - \kssc{L}$, and the ergodic shares.

\smallskip\noindent\textit{Capital elasticity 
	$\alpha$.}\enspace As $\alpha$ increases toward 1, 
diminishing returns weaken: the attractors move rightward 
and farther apart, but the basins become shallower, 
allowing the diffusion to spread each mode. In the limit 
$\alpha \to 1$, stable interior attractors vanish 
(Remark~\ref{rmk:cd_not_linear}) and the poverty trap 
disappears. A lower $\alpha$ strengthens the trap by 
sharpening diminishing returns and steepening the basins.

\smallskip\noindent\textit{Diffusion $\sigma$.}\enspace 
Larger $\sigma$ spreads each mode (the stationary standard 
deviation scales as 
$\sigma / \sqrt{2|\mu'_j(\kssc{j})|}$) and strengthens 
the precautionary motive, shifting both attractors upward 
(Proposition~\ref{prop:stoch_ss}). For bimodality, the 
attractor gap must exceed the combined spread (condition 
C2): sufficiently large $\sigma$ merges the two modes 
into one.

\subsection{Diagnostic Implications}\label{sec:lessons}

The MPC/APC taxonomy provides a practical tool for 
identifying the nature of deprivation and directing 
interventions accordingly. Table~\ref{tab:diagnostic} 
summarizes the mapping from observables to policy.

\begin{table}[htbp]
	\centering
	\caption{Diagnostic guide: from observables to policy}
	\label{tab:diagnostic}
	\begin{tabular}{lcccl}
		\toprule
		Phenotype & MPC & APC & Drift 
		& Intervention \\
		\midrule
		Liquidity constrained 
		& High & Low & $\mu > 0$ 
		& Wealth transfer \\
		Structurally trapped 
		& Low & High & $\mu = 0$ 
		& Reduce $\phi$ or raise $\lambda_{LH}$ \\
		Frustrated Aspirant 
		& Low & High & $\mu > 0$ 
		& Reduce $k^*$ or slow $\lambda_{HL}$ \\
		Decaying Rentier 
		& Low & High & $\mu < 0$ 
		& Slow $\lambda_{HL}$ \\
		\bottomrule
	\end{tabular}
\end{table}

Standard consumption-based tests detect only the first 
group (high MPC). The structurally trapped, Frustrated 
Aspirants, and Decaying Rentiers---who together account 
for over 35\% of the population---are invisible to these 
tests because their low MPC mimics the signature of an 
unconstrained optimizer. An empirical strategy that 
conditions on \textit{both} MPC and APC can separate the 
two populations and target interventions accordingly: 
transfers to the liquidity constrained, structural reforms 
for the trapped.
%% ============================================================
\section{Conclusion}\label{sec:conclusion}

This paper embeds a signaling friction into the 
continuous-time heterogeneous agent framework. Agents 
operate Cobb-Douglas technologies with regime-specific 
productivity $A_j \in \{A_L, A_H\}$ and face stochastic 
arrival of signaling opportunities and skill obsolescence 
risk. The optimal stopping problem---when to pay the 
lump-sum cost $\phi$ to upgrade productivity---generates 
an endogenous Skiba threshold $k^*$ that partitions the 
state space into three regions: absolute exclusion, 
rational exclusion (the wait zone), and immediate 
signaling. The stationary distribution exhibits Twin 
Peaks, with agents in three distinct states coexisting 
at the same wealth levels.

The paper's main contribution is to identify a 
structurally distinct form of poverty trap and its 
diagnostic signature. The trap is an interior optimum, 
not a corner solution: at the L-attractor, diminishing 
returns have driven the risk-adjusted marginal return to 
saving down to the discount rate, so no further 
accumulation is optimal. The agent is not constrained 
---she is rationally immobile. This produces a 
consumption signature that is invisible to standard 
tests: low MPC out of wealth ($\partial c/\partial k 
\approx \rho$) combined with high APC ($c/Y \approx 
0.90$). The MPC alone would classify the agent as 
unconstrained; the APC alone would suggest financial 
stress. Only the joint observation reveals the trap.

A second contribution is to show that capital alone is 
insufficient to identify an agent's position in the 
polarization dynamics. The decomposition of the 
stationary distribution by regime reveals that 
structurally trapped, waiting, and successfully 
upgraded agents coexist at the same wealth levels with 
radically different consumption behavior and mobility 
prospects. This has direct implications for empirical 
work: any test of the model requires information on the 
productivity regime, not only on wealth.

The model also delivers a behavioral taxonomy that maps 
naturally onto policy. The MPC/APC diagnostic separates 
agents who need transfers (liquidity constrained, high 
MPC) from those who need structural reform (trapped, low 
MPC). The most effective intervention targets the 
signaling cost $\phi$---through subsidized training, 
credential reform, or lower barriers to entry---rather 
than the wealth level. Slowing obsolescence 
($\lambda_{HL}$) yields a double dividend by 
simultaneously raising the value of the H-regime and 
extending the window for reaching $k^*$.

Several extensions are natural. First, decomposing 
$\lambda_{HL}$ into separate rates for opportunity loss 
(W $\to$ L) and obsolescence (H $\to$ L) would break the 
common-rate symmetry and allow for richer policy analysis. 
Second, embedding the model in an overlapping generations 
framework would connect the signaling decision to 
intergenerational mobility \citep{Chetty2014}. Third, 
introducing endogenous wages through a matching model 
would allow for general equilibrium effects on the 
productivity premium. Finally, the consumption diagnostic 
is testable with household panel data: the model predicts 
a cluster of households with simultaneously low MPC and 
high APC that standard tests would classify as 
unconstrained but that are, in fact, structurally 
immobile.
	
	%\nocite{*}
	\bibliography{biblio3}
	
\section{Appendix} \label{app}

\subsection{Proof of Proposition \ref{prop:concavity}} \label{app:A1}
\begin{proof}
	We prove the result for $\VL$; the argument for $\VH$ is 
	identical, and the result for $\VW$ follows by an additional 
	argument addressing the signaling option.
	
	\medskip\noindent\textit{Step 1: Concavity of $\VL$ and $\VH$.}
	Fix a regime $j \in \{L, H\}$ and consider two initial conditions 
	$k^a, k^b \in (0, \infty)$ with $k^a \neq k^b$. Let 
	$\{c^a_t\}$ and $\{c^b_t\}$ denote the respective optimal 
	consumption policies, and let $\{k^a_t\}$ and $\{k^b_t\}$ 
	denote the resulting capital paths. For $\theta \in (0,1)$, 
	define $k^\theta_0 = \theta k^a_0 + (1-\theta) k^b_0$ and 
	consider the \textit{feasible} (not necessarily optimal) policy 
	$c^\theta_t = \theta c^a_t + (1-\theta) c^b_t$ starting from 
	$k^\theta_0$.
	
	Since the state equation \eqref{eq:state} is linear in $(k, c)$ 
	for a given Brownian path $\{W_t\}$ (the drift 
	$f_j(k) - c - \delta k$ is concave in $k$ and linear in $c$, 
	and the diffusion $\sigma$ is constant), the capital path under 
	$c^\theta_t$ satisfies, pathwise:
	\[
	k^\theta_t \geq \theta k^a_t + (1-\theta) k^b_t 
	\quad \text{for all } t \geq 0,
	\]
	where the inequality follows from the concavity of 
	$f_j(k) = A_j k^\alpha$ in $k$ ($\alpha < 1$). By strict 
	concavity of the CRRA utility function $u$:
	\[
	u(c^\theta_t) = u(\theta c^a_t + (1-\theta) c^b_t) 
	> \theta\, u(c^a_t) + (1-\theta)\, u(c^b_t).
	\]
	Taking expectations and integrating:
	\[
	V_j(k^\theta_0) \geq \E\left[\int_0^\infty e^{-\rho t}\, 
	u(c^\theta_t)\, \dd t\right] 
	> \theta\, V_j(k^a_0) + (1-\theta)\, V_j(k^b_0),
	\]
	where the first inequality uses the fact that $c^\theta_t$ is 
	feasible (but not necessarily optimal) from $k^\theta_0$, and 
	the second uses the strict concavity of $u$. This establishes 
	strict concavity of $V_j$ on $(0, \infty)$.
	
	\medskip\noindent\textit{Step 2: Extension to $\VW$.}
	The W-agent has the same production function as the L-agent 
	($A_L k^\alpha$) plus an additional option: the right to signal 
	at a self-chosen time $\tau$. The argument of Step~1 applies 
	to the ``continuation value'' component of $\VW$ (the HJB part 
	of the variational inequality). It remains to show that the 
	signaling option preserves concavity.
	
	For $k \geq \phi$, we have 
	$\VW(k) = \max\bigl(\VW^{\text{cont}}(k),\, \VH(k - \phi)\bigr)$, 
	where $\VW^{\text{cont}}$ is the continuation value (the 
	solution to the unconstrained HJB). By Step~1, 
	$\VW^{\text{cont}}$ is strictly concave. By Step~1 applied to 
	regime $H$, $\VH$ is strictly concave, and hence 
	$k \mapsto \VH(k - \phi)$ is strictly concave on 
	$[\phi, \infty)$. The pointwise maximum of two concave 
	functions is concave (though not necessarily strictly so). 
	Strict concavity of $\VW$ follows because, at each $k$, exactly 
	one of the two functions determines $\VW$ (by the complementary 
	slackness structure of the variational inequality 
	\eqref{eq:hjb_W}), and both are strictly concave on their 
	respective active regions. At the boundary $k^*$, smooth 
	pasting \eqref{eq:smooth_pasting} ensures that $\VW$ is $C^1$, 
	so $\VW$ is strictly concave on $(0, \infty)$.
\end{proof}
\subsection{Proof of Proposition \ref{prop:ordering}}
\label{app:A2}
\begin{proof}
	
	\textit{Part (i): $\VH(k) > \VW(k) > \VL(k)$ for all 
		$k > 0$.}
	
	\smallskip\noindent
	We establish the two inequalities separately.
	
	\smallskip\noindent
	\textit{Step 1: $\VH(k) > \VL(k)$.} Fix $k > 0$ and 
	consider the H-agent mimicking the L-agent's optimal 
	consumption policy $\{c^L_t\}$. Under this strategy, 
	the H-agent's capital satisfies:
	\[
	\dd k^H_t = \bigl[A_H (k^H_t)^\alpha - c^L_t 
	- \delta k^H_t\bigr]\dd t + \sigma\, \dd W_t.
	\]
	Since $A_H > A_L$ and both agents start at $k_0 = k$, 
	the drift is strictly higher at every instant where 
	$k^H_t > 0$ (which holds almost surely by the 
	reflecting barrier). By the comparison theorem for 
	SDEs with common noise, $k^H_t \geq k^L_t$ pathwise, 
	and the mimicking strategy is therefore feasible. The 
	H-agent consuming $\{c^L_t\}$ obtains exactly $\VL(k)$. 
	Since mimicking is not optimal---the H-agent can 
	strictly improve by re-optimizing given $A_H > A_L$---we 
	conclude $\VH(k) > \VL(k)$.
	
	\smallskip\noindent
	\textit{Step 2: $\VW(k) > \VL(k)$.} The W-agent can 
	mimic the L-agent's consumption policy while ignoring 
	the signaling opportunity entirely. Under this strategy, 
	she produces with $A_L$ (since $A_W = A_L$) and follows 
	the same capital path as the L-agent. If the 
	opportunity is lost at rate $\lambda_{HL}$, she 
	transitions to state L and continues with the 
	L-optimal policy, which she is already following---so 
	the transition has no effect on her payoff. The 
	``ignore the option'' strategy therefore yields exactly 
	$\VL(k)$. Since $\VW$ is the supremum over all 
	strategies, $\VW(k) \geq \VL(k)$. The inequality is 
	strict for all $k > 0$ whenever $\kstar < \infty$ 
	(Remark~\ref{rmk:skiba_cases}, cases (a)--(b)): the 
	signaling option has positive value because there 
	exists a future wealth level at which exercise is 
	optimal, and the positive probability of reaching 
	that level (via drift and diffusion) gives the option 
	strictly positive present value.
	
	\smallskip\noindent
	\textit{Step 3: $\VH(k) > \VW(k)$.} The H-agent is 
	already at productivity $A_H$ without having to pay 
	any cost. The W-agent produces at $A_L$ and, to reach 
	$A_H$, must pay $\phi$ at some future date---if she 
	reaches $\kstar$ before the opportunity expires. 
	Formally, consider the H-agent mimicking the W-agent's 
	optimal strategy (same consumption, same signaling 
	timing if applicable). Since $A_H > A_L = A_W$, the 
	H-agent's capital path dominates the W-agent's by the 
	same comparison argument as Step~1. Moreover, the 
	H-agent need not pay any signaling cost and faces no 
	risk of losing a pending opportunity. She obtains at 
	least the same consumption stream at strictly lower 
	cost, so $\VH(k) \geq \VW(k)$. The inequality is 
	strict because the W-agent operates under $A_L$ for a 
	positive expected duration before (possibly) reaching 
	$A_H$, while the H-agent enjoys $A_H$ from $t = 0$.
	
	\medskip\noindent
	\textit{Part (ii): $\VW(k) \geq \VH(k - \phi)$ for 
		$k \geq \phi$, with equality if and only if 
		$k \geq \kstar$.}
	
	\smallskip\noindent
	The inequality follows immediately from the definition 
	of $\VW$: immediate signaling at $k \geq \phi$ is an 
	admissible strategy yielding payoff $\VH(k - \phi)$, 
	and $\VW(k)$ is the supremum over all admissible 
	strategies. This is precisely the American constraint 
	in the variational inequality \eqref{eq:hjb_W}.
	
	For the characterization of equality, recall the 
	definition of the signaling surplus 
	$D(k) = \VW(k) - \VH(k - \phi)$. In the wait region 
	$k < \kstar$, the HJB equation binds and $D(k) > 0$: 
	the agent strictly prefers to hold the option. In the 
	signal region $k \geq \kstar$, immediate exercise is 
	optimal: $\VW(k) = \VH(k - \phi)$ and $D(k) = 0$. 
	Therefore $D(k) = 0$ if and only if $k \geq \kstar$.
	
	\medskip\noindent
	\textit{Part (iii): Full ordering.}
	
	\smallskip\noindent
	Combining Parts (i) and (ii):
	\[
	\VH(k) > \VW(k) \geq \VH(k - \phi) 
	\quad \text{for } k \geq \phi,
	\]
	with $\VW(k) = \VH(k - \phi)$ if and only if 
	$k \geq \kstar$. For $k < \phi$, signaling is 
	infeasible and the ordering reduces to 
	$\VH(k) > \VW(k) > \VL(k)$.
\end{proof}

\subsection{Proof of Proposition \ref{prop:euler}} \label{app:A3}
	\begin{proof}
	\textit{Step 1: First-order condition.} The HJB equation for 
	state $j \in \{L, H\}$ (equations \eqref{eq:hjb_L} and 
	\eqref{eq:hjb_H}) requires:
	\[
	\rho V_j = \max_{c > 0}\bigl\{u(c) 
	+ \mu_j(k, c)\, V_j' 
	+ \half\sigma^2 V_j''\bigr\} 
	+ \lambda_{\mathrm{out},j}(V_{\mathrm{other}} - V_j),
	\]
	where $\mu_j(k, c) = f_j(k) - c - \delta k$ is the drift. The 
	first-order condition for the interior maximum over $c$ is:
	\begin{equation}\label{eq:foc}
		u'(c_j) = V_j'(k).
	\end{equation}
	Under CRRA utility, $u'(c) = c^{-\gamma}$, so:
	\begin{equation}\label{eq:foc_crra}
		c_j(k) = \bigl(V_j'(k)\bigr)^{-1/\gamma}.
	\end{equation}
	
	\medskip\noindent
	\textit{Step 2: Differentiate the HJB with respect to $k$.} 
	Differentiating the HJB equation (after substituting the optimal 
	$c_j$) with respect to $k$ yields the \textit{envelope condition}:
	\begin{equation}\label{eq:envelope}
		\rho V_j' = \bigl[f_j'(k) - \delta\bigr] V_j' 
		+ \mu_j\, V_j'' + \half\sigma^2 V_j''' 
		+ \lambda_{\mathrm{out},j}(V_{\mathrm{other}}' - V_j').
	\end{equation}
	This uses the fact that the optimal $c_j$ satisfies the FOC 
	\eqref{eq:foc}, so the direct effect of $k$ on $c_j$ cancels 
	by the envelope theorem (the derivative of the maximand with 
	respect to $c$ is zero at the optimum).
	
	\medskip\noindent
	\textit{Step 3: Apply It\^{o}'s lemma to $V_j'(k_t)$.} Along 
	the optimal path, the co-state variable $p_t = V_j'(k_t)$ 
	evolves according to It\^{o}'s lemma:
	\begin{equation}\label{eq:ito_Vprime}
		\dd p_t = \bigl[\mu_j\, V_j'' 
		+ \half\sigma^2 V_j'''\bigr]\dd t 
		+ \sigma V_j''\, \dd W_t 
		+ \text{(jump terms from regime switching)}.
	\end{equation}
	Taking expectations and using \eqref{eq:envelope}, the expected 
	drift of $p_t$ is:
	\begin{equation}\label{eq:pdot}
		\E[\dd p_t] / \dd t = \rho V_j' 
		- \bigl[f_j'(k) - \delta\bigr] V_j' 
		- \lambda_{\mathrm{out},j}
		(V_{\mathrm{other}}' - V_j').
	\end{equation}
	
	\medskip\noindent
	\textit{Step 4: Translate to consumption dynamics.} From the 
	FOC \eqref{eq:foc}, $p_t = u'(c_j) = c_j^{-\gamma}$. 
	Differentiating with respect to time:
	\begin{equation}\label{eq:cdot_from_p}
		\frac{\dd p_t}{p_t} = -\gamma \frac{\dd c_j}{c_j} 
		+ \half\gamma(\gamma + 1)\left(\frac{\dd c_j}{c_j}\right)^2 
		+ \cdots
	\end{equation}
	where the second term arises from It\^{o}'s correction applied 
	to the nonlinear function $c^{-\gamma}$. To obtain the 
	consumption growth rate, we use the chain rule. Since 
	$c_j = (V_j')^{-1/\gamma}$, applying It\^{o}'s lemma to 
	$c_j(k_t)$:
	\[
	\dd c_j = c_j'(k)\, \dd k_t 
	+ \half c_j''(k)\, \sigma^2\, \dd t.
	\]
	The expected consumption growth rate is therefore:
	\begin{equation}\label{eq:cdot}
		\frac{\E[\dd c_j]}{c_j\, \dd t} 
		= \frac{c_j'}{c_j}\, \mu_j 
		+ \half\sigma^2 \frac{c_j''}{c_j}.
	\end{equation}
	
	\medskip\noindent
	\textit{Step 5: Combine.} Substituting \eqref{eq:pdot} into 
	the relationship between $p_t$ and $c_j$, and using the FOC 
	\eqref{eq:foc} to replace $V_{\mathrm{other}}'$ with 
	$u'(c_{\mathrm{other}})$, we obtain after rearranging:
	\begin{equation}\label{eq:euler_derived}
		\frac{\dot{c}_j}{c_j} = \frac{1}{\gamma}\biggl[
		f_j'(k) - \delta - \rho 
		+ \half\gamma(\gamma+1)\sigma^2 \frac{c_j''}{c_j} 
		- \lambda_{\mathrm{out},j}
		\Bigl(1 - \frac{u'(c_{\mathrm{other}})}{u'(c_j)}\Bigr)
		\biggr],
	\end{equation}
	where we identify $\dot{c}_j / c_j$ with the expected 
	consumption growth rate \eqref{eq:cdot} evaluated at the 
	deterministic component of the dynamics.
\end{proof}

\subsection{Gauss--Seidel Policy Iteration with American 
	Projection}\label{sec:gs}

The coupled HJB system is solved by iterating on the three 
value functions in sequence, using an implicit time step for 
stability and an American projection step for the variational 
inequality in state W.

\begin{algorithm}[H]
	\caption{Coupled HJB solver with American projection}
	\label{alg:hjb}
	\begin{algorithmic}[1]
		\STATE \textbf{Initialize}: Set $\VL^0, \VW^0, \VH^0$ 
		(e.g., $V_j^0(k_i) = u(f_j(k_i) - \delta k_i) / \rho$).
		\FOR{$n = 1, 2, \ldots$ until convergence}
		
		\STATE \textbf{Step 1 (L-update)}: Compute upwind 
		coefficients $\{x_i^L, y_i^L, z_i^L\}$ from $\VL^{n-1}$. 
		Solve the implicit system:
		\[
		\left(\frac{1}{\Delta t} + \rho - z_i^L 
		+ \lambda_{LH}\right) V_{L,i}^n 
		- x_i^L V_{L,i+1}^n - y_i^L V_{L,i-1}^n 
		= \frac{V_{L,i}^{n-1}}{\Delta t} + u(c_{L,i}^{n-1}) 
		+ \lambda_{LH}\, V_{W,i}^{n-1}.
		\]
		This is a tridiagonal system $A_L \mathbf{V}_L^n 
		= \mathbf{b}_L$, solved in $O(N)$ operations.
		
		\STATE \textbf{Step 2 (W-update)}: Compute upwind 
		coefficients from $\VW^{n-1}$. Solve:
		\[
		\left(\frac{1}{\Delta t} + \rho - z_i^W 
		+ \lambda_{HL}\right) V_{W,i}^n 
		- x_i^W V_{W,i+1}^n - y_i^W V_{W,i-1}^n 
		= \frac{V_{W,i}^{n-1}}{\Delta t} + u(c_{W,i}^{n-1}) 
		+ \lambda_{HL}\, V_{L,i}^{n},
		\]
		where $V_{L,i}^n$ uses the \textit{just-updated} 
		L-values from Step~1 (Gauss-Seidel ordering).
		
		\STATE \textbf{Step 3 (American projection)}: For each 
		grid point $k_i \geq \phi$, compute the signaling 
		payoff $v_{\mathrm{sig},i} = V_{H}^{n-1}(k_i - \phi)$ 
		(interpolating if $k_i - \phi$ falls between grid 
		points). If $v_{\mathrm{sig},i} > V_{W,i}^n$:
		\begin{align*}
			V_{W,i}^n &\gets v_{\mathrm{sig},i}, \\
			\texttt{signal\_region}(i) &\gets \textbf{true}.
		\end{align*}
		Otherwise, $\texttt{signal\_region}(i) \gets 
		\textbf{false}$.
		
		\STATE \textbf{Step 4 (W-policy recomputation)}: 
		Recompute the consumption policy $c_{W,i}^n$, drift 
		$\mu_{W,i}^n$, and upwind coefficients from the 
		\textit{projected} $\VW^n$. In the signal region, 
		set $c_{W,i}^n = c_{H}^{n-1}(k_i - \phi)$ and 
		$\mu_{W,i}^n = \mu_{H}^{n-1}(k_i - \phi)$ 
		(the W-agent who signals behaves as an H-agent at 
		post-signaling wealth).
		
		\STATE \textbf{Step 5 (H-update)}: Compute upwind 
		coefficients from $\VH^{n-1}$. Solve:
		\[
		\left(\frac{1}{\Delta t} + \rho - z_i^H 
		+ \lambda_{HL}\right) V_{H,i}^n 
		- x_i^H V_{H,i+1}^n - y_i^H V_{H,i-1}^n 
		= \frac{V_{H,i}^{n-1}}{\Delta t} + u(c_{H,i}^{n-1}) 
		+ \lambda_{HL}\, V_{L,i}^{n}.
		\]
		
		\STATE \textbf{Step 6 (Convergence check)}: 
		$\varepsilon^n = \max_{j \in \{L,W,H\}} 
		\|\mathbf{V}_j^n - \mathbf{V}_j^{n-1}\|_\infty$. 
		Stop if $\varepsilon^n < \texttt{tol}$ (typically 
		$\texttt{tol} = 10^{-8}$).
		
		\ENDFOR
		\STATE \textbf{Output}: Value functions 
		$\{\VL, \VW, \VH\}$, consumption policies 
		$\{c_L, c_W, c_H\}$, signal region, Skiba threshold 
		$k^* = k_{\min\{i : \texttt{signal\_region}(i) = 
			\textbf{true}\}}$.
	\end{algorithmic}
\end{algorithm}

\begin{remark}[Gauss-Seidel ordering]\label{rmk:gs}
	The ordering $L \to W \to H$ is chosen to follow the 
	direction of regime transitions: L feeds into W (via 
	$\lambda_{LH}$), W feeds into H (via signaling), and H 
	feeds back into L (via $\lambda_{HL}$). At each step, the 
	most recently updated value function is used for the 
	neighboring state, which accelerates convergence relative 
	to a Jacobi scheme (where all states use the previous 
	iterate).
\end{remark}

\begin{remark}[Signal region detection]\label{rmk:signal_region}
	After the American projection, 
	$\VW(k_i) \geq \VH(k_i - \phi)$ holds by construction at 
	every grid point. The signaling surplus 
	$D(k_i) = \VW(k_i) - \VH(k_i - \phi)$ is therefore 
	non-negative everywhere, and using $D \leq 0$ as the 
	signaling criterion would incorrectly yield ``never signal.'' 
	The correct indicator is the boolean array 
	\texttt{signal\_region}, which records where the 
	\textit{unconstrained} solution from Step~2 was overwritten 
	by the projection in Step~3.
\end{remark}

\begin{remark}[Post-projection policy recomputation]%
	\label{rmk:post_proj}
	Step~4 is essential for the consistency of the KFE. After 
	the American projection modifies $\VW$ in the signal region, 
	the consumption policy, drift, and generator coefficients for 
	the W-state must be recomputed from the projected value 
	function. Failing to do so would feed pre-projection policies 
	into the KFE, creating a mismatch between the individual 
	optimization (HJB) and the distributional accounting (KFE). 
	In the signal region, the W-agent's behavior is identical to 
	an H-agent at wealth $k - \phi$: she has already decided to 
	signal, so her consumption and drift are those of state H 
	evaluated at post-signaling capital.
\end{remark}

\begin{remark}[Implicit time step and convergence speed]%
	\label{rmk:timestep}
	The implicit scheme is unconditionally stable: arbitrarily 
	large $\Delta t$ can be used without numerical instability. 
	In practice, we start with a moderate $\Delta t$ 
	(e.g., $\Delta t = 10$) and increase it geometrically as 
	the iteration converges, following \cite{Achdou2022}. Under 
	the baseline calibration, the algorithm converges in 
	8 iterations under this calibration.
\end{remark}

\subsection{Kolmogorov Forward Equation}\label{app:kfe}
The stationary wealth distribution is the solution to the 
Kolmogorov Forward Equation (KFE)---the adjoint of the HJB 
operator---subject to the non-local transfer induced by 
signaling. This section derives the discretized KFE system, 
describes the signaling transfer mechanism, and characterizes 
the properties of the resulting ergodic distribution.

\subsubsection{The Adjoint Operator}\label{sec:adjoint}

\begin{definition}[Generator and adjoint]\label{def:generator}
	The infinitesimal generator of the wealth process in state 
	$j$ is:
	\[
	\mathcal{L}_j f(k) = \mu_j(k)\, f'(k) 
	+ \half\sigma^2 f''(k),
	\]
	where $\mu_j(k) = f_j(k) - c_j(k) - \delta k$ is the 
	optimal drift determined by the HJB system. Its 
	$L^2$-adjoint (Fokker--Planck operator) is:
	\begin{equation}\label{eq:adjoint}
		\mathcal{L}_j^* g(k) = -\frac{\partial}{\partial k}
		\bigl[\mu_j(k)\, g(k)\bigr] 
		+ \half\sigma^2 \frac{\partial^2 g}{\partial k^2}.
	\end{equation}
	The first term represents advection (probability mass is 
	transported by the drift), and the second represents 
	diffusion (probability mass spreads due to Brownian noise).
\end{definition}

\begin{proposition}[Discrete adjoint = transpose]%
	\label{prop:adjoint_discrete}
	Let $L_j$ be the $N \times N$ HJB generator matrix with 
	entries:
	\[
	(L_j)_{i,i-1} = y_i, \qquad 
	(L_j)_{i,i} = z_i = -(x_i+y_i), \qquad 
	(L_j)_{i,i+1} = x_i,
	\]
	where $x_i$, $y_i$, $z_i$ are the upwind coefficients 
	\eqref{eq:xi}--\eqref{eq:zi}. Then the KFE generator is 
	$L_j^T$:
	\begin{equation}\label{eq:kfe_gen}
		(L_j^T)_{i,i-1} = x_{i-1}, \qquad 
		(L_j^T)_{i,i} = z_i, \qquad 
		(L_j^T)_{i,i+1} = y_{i+1}.
	\end{equation}
	The sub- and super-diagonals are swapped relative to the 
	HJB generator: the KFE propagates probability mass in the 
	direction of the drift, while the HJB propagates value 
	against the drift.
\end{proposition}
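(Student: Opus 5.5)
The plan is a direct matrix-level duality argument. Write the discretized KFE as the statement that the stationary density vector $\mathbf{g}_j$ annihilates the generator against every test vector. I would show that the matrix $M$ with $\mathbf{g}_j^T L_j \mathbf{v} = \mathbf{v}^T M \mathbf{g}_j$ for all $\mathbf{v}$ is exactly $L_j^T$. The continuous analogue is the defining property of the adjoint in Definition~\ref{def:generator}: $\int g\,\mathcal{L}_j v \dd k = \int v\,\mathcal{L}_j^* g \dd k$ for test functions $v$ satisfying the boundary conditions of Proposition~\ref{prop:bc_lower}. The discrete counterpart, with the uniform grid weights $\Delta k$ canceling on both sides, is the Euclidean identity $\langle \mathbf{g}, L_j \mathbf{v}\rangle = \langle L_j^T \mathbf{g}, \mathbf{v}\rangle$.

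First, I would write out $(L_j\mathbf{v})_i = y_i v_{i-1} + z_i v_i + x_i v_{i+1}$ and compute $\sum_i g_i (L_j \mathbf{v})_i$. Re-indexing the first sum by $i \mapsto i+1$ and the third by $i \mapsto i-1$ collects the coefficient of $v_i$ as $x_{i-1} g_{i-1} + z_i g_i + y_{i+1} g_{i+1}$. This coefficient is precisely the $i$-th row of $L_j^T\mathbf{g}$ with the entries in \eqref{eq:kfe_gen}. Second, I would check that this row is a consistent upwind discretization of $\mathcal{L}_j^*$ in \eqref{eq:adjoint}. The diffusion parts of $x_{i-1}$, $y_{i+1}$ and $z_i$ give the centered second difference of $g$. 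The advection parts give $-\bigl[\max(\mu_i,0)g_i - \max(\mu_{i-1},0)g_{i-1}\bigr]/\Delta k$ plus the analogous $\min$ terms, which is a conservative flux difference approximating $-\partial_k(\mu_j g)$. This also justifies the interpretive claim: mass at $i-1$ moves to $i$ when $\mu_{i-1}>0$, so the subdiagonal of the KFE matrix carries forward-drift flow.

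The step I expect to be the main obstacle is the boundary rows $i=1$ and $i=N$. There the re-indexing produces no stray terms only because $y_1 = 0$ and $x_N = 0$ (Remark~\ref{rmk:bc_numerical}). I would verify that with these conventions the column sums of $L_j^T$ vanish, because the row sums of $L_j$ vanish, so $\mathbf{1}^T L_j^T\mathbf{g} = 0$. This shows the transposed scheme conserves mass and encodes the zero-flux condition $J_j(0)=0$. One caveat needs flagging: if the diffusion contribution to $y_1$ is dropped as well, then $z_1$ must be redefined as $-x_1$ to keep row sums zero. The statement should be read with $z_i$ defined by \eqref{eq:zi} using the boundary-modified coefficients. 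Regime-switching and signaling source terms are not part of $L_j$ and are handled separately in the full KFE.
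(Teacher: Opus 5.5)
Your proposal is correct. Its core step, collecting the coefficient of $v_i$ in $\sum_i g_i(L_j\mathbf{v})_i$ to get $x_{i-1}g_{i-1}+z_ig_i+y_{i+1}g_{i+1}$, is the same index bookkeeping as the paper's proof. The paper simply reads off $(L_j^T)_{i,i-1}=(L_j)_{i-1,i}=x_{i-1}$ and $(L_j^T)_{i,i+1}=(L_j)_{i+1,i}=y_{i+1}$ from the definition of the transpose.

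The difference is in what gets justified. The paper's argument verifies only the displayed entries. It takes over from the HACT literature the claim that $L_j^T$ is the correct KFE generator, and it leaves the closing sentence about mass moving with the drift as interpretation.

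Your summation-by-parts framing actually establishes the first point: on the uniform grid the $\Delta k$ weights cancel, so the discrete $L^2$-adjoint is the plain transpose. Your flux-form rewriting of the advective part of row $i$ adds two things. Writing it as a conservative upwind difference of $\max(\mu_j,0)\,g$ and $\min(\mu_j,0)\,g$ gives consistency with $\mathcal{L}_j^*$. It also gives a precise meaning to ``propagates mass in the direction of the drift.'' The cost is length; the gain is that every clause of the statement, not only \eqref{eq:kfe_gen}, is backed by an argument.

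Your boundary caveat is also well taken. $y_1$ enters $L_j^T$ only through $z_1$, and $x_N$ only through $z_N$. So the instruction to set $y_1=0$ in Remark~\ref{rmk:bc_numerical} and Proposition~\ref{prop:kfe_bc} conserves mass only if the diagonal is recomputed as $z_1=-x_1$. If instead $z_1$ is computed from the unmodified $y_1>0$, the first column of $L_j^T$ sums to $-y_1<0$ and the transposed scheme leaks mass at $k_1$.
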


\begin{proof}
	By definition, $(L_j^T)_{i,m} = (L_j)_{m,i}$. The HJB 
	generator has $(L_j)_{i-1,i} = x_{i-1}$ (the upper-diagonal 
	entry of row $i-1$), so 
	$(L_j^T)_{i,i-1} = (L_j)_{i-1,i} = x_{i-1}$. Similarly, 
	$(L_j^T)_{i,i+1} = (L_j)_{i+1,i} = y_{i+1}$.
\end{proof}

\begin{remark}[A common numerical error]\label{rmk:transpose_error}
	Using $L_j$ instead of $L_j^T$ in the KFE reverses the 
	direction of probability flow: mass is transported 
	\textit{against} the drift rather than with it, causing the 
	density to accumulate at the boundary $k_{\max}$ instead of 
	concentrating near the interior attractor $\kss_j$. This is 
	one of the most frequent implementation errors in HACT 
	models.
\end{remark}

\subsubsection{Boundary Conditions for the KFE}\label{sec:kfe_bc}

\begin{proposition}[KFE boundary conditions]%
	\label{prop:kfe_bc}
	\leavevmode
	\begin{enumerate}[label=(\roman*)]
		\item \textbf{Reflecting at $k = 0$}: The reflecting barrier 
		(Assumption~\ref{ass:prefs}(v)) implies zero probability 
		flux at the lower boundary: $J_j(0) = \mu_j(0)\, g_j(0) 
		- \frac{1}{2}\sigma^2 g_j'(0) = 0$. In the discretized 
		system, this is enforced by setting $y_1 = 0$ in the KFE 
		generator (no mass flows below $k_1$).
		\item \textbf{Vanishing density at $k = k_{\max}$}: The 
		zero-drift condition at the upper boundary 
		(Proposition~\ref{prop:bc_upper}) implies that the 
		attractor is interior to the domain. For $k_{\max}$ 
		sufficiently large, the stationary density is negligible 
		near $k_{\max}$: $g_j(k_{\max}) \approx 0$. Numerically, 
		this is implemented by setting $x_N = 0$ in the KFE 
		generator.
	\end{enumerate}
\end{proposition}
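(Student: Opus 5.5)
The plan is to prove both parts of Proposition~\ref{prop:kfe_bc} by starting from the continuous Fokker--Planck equation in state $j$, written in conservation form as $\partial_t g_j = -\partial_k J_j + (\text{switching and signaling source terms})$ with probability current $J_j = \mu_j g_j - \half\sigma^2 g_j'$ and $\mathcal{L}_j^*$ as in \eqref{eq:adjoint}. The boundary conditions will then come from duality with the backward generator $\mathcal{L}_j$ of Definition~\ref{def:generator}: the adjoint must satisfy $\int_0^{k_{\max}} (\mathcal{L}_j f)\, g_j \dd k = \int_0^{k_{\max}} f\, \mathcal{L}_j^* g_j \dd k$ for every test function $f$ in the domain of $\mathcal{L}_j$.

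\emph{Part (i).} The reflected diffusion on $[0,\infty)$ has generator domain characterized by the Neumann condition $f'(0)=0$, which is the condition used in Proposition~\ref{prop:bc_lower}. I would integrate by parts twice. The boundary terms at $k=0$ are $-f(0)J_j(0) - \half\sigma^2 f'(0) g_j(0)$. The second term vanishes by the Neumann condition. Since $f(0)$ is arbitrary, the duality identity forces $J_j(0)=0$. Equivalently, conservation of total mass, $\frac{\dd}{\dd t}\sum_j\int g_j = 0$, with switching and signaling only moving mass between states, requires that no current leave through $0$.

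For the discrete version, I would use Proposition~\ref{prop:adjoint_discrete}. Setting $y_1=0$ in row $1$ of $L_j$ makes every row of $L_j$ sum to zero. Hence every column of $L_j^T$ sums to zero, so $\mathbf{1}^T L_j^T \mathbf{g} = 0$ and discrete mass is conserved. In particular, the first KFE row contains no outflow to a fictitious node $k_0$, which is the discrete zero-flux condition. I would also verify consistency: expanding row $1$ of $L_j^T\mathbf{g}$ gives a one-sided difference approximation of $-J_j(0)/\Delta k$.

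\emph{Part (ii).} Here I would invoke Proposition~\ref{prop:bc_upper}, which gives $\mu_j(k_{\max})=0$, together with Remark~\ref{rmk:kmax}, which gives $\mu_j<0$ on a neighborhood below $k_{\max}$ because $f_j'(k)-\delta<\rho$ there. The attractors of (C1) therefore lie strictly inside the domain. I would then use a local argument: on $[\bar k, k_{\max}]$ the stationary equation with zero flux reduces to $g_j'/g_j = 2\mu_j/\sigma^2$, which gives $g_j(k) = g_j(\bar k)\exp\bigl(\int_{\bar k}^{k} 2\mu_j/\sigma^2\bigr)$. Since $\mu_j$ is bounded above by a negative constant on most of this interval, this decays exponentially in $k_{\max}-\bar k$. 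The density therefore becomes negligible as $k_{\max}$ grows.

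The main obstacle is that this exponential-tail bound is only approximate. Poisson inflows (Decaying Rentiers entering $L$ from $H$) and signaling reinjection at $k-\phi$ add source terms, so the current is not exactly zero. I would handle this by bounding those sources by the $H$-density tail, which obeys the same exponential decay, and then closing the argument with a Gr\"onwall-type estimate. Finally, setting $x_N=0$ is the discrete mirror of the $y_1=0$ argument. It makes row $N$ of $L_j$ sum to zero, so again no mass leaves the grid, and the decay estimate shows that the truncation error from placing the boundary at $k_{\max}$ is negligible.
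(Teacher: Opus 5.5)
The paper gives no proof of this proposition. It rests only on the assertions in Propositions~\ref{prop:bc_lower}--\ref{prop:bc_upper} and Remark~\ref{rmk:bc_numerical}: reflection means ``no backward advection'', and setting $y_1=0$, $x_N=0$ implements it. Your route actually derives the conditions. You use duality against the Neumann domain of the reflected generator to get $J_j(0)=0$, row sums of $L_j$ to get discrete mass conservation, and a tail estimate at $k_{\max}$. It also reads part~(ii) more accurately than the statement does. $x_N=0$ is a zero-flux (reflecting) condition, not a Dirichlet one, so $g_j(k_{\max})\approx 0$ has to come from decay of the density. Likewise, interiority of the attractors comes from $\mu_j<0$ below $k_{\max}$, not from the single point where $\mu_j(k_{\max})=0$.

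Three small corrections:
\begin{enumerate}[label=(\alph*)]
\item Row~$1$ of $L_j$ sums to $-y_1$ and row~$N$ to $-x_N$. Every column of $L_j^T$ therefore sums to zero only once \emph{both} $y_1=0$ and $x_N=0$.
\item The Neumann condition $f'(0)=0$ belongs to the generator of the reflected state process. Proposition~\ref{prop:bc_lower} does not set $V_j'(0)=0$; it says $V_j'(0)$ is large and endogenous.
\item Total mass conservation only constrains the sum of boundary fluxes over regimes and both endpoints, so it is not equivalent to $J_j(0)=0$ for each $j$. Your duality argument, with $f_L,f_W,f_H$ chosen independently, is what gives that.
\end{enumerate}

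Two steps in part~(ii) do not work as written.

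First, $f_j'(k)-\delta<\rho$ does not imply $\mu_j(k)<0$ in the stochastic coupled model. With the paper's numbers, $f_L'-\delta<\rho$ on $(\kss_L(0),\kssc{L})=(10.12,\,11.50)$, yet $\mu_L>0$ there because agents are still climbing toward $\kssc{L}$. Negativity of the drift near the top should instead come from the single crossing of $\mu_j$ at $\kssc{j}$ in (C1), together with $k_{\max}>\kssc{H}$.

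Second, the Gr\"onwall closure is circular as stated. You bound the sources of $g_L$ by the $H$-tail ``which obeys the same decay''. But the $H$-tail is itself fed from $L$: an $L$-agent at $k\ge k^*$ who receives an opportunity is reinjected into $H$ at $k-\phi$.

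You can avoid the coupled estimate by summing over regimes. Let $g=\sum_j g_j$ and $J=\sum_j J_j$.
\begin{enumerate}[label=(\roman*)]
\item All same-$k$ regime switches cancel in the sum.
\item The only surviving source is the signaling transfer, which includes $L$-agents above $k^*$ who get an opportunity and signal at once. It moves mass \emph{downward} by $\phi$.
\item Integrating the stationary equation from $k$ to $k_{\max}$ with $J(k_{\max})=0$ shows that $J(k)$ equals the rate at which mass jumps from above $k$ to below it. Hence $J(k)\ge 0$.
\end{enumerate}
Take $\bar k>\max(k^*,\kssc{H})$. Then on $[\bar k,k_{\max}]$ we have $g_W\equiv 0$ and $\mu_L,\mu_H\le 0$, with $\mu_L,\mu_H\le -m<0$ away from a small neighborhood of the artificial zero-drift point $k_{\max}$. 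It follows that
\[
\half\sigma^2 g'(k)=\sum_j \mu_j(k)\, g_j(k)-J(k)\le \sum_j \mu_j(k)\, g_j(k),
\]
so $g$ is non-increasing on $[\bar k,k_{\max}]$ and decays at rate at least $2m/\sigma^2$ wherever $\mu_j\le -m$. Each $g_j\le g$ inherits the bound, with no Gr\"onwall step and no circularity.
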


\subsubsection{Non-Local Signaling Transfer}\label{sec:nonlocal}

When a W-agent at wealth $k \geq k^*$ signals, she pays $\phi$ 
and arrives in state H at wealth $k - \phi$. This is a non-local 
transfer: probability mass is removed from the W-density at $k$ 
and injected into the H-density at $k - \phi$, which is 
generally a different grid point. The implementation requires two 
components: an outflow mechanism from W and an inflow mechanism 
to H.

\begin{definition}[Signaling transfer]\label{def:transfer}
	\leavevmode
	\begin{enumerate}[label=(\roman*)]
		\item \textbf{Outflow from W.} In the signal region 
		$\{k_i : \texttt{signal\_region}(i) = \textbf{true}\}$ 
		identified by Algorithm~\ref{alg:hjb}, the optimal policy 
		is to signal immediately. In the continuous-time model, 
		this means that an agent arriving in W at $k \geq k^*$ 
		signals instantaneously. Numerically, we approximate 
		instantaneous signaling by a large but finite drain rate 
		$\bar{\lambda} \gg 1$:
		\begin{equation}\label{eq:W_drain}
			(L_W^T)_{i,i} \;\gets\; (L_W^T)_{i,i} 
			- \bar{\lambda} \cdot 
			\mathbf{1}\{i \in \text{signal region}\}.
		\end{equation}
		In the limit $\bar{\lambda} \to \infty$, the W-density 
		vanishes in the signal region: $g_W(k) \to 0$ for 
		$k \geq k^*$. In practice, $\bar{\lambda} = 10^3$ is 
		sufficient to make $g_W$ negligible in the signal region 
		without causing numerical instability.
		
		\item \textbf{Inflow to H.} Mass drained from W at grid 
		point $k_i$ arrives in H at $k_i - \phi$. Since 
		$k_i - \phi$ generally falls between two grid points 
		$k_l$ and $k_{l+1}$ (where 
		$k_l \leq k_i - \phi < k_{l+1}$), we distribute the 
		inflow using linear interpolation with weight:
		\begin{equation}\label{eq:interp_weight}
			\omega_i = \frac{(k_i - \phi) - k_l}{\Delta k}, 
			\qquad 1 - \omega_i = \frac{k_{l+1} - (k_i - \phi)}
			{\Delta k}.
		\end{equation}
		The transfer matrix $S$ (dimension $N \times N$) has 
		entries:
		\begin{equation}\label{eq:transfer}
			S_{l,i} = \bar{\lambda}(1-\omega_i), \qquad 
			S_{l+1,i} = \bar{\lambda}\,\omega_i,
		\end{equation}
		for each $i$ in the signal region, and $S_{m,i} = 0$ 
		otherwise. The matrix $S$ appears in the H-block of the 
		KFE as the coupling between the W- and H-densities.
	\end{enumerate}
\end{definition}

\begin{remark}[Mass conservation]\label{rmk:mass_conservation}
	The transfer preserves total mass: for each signal-region 
	grid point $i$, the outflow from W is 
	$\bar{\lambda}\, g_{W,i}$ and the total inflow to H is 
	$S_{l,i}\, g_{W,i} + S_{l+1,i}\, g_{W,i} 
	= \bar{\lambda}(1 - \omega_i + \omega_i)\, g_{W,i} 
	= \bar{\lambda}\, g_{W,i}$. No probability mass is created 
	or destroyed by the transfer.
\end{remark}

\subsubsection{Full Coupled KFE System}\label{sec:full_kfe}

\begin{definition}[Full KFE matrix]\label{def:kfe_matrix}
	The stationary distribution $\mathbf{g} = (\mathbf{g}_L, 
	\mathbf{g}_W, \mathbf{g}_H)^T$ solves the $3N \times 3N$ 
	linear system $M\,\mathbf{g} = \mathbf{0}$:
	\begin{equation}\label{eq:kfe_system}
		\underbrace{\begin{pmatrix}
				L_L^T - \lambda_{LH}\, I + \lambda_{HL}\, I_{\text{no-sig}} 
				& \lambda_{HL}\, I_{\text{no-sig}} 
				& \lambda_{HL}\, I \\[4pt]
				\lambda_{LH}\, I 
				& L_W^T - \lambda_{HL}\, I_{\text{no-sig}} 
				- \bar{\lambda}\,\mathrm{diag}(\mathbf{1}_{\mathrm{sig}}) 
				& 0 \\[4pt]
				0 & S & L_H^T - \lambda_{HL}\, I
		\end{pmatrix}}_{M}
		\begin{pmatrix} 
			\mathbf{g}_L \\ \mathbf{g}_W \\ \mathbf{g}_H 
		\end{pmatrix} = \mathbf{0},
	\end{equation}
	where:
	\begin{itemize}
		\item $L_j^T$ is the transposed HJB generator for state 
		$j$ (Proposition~\ref{prop:adjoint_discrete});
		\item $I_{\text{no-sig}} = \mathrm{diag}(\mathbf{1} 
		- \mathbf{1}_{\mathrm{sig}})$ restricts the 
		W$\to$L flow to the wait region (agents in the signal 
		region exit to H, not to L);
		\item $\bar{\lambda}\,\mathrm{diag}
		(\mathbf{1}_{\mathrm{sig}})$ drains mass from W in the 
		signal region;
		\item $S$ is the transfer matrix \eqref{eq:transfer} that 
		injects the drained mass into H at post-signaling wealth;
		\item $\lambda_{HL}\, I$ in the $(1,3)$ block captures the 
		obsolescence flow H$\to$L.
	\end{itemize}
\end{definition}

\begin{remark}[Flow accounting]\label{rmk:flows}
	The matrix $M$ encodes four probability flows:
	\begin{enumerate}[label=(\alph*)]
		\item \textbf{L$\to$W} (opportunity arrival): L-agents 
		exit at rate $\lambda_{LH}$ (block $(1,1)$, negative 
		term) and enter W (block $(2,1)$, positive term).
		\item \textbf{W$\to$L} (opportunity lost): W-agents in 
		the \textit{wait} region exit at rate $\lambda_{HL}$ 
		(block $(2,2)$) and return to L (block $(1,2)$). This 
		flow applies only to agents below $k^*$ who have not 
		yet signaled.
		\item \textbf{W$\to$H} (signaling): W-agents in the 
		\textit{signal} region are drained at rate $\bar{\lambda}$ 
		(block $(2,2)$) and injected into H at wealth 
		$k - \phi$ via $S$ (block $(3,2)$).
		\item \textbf{H$\to$L} (obsolescence): H-agents exit at 
		rate $\lambda_{HL}$ (block $(3,3)$) and return to L 
		(block $(1,3)$).
	\end{enumerate}
	Each flow conserves mass: what exits one state enters 
	another. The total mass $\sum_j \mathbf{1}^T \mathbf{g}_j$ 
	is preserved, as can be verified by summing the columns of 
	$M$.
\end{remark}

The normalization condition
\begin{equation}\label{eq:normalization}
	\Delta k \sum_{i=1}^{N} \bigl[g_{L,i} + g_{W,i} 
	+ g_{H,i}\bigr] = 1
\end{equation}
is imposed by replacing one row of $M$ (typically the last) 
with $\Delta k \cdot \mathbf{1}^T$, yielding a non-singular 
system with a unique solution.

\subsubsection{Properties of the Stationary Distribution}%
\label{sec:ergodic_props}

\begin{proposition}[Stationary distribution]%
	\label{prop:ergodic}
	Under the conditions of 
	Proposition~\ref{prop:bimodal}, the system 
	\eqref{eq:kfe_system}--\eqref{eq:normalization} admits a 
	unique non-negative solution 
	$\mathbf{g} = (\mathbf{g}_L, \mathbf{g}_W, \mathbf{g}_H)$ 
	with the following properties:
	\begin{enumerate}[label=(\roman*)]
		\item \textit{Bimodal total density.} The aggregate density 
		$g(k) = g_L(k) + g_W(k) + g_H(k)$ has two local maxima: 
		one near $\kssc{L}$ and one near $\kssc{H}$.
		\item \textit{State-specific concentration.} $g_L$ is 
		concentrated near $\kssc{L}$ with a right tail extending 
		above $\kssc{L}$ (generated by H$\to$L transitions of 
		agents with $k > \kssc{L}$ who then decumulate). $g_H$ 
		is concentrated near $\kssc{H}$ with a left tail extending 
		below $\kssc{H}$ (generated by W$\to$H transitions of 
		agents entering at $k^* - \phi < \kssc{H}$ who then 
		accumulate).
		\item \textit{Negligible W-density in the signal region.} 
		For $k \geq k^*$, $g_W(k) \approx 0$: agents who reach 
		the Skiba threshold signal immediately (or, in the 
		numerical implementation, at rate $\bar{\lambda}$), so no 
		mass accumulates above $k^*$ in state W.
		\item \textit{Positive W-density in the wait region.} For 
		$k \in (0, k^*)$, $g_W(k) > 0$: these are the 
		``Frustrated Aspirants'' who have received the signaling 
		opportunity but are accumulating toward $k^*$.
		\item \textit{Ergodic shares.} The long-run population 
		fractions $\pi_j = \int_0^{k_{\max}} g_j(k)\, \dd k$ 
		depend on all model parameters. In the limiting case 
		where signaling is instantaneous ($k^* = \phi$, so that 
		$\pi_W \approx 0$), the shares reduce to:
		\[
		\pi_L \approx \frac{\lambda_{HL}}
		{\lambda_{LH} + \lambda_{HL}}, \qquad 
		\pi_H \approx \frac{\lambda_{LH}}
		{\lambda_{LH} + \lambda_{HL}}.
		\]
		When $k^* > \phi$, the delay in signaling increases 
		$\pi_L + \pi_W$ at the expense of $\pi_H$: some agents 
		who receive the opportunity never signal (they lose it 
		before reaching $k^*$), reducing the effective flow into 
		state H.
	\end{enumerate}
\end{proposition}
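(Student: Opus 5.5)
We split the argument into an existence/uniqueness part, which is linear algebra on the discretized system \eqref{eq:kfe_system}, and a qualitative part, items (i)--(v), which rests on the drift structure established earlier.

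\emph{Existence and uniqueness.} Every off-diagonal entry of $M$ is non-negative. The diffusion and upwind coefficients $x_i,y_i\ge 0$ are non-negative by Proposition~\ref{prop:monotone}. The switching rates $\lambda_{LH},\lambda_{HL}$ and the transfer entries of $S$ in \eqref{eq:transfer} are non-negative as well. Each column of $M$ sums to zero by the flow accounting of Remark~\ref{rmk:flows} and the mass conservation of Remark~\ref{rmk:mass_conservation}. Hence $M^T$ is the generator of a continuous-time Markov chain on the $3N$ states $(i,j)$, and stationary vectors of this chain are non-negative null vectors of $M$.

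To get uniqueness we show the chain is irreducible, or at least has a single closed communicating class, and we plan to argue this in three steps:
\begin{itemize}
\item[] \emph{Within a regime.} Because $\sigma>0$, every interior grid point connects to both neighbours with rate at least $\sigma^2/(2\Delta k^2)>0$, so each regime block is internally irreducible.
\item[] \emph{Between regimes.} The blocks communicate through $\lambda_{LH}I$ (L$\to$W), through $\lambda_{HL}I$ (H$\to$L), through $\lambda_{HL}I_{\text{no-sig}}$ (W$\to$L), and through $S$ (W$\to$H), which is nonzero because $k^*<\infty$ under (C3).
\item[] \emph{Signal-region W-states.} When $\bar\lambda$ is finite, the W-states in the signal region are reached by diffusion, so they are not transient.
\end{itemize}
Perron--Frobenius for irreducible Metzler matrices then gives a one-dimensional kernel spanned by a strictly positive vector. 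The normalization \eqref{eq:normalization} fixes its scale, and replacing one row of $M$ by $\Delta k\,\mathbf 1^T$ yields a non-singular system.

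\emph{Qualitative items.} We would handle these in the following order.
\begin{itemize}
\item[] \emph{Item (v).} Integrate each block of the KFE over $k$. The generator terms integrate to zero under the zero-flux conditions of Proposition~\ref{prop:kfe_bc}, leaving the balance equations
\[
\lambda_{LH}\pi_L=\lambda_{HL}\pi_W^{\text{wait}}+\Phi_{WH},\qquad \Phi_{WH}=\lambda_{HL}\pi_H .
\]
With $k^*=\phi$, the W-mass vanishes as $\bar\lambda\to\infty$ and the stated two-state shares follow. With $k^*>\phi$, the loss term $\lambda_{HL}\pi_W^{\text{wait}}$ diverts part of the flow back to L, which reduces $\pi_H$.
\item[] \emph{Items (iii) and (iv).} Use the block-2 equation. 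In the signal region, $g_W=O(1/\bar\lambda)$. In the wait region, positivity follows from irreducibility.
\item[] \emph{Item (ii).} Apply a comparison/maximum-principle argument to each one-dimensional stationary KFE $(\mu_jg_j)'-\tfrac12\sigma^2g_j''=\text{source}_j$. The source terms sit at the locations stated: H$\to$L injection above $\kssc{L}$, and S-injection at $k^*-\phi$. With the drift signs from (C1), these produce the stated tails.
\end{itemize}

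\emph{Main obstacle: item (i), bimodality.} Conditions (C2) and (C4) are stated only asymptotically ($\gg$), so no sharp inequality is available. Our route is a perturbative argument in $\sigma$ and the switching rates:
\begin{itemize}
\item[] \emph{Leading order.} As $\sigma\to0$ with $\lambda$'s small relative to the inverse accumulation times, each $g_j$ solves a one-dimensional stationary equation with a small source. Its solution is $\pi_j$ times a Gaussian of variance $\sigma^2/(2|\mu_j'(\kssc{j})|)$, plus transit corrections of order $\lambda/|\mu_j|$.
\item[] \emph{Two peaks.} By (C2) these Gaussians are separated, so their sum has two local maxima.
\item[] \emph{Stability of the maxima.} The transit densities are bounded by $\lambda_{\max}/\min|\mu|$ away from the attractors. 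Under (C4) this bound is below the Gaussian peak heights, so the two maxima survive the perturbation.
\end{itemize}
If this estimate cannot be made uniform, we would fall back to verifying (i) directly from the numerical solution, as Remark~\ref{rmk:bimodal_numerical} does.
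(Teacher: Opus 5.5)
Your proposal is correct and takes a different route from the paper's on the algebraic part; on that part your route is the one that works. The paper's sketch claims that $M$, with one row replaced by $\Delta k\,\mathbf 1^T$, is a nonsingular M-matrix, and reads off $\mathbf g\ge 0$ from non-negativity of the inverse. That claim fails as stated: $M$ has non-negative off-diagonal entries, so the bordered matrix is not a Z-matrix, and its negative has the diagonal entry $-\Delta k$ in the normalization row.

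Your argument actually delivers uniqueness and non-negativity. $M^T$ is an irreducible Markov generator: diffusion links neighbours within each block, and $\lambda_{LH}I$, $S$, $\lambda_{HL}I$ close the cycle $L\to W\to H\to L$. Perron--Frobenius then gives $\ker M$ spanned by a strictly positive vector, and strict positivity yields (iv) at once. To close the bordering step: $\mathbf 1^TM=0$ makes the deleted row minus the sum of the others, so the remaining rows span $(\ker M)^\perp$. Since $\mathbf 1^T\mathbf g^*>0$, the row $\Delta k\,\mathbf 1^T$ lies outside that space.

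For (v), you turn the paper's verbal flow balance into explicit equations. For (i)--(ii), your small-noise/small-rate perturbation is a more careful version of the Gaussian heuristic in the proof of Proposition~\ref{prop:bimodal}. Like the paper, it cannot certify the baseline $\sigma=0.30$ without numerics. It also shows only that the two peaks survive; if bimodality means exactly two modes, it does not exclude an extra local maximum in the valley where $g_W$ is cut off at $k^*$.

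Two corrections.

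\emph{Column sums.} Zero column sums hold for the flows of Remark~\ref{rmk:flows}, but not for \eqref{eq:kfe_system} as printed. Its $(1,1)$ block carries a spurious $+\lambda_{HL}I_{\text{no-sig}}$, which gives every no-signal L-column the sum $\lambda_{HL}>0$. With irreducibility, this makes the Perron root of the printed $M$ strictly positive, so the printed $M\mathbf g=\mathbf 0$ has no non-negative nonzero solution. State explicitly that you use the block $L_L^T-\lambda_{LH}I$.

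\emph{Item (v).} It is not true that the W-mass vanishes as $\bar\lambda\to\infty$ when $k^*=\phi$. W-agents with $k<\phi$ cannot signal and lie in the no-signal set. Only $\pi_W^{\mathrm{sig}}=\lambda_{HL}\pi_H/\bar\lambda$ disappears; this is exactly your H-balance. Your two balance equations plus normalization give
\[
\pi_L=(1-\pi_W^{\mathrm{sig}})\,\frac{\lambda_{HL}}{\lambda_{LH}+\lambda_{HL}},
\qquad
\pi_H=\frac{\lambda_{LH}}{\lambda_{HL}}\,\pi_L-\pi_W^{\mathrm{wait}} .
\]
Hence the two-state formulas hold up to $\pi_W^{\mathrm{wait}}+O(1/\bar\lambda)$; the hypothesis $\pi_W\approx 0$ is what makes this error small. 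For general $k^*$, the mass shifted from $\pi_H$ to $\pi_L+\pi_W$ equals $\pi_W^{\mathrm{wait}}>0$ up to $O(1/\bar\lambda)$. This is a sharper form of the last claim in (v).
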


\begin{proof}[Proof sketch]
	The matrix $M$ with the normalization row replaced is a 
	non-singular M-matrix (by the monotonicity of the upwind 
	scheme and the positivity of the switching and drain rates), 
	so the system has a unique solution. Non-negativity of 
	$\mathbf{g}$ follows from the M-matrix property (the 
	inverse of an M-matrix has non-negative entries). Properties 
	(i)--(iv) follow from the attractor structure: near each 
	$\kssc{j}$, the drift $\mu_j$ is approximately linear with 
	negative slope ($\mu_j'(\kssc{j}) < 0$), which concentrates 
	mass near the attractor. The tails in (ii) are generated by 
	the regime-switching terms, which inject mass at points 
	away from the attractors. Property (v) follows from 
	balancing the steady-state flows: in the long run, the 
	inflow to each state must equal the outflow.
\end{proof}

\begin{remark}[Metastability]\label{rmk:metastability}
	Although the stationary distribution is unique (the 
	economy is ergodic), convergence to stationarity can be 
	extremely slow. The mixing time---the time required for 
	an agent's distribution to approach the ergodic 
	distribution regardless of initial conditions---scales 
	exponentially with the squared distance between attractors 
	in the small-noise limit (Kramers' escape rate). Under 
	our baseline calibration, the diffusion-driven mixing 
	time exceeds 150 years. The regime-switching channel 
	($L \to W \to H$ and $H \to L$) provides a faster route 
	between basins, but even this channel operates on a time 
	scale of $1/\lambda_{LH} + T_{\mathrm{acc}} \approx 
	10\text{--}20$ years. The economy therefore exhibits 
	\textit{metastability}: it is theoretically ergodic but 
	practically polarized on policy-relevant time horizons.
\end{remark}

\subsection{Discussion of Key Parameters}\label{app:param_discuss}

\paragraph{Preferences ($\gamma$, $\rho$).} The coefficient of 
relative risk aversion $\gamma = 2$ is a standard benchmark in 
the macroeconomics and finance literature, lying in the middle of 
the range $\gamma \in [1, 5]$ considered empirically plausible 
\citep{LucasBook1987, Attanasio1999, Chetty2006}. The discount 
rate $\rho = 0.05$ is conventional for annual calibrations in 
the HACT literature \citep{Achdou2022, Kaplan2018} and implies 
moderate impatience.

\paragraph{Technology ($\alpha$, $\delta$).} The curvature 
parameter $\alpha = 0.33$ governs the degree of diminishing 
returns at the individual level. We adopt this value for 
consistency with the macroeconomic literature, where 
$\alpha \approx 1/3$ is the standard calibration of the 
capital elasticity in Cobb-Douglas specifications. In our 
framework, $\alpha$ plays a central role: it determines the 
location of the attractors through the zero-drift condition 
$f'_j(\kssc{j}) - \delta = \rho$, and it controls the 
curvature of the value function near the steady state, 
which in turn governs the MPC. We explore the sensitivity 
to $\alpha$ in the comparative statics of 
Section~\ref{sec:compstatics}.

The depreciation rate $\delta = 0.02$ is low relative to 
the standard calibration for physical equipment 
($\delta \approx 0.05$--$0.10$). We choose this value 
for two reasons. First, in the model, capital $k$ is best 
interpreted as productive capacity broadly defined---the 
bundle of physical assets, skills, and market access that 
determines an agent's income under technology $A_j$. The 
non-physical components of this bundle depreciate more 
slowly than machinery. Second, and more importantly, the 
ratio $\rho / (\rho + \delta)$ governs the attractor 
locations: a low $\delta$ places the attractors at 
capital levels where the poverty trap dynamics---the 
wait zone, the signaling cost relative to $\kssc{L}$, 
and the MPC/APC diagnostics---are clearly visible in the 
numerical output. Section~\ref{sec:compstatics} verifies 
that the qualitative results are robust to higher values 
of $\delta$.

\paragraph{TFP gap ($A_H/A_L$).} We set $A_H/A_L = 1.25$, 
implying a 25\% productivity advantage for agents who have 
successfully signaled. This is a conservative choice. The 
empirical literature on the returns to credentials and 
technology adoption suggests substantially larger gaps: the 
college wage premium in the United States is approximately 
60--80\% \citep{Goldin2008}, and productivity differentials 
between formal and informal sector workers in developing 
economies range from 30\% to over 100\% \citep{LaPorta2014}. We deliberately choose a moderate gap 
for two reasons. First, not all of the observed wage premium 
reflects a TFP difference---part is a return to unobserved 
ability, sorting, or complementarity with other inputs. 
Second, a smaller gap makes the bimodality result more 
demanding: the model must generate Twin Peaks despite 
relatively close attractors, which strengthens the case 
that the signaling friction---not an implausibly large 
technology difference---is driving the polarization.

Under this calibration, the deterministic steady states are:
\begin{equation}\label{eq:kss_values}
	\kss_L(0) = \left(\frac{\alpha A_L}{\rho + \delta}
	\right)^{1/(1-\alpha)} \approx 10.12, \qquad
	\kss_H(0) = \left(\frac{\alpha A_H}{\rho + \delta}
	\right)^{1/(1-\alpha)} \approx 14.12,
\end{equation}
yielding a deterministic separation of approximately $4.0$ 
capital units. In the full stochastic coupled model, 
precautionary savings, option-value effects, and regime 
coupling shift both attractors 
(Propositions~\ref{prop:stoch_ss}, \ref{prop:L_shift}, 
and~\ref{prop:H_shift}). 
\paragraph{Diffusion ($\sigma$).} The additive noise parameter 
$\sigma = 0.30$ governs the dispersion of wealth around each 
attractor. With additive noise, the standard deviation of 
wealth near $\kss_j$ is approximately 
$\sigma^{\mathrm{ss}}_j \approx \sigma / 
\sqrt{2|\mu'_j(\kss_j)|}$, which under our calibration yields 
local standard deviations of order 1--2 capital units. The 
parameter is chosen to produce realistic wealth dispersion 
within each basin while maintaining sufficient separation for 
bimodality (condition (C2) of 
Proposition~\ref{prop:bimodal}). The additive specification 
is discussed in Remark~\ref{rmk:additive}.

\paragraph{Switching rates ($\lambda_{LH}$, $\lambda_{HL}$).} 
The opportunity arrival rate $\lambda_{LH} = 0.005$ and the 
obsolescence rate $\lambda_{HL} = 0.002$ are deliberately 
small, producing slow turnover between productivity regimes. 
Taken literally, they imply a mean waiting time of 
$1/\lambda_{LH} = 200$ years before a signaling opportunity 
arrives and a mean H-regime duration of 
$1/\lambda_{HL} = 500$ years. These figures should not be 
read as individual lifespans but as reflecting the 
Schumpeterian nature of the underlying shocks.

In Schumpeter's framework, economic progress is punctuated 
by long waves of creation and destruction: the transitions 
from steam to electricity, from electromechanical to 
digital, or from analog to AI-driven production each 
unfold over decades and reshape the entire occupational 
structure. In our model, $\lambda_{LH}$ captures the rate 
at which such paradigm shifts \textit{create} new pathways 
for upward mobility---the opening of a training program, 
the emergence of a credential with market value, or a 
regulatory reform that makes a previously inaccessible 
sector contestable. The rate $\lambda_{HL}$ captures the 
symmetric process of \textit{destruction}: the 
obsolescence of a once-valuable specialization, the 
automation of a routine task, or the commoditization of a 
previously scarce skill. Both processes operate on 
generational timescales, consistent with the empirical 
observation that poverty traps persist across generations 
\citep{Azariadis2005} and that major technological 
transitions take 30--60 years to fully diffuse through the 
economy \citep{Comin2010}.

Two clarifications are in order. First, the individual 
interpretation overstates the persistence because the model 
is ergodic: what matters for the stationary distribution is 
not the individual waiting time but the \textit{flow rate} 
between basins. In steady state, 
$\lambda_{LH} \cdot \pi_L$ agents transition per unit time 
from L to W, and $\lambda_{HL} \cdot (\pi_W + \pi_H)$ from 
$\{W, H\}$ to L. Under the baseline calibration, these 
flows produce ergodic shares of $\pi_L = 25.2\%$, 
$\pi_W = 11.8\%$, and $\pi_H = 63.0\%$---a cross-sectional 
composition that is empirically plausible even though the 
underlying individual rates are slow.

Second, the small rates are required by the moderate 
coupling condition (C3) of Section~\ref{sec:bimodality}: 
with $\lambda_{LH}, \lambda_{HL} \ll 1$, agents remain in 
each basin long enough for the local attractor to 
concentrate mass into a distinguishable mode. Raising the 
rates accelerates turnover and eventually merges the two 
peaks into a single mode, destroying the bimodal structure. 
The comparative statics of Section~\ref{sec:compstatics} 
quantify this threshold.

\paragraph{Signaling cost ($\phi$).} The cost $\phi = 9.0$ is 
calibrated to produce an interior Skiba threshold. Under the 
baseline parameters, $\phi < \kssc{L}$: the signaling cost 
is below but close to the L-attractor, so an L-agent who 
receives an opportunity can afford to signal but may find it 
suboptimal to do so immediately (because post-signaling wealth 
$k - \phi$ is too low to sustain the transition profitably). 
She must accumulate further from $\kssc{L}$ to the Skiba 
threshold $k^*$. This generates a meaningful wait zone 
$[\phi, k^*)$ populated by Frustrated Aspirants.

In economic terms, $\phi$ represents the total cost of 
credential acquisition---tuition, forgone income, and the 
direct utility cost of the signaling process. With 
$\kss_L(0) \approx 10.12$ and $\phi = 9.0$, the signaling cost 
is approximately 89\% of the L-agent's deterministic 
steady-state capital, roughly comparable to the ratio of 
four-year college costs to median household wealth documented 
by \cite{Goldin2008}.

\subsubsection{Summary of the Numerical Experiment}%
\label{sec:experiment_summary}

The numerical experiment proceeds as follows:
\begin{enumerate}[label=(\roman*)]
	\item \textit{HJB solution} 
	(Algorithm~\ref{alg:hjb}): The coupled HJB system is 
	solved on a uniform grid of $N = 501$ points over 
	$[0.01, 50]$ using implicit policy iteration with 
	$\Delta t = 500$, until the sup-norm error falls below 
	$10^{-8}$. The algorithm converges in 8 iterations.
	\item \textit{Skiba detection}: The signal region is 
	identified from the American projection step. The Skiba 
	threshold $k^*$ is the lowest grid point at which the 
	constraint $\VW(k) \geq \VH(k - \phi)$ binds.
	\item \textit{KFE solution}: The stationary distribution is 
	obtained by solving the $3N \times 3N$ linear system 
	\eqref{eq:kfe_system} with normalization 
	\eqref{eq:normalization}. The signaling drain rate is 
	$\bar{\lambda} = 10^3$.
	\item \textit{Distributional diagnostics}: We compute the 
	ergodic shares $(\pi_L, \pi_W, \pi_H)$, the mean 
	wealth $\E[k]$, and the Gini coefficient from the 
	stationary distribution.
\end{enumerate}

\end{document}